\def\EQ#1{\begin{eqnarray}#1\end{eqnarray}}
\newcommand*{\cA}{\mathcal{A}}
\newcommand*{\cD}{\mathcal{D}}
\newcommand*{\cF}{\mathcal{F}}
\newcommand*{\cG}{\mathcal{G}}
\newcommand*{\cP}{\mathcal{P}}
\newcommand*{\Z}{\mathbb{Z}}
\newcommand*{\R}{\mathbb{R}}
\newcommand{\LWE}{\textsc{LWE}}
\newcommand{\GapSVP}{\textsc{GapSVP}}
\newcommand{\SIVP}{\textsc{SIVP}}
\newcommand\tab[1][1cm]{\hspace*{#1}}
\newcommand*{\eps}{\varepsilon}
\newtheorem{theorem}{Theorem}[section]
\newtheorem{corollary}[theorem]{Corollary}
\newtheorem{lemma}[theorem]{Lemma}
\newtheorem{definition}{Definition}[section]
\newtheorem{remark}{Remark}[section]
\newtheorem*{remark*}{Remark}
\numberwithin{algorithm}{section}
\title{On the possibility of classical client blind quantum computing} 
\author[1]{Alexandru Cojocaru}
\author[2,3]{L\'{e}o Colisson}
\author[1,2]{Elham Kashefi}
\author[1]{Petros Wallden}
\affil[1]{School of Informatics, University of Edinburgh,}
\affil[ ]{10 Crichton Street, Edinburgh EH8 9AB, UK}
\affil[2]{D\'{e}partement Informatique et R\'{e}seaux, LIP6, Sorbonne Universit\'{e},}
\affil[ ]{4 Place Jussieu 75252 Paris CEDEX 05, France}
\affil[3]{Ecole Normale Sup\'{e}rieure Paris-Saclay,}
\affil[ ]{61 Avenue du Pr\'{e}sident Wilson, 94230 Cachan, France}
\date{Last update: \today\footnote{Original article: ``Delegated Pseudo-Secret Random Qubit Generator'', February 27, 2018}}
\begin{document}

\maketitle

\begin{abstract}
We define the functionality of delegated pseudo-secret random qubit generator (PSRQG), where a classical client can instruct the preparation of a sequence of random qubits at some distant party. Their classical description is (computationally) unknown to any other party (including the distant party preparing them) but known to the client. We emphasize the unique feature that no quantum communication is required to implement PSRQG. This enables classical clients to perform a class of quantum communication protocols with only a public classical channel with a quantum server. A key such example is the delegated universal blind quantum computing. Using our functionality one could achieve 
a purely classical-client computational secure verifiable delegated universal quantum computing (also referred to as verifiable blind quantum computation). We give a concrete protocol (QFactory) implementing PSRQG, using the Learning-With-Errors problem to construct a trapdoor one-way function with certain desired properties (quantum-safe, two-regular, collision-resistant). We then prove the security in the Quantum-Honest-But-Curious setting and briefly discuss the extension to the malicious case.
\end{abstract}

\section{Introduction and Related Works}

The recent interest in quantum technologies has brought forward a vision of \emph{quantum internet} \cite{elkouss2017} that could implement a collection of known protocols for enhanced security or communication complexity (see a recent review in \cite{BC2016}). On the other hand the rapid development of quantum hardware has increased the computational capacity of quantum servers that could be linked in such a communicating network. This raised the necessity/importance of privacy preserving functionalities such as the research developed around quantum computing on encrypted data (see a recent review in \cite{fitzsimons2017}).

However, there exist some challenges in adapting widely the above vision: A reliable long-distance quantum communication network connecting all the interested parties might be very costly. Moreover, currently, some of the most promising quantum computation devices (e.g. superconducting such as the devices developed by IBM, Google, etc) do not yet offer the possibility of ``networked'' architecture, i.e. cannot receive and send quantum states.

For this reason, there has been extensive research focusing on the practicality aspect of quantum delegated computation protocols (and related functionalities). One direction is to reduce the required communications by exploiting classical fully-homomorphic-encryption schemes \cite{broadbent2015quantum,dulek2016quantum,alagic2017quantum}, or by defining their direct quantum analogues \cite{liang2015quantum,ouyang2015quantum,tan2016quantum,lai2017statistically}. Different encodings, on the client side, could also reduce the communication \cite{mantri2013optimal,GMMR2013}. However, in all these approaches the client still requires some quantum capabilities. While no-go results indicate restrictions on which of the above properties are jointly achievable for classical clients  \cite{armknecht2014general,yu2014limitations,ACGK2017,newman2017limitations}, completing this picture remains an open problem. Another direction is to consider fully-classical client protocols, compatible with the no-go results, that can therefore achieve more restricted levels of security. The first such procedure achieving statistical security (but not for universal computations) was proposed in \cite{mantri2017flow}.  Focusing on post-quantum computational security a universal blind delegated protocol was proposed in \cite{mahadev2017} and a verifiable one in \cite{mahadev2018}. 

Our own independent work presented here, is also based on post-quantum computational security, appeared (in preprint \cite{CCKW18}) in between the above mentioned two works, taking a different approach, more natural to measurement-based quantum computing protocols. The approach we take is modular. We replace the need for (a particular) quantum communication channel with a computationally (but post-quantum) secure generation of secret and random qubits. This can be used by classical clients to achieve blind quantum computing and a number of other applications.

\subsection{Our Contributions}

\begin{enumerate}
\item We define a classical client/quantum server delegated ideal functionality of pseudo-secret random qubit generator (PSRQG), in \autoref{Sec:Ideal}. PSRQG can replace the need for quantum channel between parties in certain quantum communication protocols with trade-off that the protocols become computationally secure (against \emph{quantum} adversaries).
\item We give a basic protocol (QFactory) that achieves this functionality, given a trapdoor one-way function that is quantum-safe, two-regular and collision resistant resistant in \autoref{Sec:Protocol} and prove its correctness.
\item We prove the security of the QFactory against Quantum-Honest-But-Curious server or against any malicious third party by proving that the classical description of the generated qubits is a hard-core function (following a reduction similar that of the Goldreich-Levin Theorem) in \autoref{Sec:Privacy}.
\item While our previous results do not depend on the specific function used, the existence of such specific functions (with all desired properties) makes the PSRQG a practical primitive that can be employed as described in this paper. In \autoref{Sec:Functions}, we first give methods for obtaining two-regular trapdoor one-way functions with extra properties (collision resistant or second preimage resistant) assuming the existence of simpler trapdoor one-way functions (permutation trapdoor or homomorphic trapdoor functions). 
We use reductions to prove that the resulting functions maintain all the properties required. Furthermore, we give in \autoref{Subsec:actual_trapdoor} an explicit family of functions that respect all the required properties based on the security of the Learning-With-Errors problem as well as a possible instantiation of the parameters. Thus, this function is also quantum-safe, and thus directly applicable for our setting. Note, that other functions may also be used, such as the one in \cite{mah2018} or functions based on the Niederreither cryptosystem and the construction in \cite{freeman2010}.
  
\end{enumerate}

\subsection{Applications}\label{Sec:applications} 

The PSRQG functionality, viewed as a resource, has a wide range of applications. Here we give a general overview of the applications, while for details on \emph{how} to use the \emph{exact output} of the PSRQG obtained in this paper in specific protocols we refer the reader to \autoref{app:applications}. PSRQG enables fully-classical parties to participate in many quantum protocols using only public classical channels and a single (potentially malicious) quantum server.

\textbf{The first type of applications} concerns a large class of delegated quantum computation protocols, including blind quantum computation and \emph{verifiable} blind quantum computation. These protocols are of great importance, enabling information-theoretically secure (and verifiable) access to a quantum cloud. However, the requirement for quantum communication limits their domain of applicability. This limitation is removed by replacing the off-line preparation stage with our QFactory protocol. Concretely, we can use QFactory to implement the blind quantum computation protocol of \cite{bfk}, as well as the \emph{verifiable} blind quantum computation protocols (e.g. those in \cite{fk,Broadbent2015,FKD2017}), in order to achieve classical-client secure and verifiable access to a quantum cloud.

In all these cases, the cost of using PSRQG is that the security becomes post-quantum computational (from information-theoretic). However, the possibility of information-theoretically secure classical client blind quantum computation seems highly unlikely due to strong complexity-theoretic arguments given in \cite{ACGK2017} and therefore this is the best we could hope for.

\textbf{The second type of applications} involves the family of protocols for which their quantum communication consists of random single qubits as the ones provided by QFactory, such as: quantum-key-distribution \cite{BB84}, 
quantum money \cite{BOV2018}, 
quantum coin-flipping \cite{PCDK2011}, 
quantum signatures \cite{WDKA2015}, 
two-party quantum computation \cite{KW2017,KMW2017}, 
multiparty quantum computation \cite{KP16}, 
etc.

Finally, we note that in order to use PSRQG as a subroutine in a larger protocol, we need to address the issue of composition and formulate the functionality in the universal composability framework \cite{unruh2010uni}.  This could be done as in \cite{DK2016} (where \emph{quantum} communication was required, using a quantum version of SRQG), but the full details are outside of the scope of this paper. 

\subsection{Overview of the Protocol and Proof}\label{Sec:Overview}

The general idea is that a classical client gives instructions to a quantum server to perform certain actions (quantum computation). Those actions lead to the server having as output a single qubit, which is randomly chosen from within a set of possible states of the form $\ket{0}+e^{ir\pi/4}\ket{1}$, where $r\in\{0,\cdots,7\}$. The randomness of the output qubit is due to the (fundamental) randomness of quantum measurements that are part of the instructions that the client gives. Moreover, the server cannot guess the value of $r$ any better than if he had just received that state directly from the client  (up to negligible probability). This is possible because the instructed quantum computation is generically a computation that is hard to (i) classically simulate and (ii) to reproduce quantumly because it is unlikely (exponentially in the number of measurements) that by running the same instructions the server obtains the exact same measurement outcomes twice. On the other hand, we wish the client to \emph{know} the classical description and thus the value of $r$. To achieve this task, the instructions/quantum computation the client uses are based on a family of trapdoor one-way functions with certain extra properties\footnote{The functions should also be two-regular (each image has exactly two preimages), quantum safe (secure against quantum attackers) and collision resistant (hard to find two inputs with the same image).}. Such functions are hard to invert (e.g. for the server) unless someone (the client in our case) has some extra ``trapdoor'' information $t_k$. This extra information makes the quantum computation easy to classically reproduce for the client, which can recover the value $r$, while it is still hard to classically reproduce for the server. Sending random qubits of the above type, is exactly what is required from the client in most of the protocols and applications given earlier, while with simple modifications our protocol could achieve other similar sets of states. \\ \\ \\
Our QFactory protocol can heuristically be described in the next steps:

\noindent \textbf{Preparation.} The client randomly selects a function $f_k$, from a family of trapdoor one-way, quantum-safe, two-regular and collision resistant functions. The choice of $f_k$ is public (server knows), but the trapdoor information $t_k$ needed to invert the function is known only to the client.

\noindent \textbf{Stage 1: Preimages Superposition.} The client instructs the server 
(i) to apply Hadamard(s) on the control register, (ii) to apply $U_{f_k}$ on the target register i.e. to obtain $\sum_x \ket{x}\otimes\ket{f_k(x)}$ and (iii) to measure the target register in the computational basis, in order to obtain a value $y$. This collapses his state to the state $(\ket{x}+\ket{x'})\otimes\ket{y}$, where $x,x'$ are the unique two preimages of $y$.

\emph{Remarks.} First we note that each image $y$ appears with same probability (therefore, obtaining twice the same $y$ happens with negligible probability). We now consider the first register $\ket{x}+\ket{x'}=\ket{x_1\cdots x_n}+\ket{x'_1\cdots x'_n}$, where the subscripts denote the different bits of the corresponding preimages $x$ and $x'$. We rewrite this: 
\begin{align*}
\big(\otimes_{i\in \bar{G}}\ket{x_{i}}\big)\otimes\Big(\prod_{j\in G}X^{x_{j}}\Big)\big(\ket{0\cdots 0}_{G}+\ket{1\cdots 1}_{G}\big)
\end{align*} 
where $\bar{G}$ is the set of bits positions where $x,x'$ are identical, $G$ is the set of bits positions where the preimages differ, while we have suitably changed the order of writing the qubits. It is now evident that the state at the end of Stage 1 is a tensor
product of isolated $\ket{0}$ and $\ket{1}$ states, and a Greenberger-Horne-Zeilinger (GHZ)
state with random $X$'s applied. The crucial observation is that the connectivity (which qubit belongs to the GHZ and which doesn't) depends on the XOR of the two preimages $x\oplus x'$ and is computationally impossible to determine, with non-negligible advantage, without the trapdoor information $t_k$.

\noindent \textbf{Stage 2: Squeezing.} The client instructs the server to measure each qubit $i$ (except the output) in a random basis $\{\ket{0}\pm e^{i\alpha_i\pi/4}\ket{1}\}$ and return back the measurement outcome $b_i$. The output qubit is of the form $\ket{+_\theta}=\ket{0}+e^{i\theta}\ket{1}$, where (see \cite{CCKW18}):
\begin{equation}
\label{eq:theta0} \theta=\frac{\pi}{4}(-1)^{x_n}\sum_{i=1}^{n-1}(x_i-x_i')(4b_i+\alpha_i)\bmod 8
\end{equation}
Intuitively, measuring qubits that are not connected has no effect to the output, while measuring qubits within the GHZ part, rotates the phase of the output qubit (by a $(-(1)^{x_i}\alpha_i+4b_i)\pi/4$ angle).

\noindent \textbf{Security.} The protocol is secure, if we can prove that the server (or other third parties) cannot guess (obtain noticeable advantage in guessing) the classical description of the state, i.e. the value of $\theta$. We consider a quantum-honest-but-curious server (see formal definition below) which means that he essentially follows the protocol and the security reduces in proving that the server cannot use his classical information to obtain \emph{any} advantage in guessing the classical description of the (honest) quantum output.

The server does not know the two preimages $x,x'$ and needs to guess $\theta$ from the value of the image $y$. A similar (simpler) result that we use is the Goldreich-Levin theorem \cite{GoldreichLevin}, that (informally) states that the inner product of the preimage of a one-way function with a random vector, taken modulo 2, is indistinguishable from a random bit. Our case is similar since Eq. (\ref{eq:theta0}) has the form of an inner product of the XOR of two preimages with a random vector taken modulo 8. We prove that if a computationally bounded server could obtain non-trivial advantage in guessing $\theta$, then he could also break the property of ``second preimage resistance'' which we requested for our function $f_k$.

\noindent \textbf{The function.} Our protocol relies on using functions that have a number of properties (one-way, trapdoor, two-regular, collision resistant (see \autoref{rmk:collision_resistance})), quantum safe). \emph{Any} function satisfying those conditions is suitable for our protocol. While in first thought some of these appear hard to satisfy jointly (e.g. two-regularity and collision resistance), we give two constructions that achieve those properties from simpler functions: one from injective, homomorphic trapdoor one-way function and one from bijective trapdoor one-way function. Both constructions define a new function that has domain extended by one bit, and the value of that bit ``decides'' whether one uses the initial basic function or not.

We then use a (slight) modification of the first construction and the trapdoor one-way function based on the Learning-with-Errors of \cite{MP2012} with suitable choice of parameters, and obtain a function that has all the desired properties. In a nutshell, the idea is to use the construction of \cite{MP2012}, to create an injective function $g(s,e)$ hard to invert without the secret trapdoor, and then to sample from a Gaussian distribution a small error term $e_0 \in Z_q^m$ as well as a (uniform) random $s_0 \in \Z_q^n$. According to \cite{MP2012}, it should be impossible to recover efficiently $s_0$ and $e_0$ from $b_0 := g(s_0,e_0)$. Then, to create the function $f(s,e,c)$, we define $f(s,e,0) = g(s,e)$ and $f(s,e,1) = g(s,e) + b_0$, and we require $e$ to have infinity norm smaller than a parameter $\mu$. Because the function is ``nearly homomorphic'', it appears that $f(s,e,1) = f(s+s_0, e+e_0, 0)$, so this function has intuitively two preimages. However, $e+e_0$ may not be small enough to stay in the input domain, so it may be possible to have only one preimage for some $y$. What we show is that if we sample $e_0$ ``small enough'' (at least as small as $O(\mu/m)$), then the probability to have two preimages is at least constant. Moreover, we prove that this modification does not break the security of $g$, and leads to a function $f$ that is both one-way and collision resistant under the $LWE$ assumption, which reduces to $\SIVP_\gamma$, with $\gamma = \poly[n]$.

\section{Preliminaries}\label{Sec:Prelim}

\subsection{Classical Definitions}

We are considering protocols secure against quantum
adversaries, so we assume that all the properties of our functions hold for a general Quantum Polynomial Time (QPT) adversary, rather than the usual Probabilistic Polynomial Time (PPT) one. We will denote $D$ the domain of the functions, while $D(n)$ is the subset of strings of length $n$.

\begin{definition}[Quantum-Safe (informal)]
A protocol/function is quantum-safe (also known as post-quantum secure), if all its properties remain valid when the adversaries are QPT (instead of PPT).
\end{definition}
\noindent The following definitions are for PPT adversaries, however in this paper we will generally use quantum-safe versions of those definitions and thus security is guaranteed against QPT adversaries.

\begin{definition}[One-way]\label{def:one_way_function}
A family of functions $\{f_k : D \rightarrow R\}_{k \in K}$ is \textbf{one-way} if:
\begin{itemize}
\item There exists a PPT algorithm that can compute $f_k(x)$ for any index function $k$, outcome of the PPT parameter-generation algorithm \text{Gen} and any input $x \in D$;
\item Any PPT algorithm $\cA$ can invert $f_k$ with at most negligible probability over the choice of $k$: \\
  $ \underset{\substack{k \leftarrow Gen(1^n) \\  x \leftarrow D \\ rc \leftarrow \{0, 1\}^{*}}} \Pr [f(\mathcal{A}(k, f_k(x)) = f(x)] \leq \negl$\\
  where $rc$ represents the randomness used by $\mathcal{A}$
\end{itemize}
\end{definition}

\begin{definition}[Second preimage resistant]\label{def:second_preimage_resistant}
  A family of functions $\{f_k : D \rightarrow R\}_{k \in K}$ is \textbf{second preimage resistant} if:
  \begin{itemize}
\item There exists a PPT algorithm that can compute $f_k(x)$ for any index function $k$, outcome of the PPT parameter-generation algorithm \text{Gen} and any input $x \in D$;
  \item For any PPT algorithm $\mathcal{A}$, given an input $x$, it can find a different input $x'$ such that $f_k(x) = f_k(x')$ with at most negligible probability over the choice of $k$: \\
  $ \underset{\substack{k \leftarrow Gen(1^n) \\ x \leftarrow D \\ rc \leftarrow \{0, 1\}^{*}}}
  \Pr [\cA(k, x) = x' \text{such that } x \neq x' \text{ and }
    f_k(x) = f_k(x')] \leq \negl$\\
  where $rc$ is the randomness of $\mathcal{A}$;
  \end{itemize}
\end{definition}

\begin{definition}[Collision resistant]\label{def:collision_resistant}
  A family of functions $\{f_k : D \rightarrow R\}_{k \in K}$ is \textbf{collision resistant} if:
  \begin{itemize}
  \item There exists a PPT algorithm that can compute $f_k(x)$ for any index function $k$, outcome of the PPT parameter-generation algorithm \text{Gen} and any input $x \in D$;
  \item Any PPT algorithm $\cA$ can find two inputs $x \neq x'$ such that $f_k(x) = f_k(x')$ with at most negligible probability over the choice of $k$: \\
  $\underset{
      \substack{k \leftarrow Gen(1^n) \\ rc \leftarrow \{0, 1\}^{*}}}
    \Pr [\cA(k) = (x,x') \text{such that } x \neq x' \text{ and } f_k(x) =
  f_k(x')] \leq \negl$\\
where $rc$ is the randomness of $\cA$ ($rc$ will be omitted from now).
\end{itemize}

\end{definition}

\begin{theorem}\cite{Lindell}\label{thm:resistant}
Any function that is \textit{collision resistant} is also \textit{second preimage resistant}.
\end{theorem}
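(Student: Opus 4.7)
The plan is to prove the contrapositive via a reduction: given any QPT adversary $\mathcal{A}$ that breaks second preimage resistance with non-negligible probability, I will construct a QPT adversary $\mathcal{B}$ that breaks collision resistance with essentially the same probability, contradicting the assumption that $\{f_k\}$ is collision resistant.

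First I would describe $\mathcal{B}$. On input a key $k \leftarrow \text{Gen}(1^n)$, the adversary $\mathcal{B}$ samples $x \leftarrow D$ uniformly (using its internal randomness), runs $x' \leftarrow \mathcal{A}(k, x)$, and outputs the pair $(x, x')$. Since $\mathcal{A}$ is QPT and sampling uniformly from $D$ can be done in polynomial time, $\mathcal{B}$ is also QPT.

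Next I would analyze the success probability. The view of $\mathcal{A}$ inside $\mathcal{B}$ — namely $(k, x)$ with $k \leftarrow \text{Gen}(1^n)$ and $x \leftarrow D$ — is distributed exactly as in the second preimage resistance game of \autoref{def:second_preimage_resistant}. Therefore, with probability equal to $\mathcal{A}$'s success probability, $\mathcal{A}$ returns some $x' \neq x$ with $f_k(x') = f_k(x)$. Whenever this event occurs, $\mathcal{B}$ outputs a valid collision, so $\mathcal{B}$'s advantage in the collision resistance game is at least $\mathcal{A}$'s advantage in the second preimage resistance game. By hypothesis the former is negligible, hence the latter must be too.

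There is no real obstacle here; the only subtlety worth flagging is that the reduction fundamentally uses the fact that the second preimage challenge samples $x$ uniformly from $D$, so the distribution produced by $\mathcal{B}$ matches the game perfectly and no statistical loss is incurred. The argument is essentially the one in \cite{Lindell} and transfers verbatim to QPT adversaries since sampling uniformly and forwarding inputs are both classical operations that any QPT algorithm can perform.
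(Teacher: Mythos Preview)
Your proof is correct and is precisely the standard textbook reduction (indeed, the one in \cite{Lindell} that the paper is citing). The paper itself does not supply a proof of this theorem at all---it simply states the result and defers to \cite{Lindell}---so there is nothing further to compare against.
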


\begin{definition}[k-regular]\label{def:k_regular}
  
A deterministic function $f \colon D \rightarrow R$ is \textbf{k-regular} if $ \, \, \forall y \in \Im(f)$, we have ${|f^{-1}(y)| = k}$.
\end{definition}

\begin{definition}[Trapdoor Function]\label{def:trapdoor_function} 
  A family of functions $\{f_k : D \rightarrow R \}$
   is a \textbf{trapdoor function} if:
\begin{itemize}
\item There exists a PPT algorithm {\tt Gen} which on input $1^n$ outputs $(k, t_k)$, where $k$ represents the index of the function;
\item $\{f_k : D \rightarrow R\}_{k \in K}$ is a family of one-way functions;
\item There exists a PPT algorithm {\tt Inv}, which on input $t_k$ (which is called the trapdoor information) output by {\tt Gen}($1^n$) and $y = f_k(x)$ can invert $y$ (by returning all preimages of $y$\footnote{While in the standard definition of trapdoor functions it suffices for the inversion algorithm {\tt Inv} to return one of the preimages of any output of the function, in our case we require a two-regular tradpdoor function where the inversion procedure returns both preimages for any function output.})
with non-negligible probability over the choice of $(k, t_k)$ and uniform choice of $x$.

\end{itemize}
\end{definition}

\begin{definition}[Hard-core Predicate]\label{def:hardcore_predicate}
  A function $hc \colon D \rightarrow \{0, 1\}$ is a \textbf{hard-core predicate} for a function $f$ if:

  \begin{itemize}
\item There exists a QPT algorithm that for any input $x$ can compute $hc(x)$;
\item Any PPT algorithm $\mathcal{A}$ when given $f(x)$, can compute $hc(x)$ with negligible better than $1/2$ probability: \\
$ \underset{\substack{x \leftarrow D(n) \\ rc \leftarrow \{0, 1\}^{*}}} \Pr [\mathcal{A}(f(x), 1^n) = hc(x)] \leq \frac{1}{2} + \negl$, where $rc$ represents the randomness used by $\mathcal{A}$;
\end{itemize}
\end{definition}

\begin{definition}[Hard-core Function]\label{def:hardcore_function}
A function $h : D \rightarrow E$ is a \textbf{hard-core function} for a function $f$ if:
\begin{itemize}
\item There exists a QPT algorithm that can compute $h(x)$ for any input $x$
\item For any PPT algorithm $\cA$ when given $f(x)$, $\cA$ can distinguish between $h(x)$ and a uniformly distributed element in $E$ with at most negligible probability: \\
  \[ \big|
    \underset{
      \substack{x \leftarrow D(n)}
    }
    {\Pr} [\mathcal{A}(f(x), h(x)) = 1]
    -
    \underset{
      \substack{x \leftarrow D(n) \\
        r \leftarrow E(|h(x)|)}
    }
    {\Pr} [\mathcal{A}(f(x), r) = 1]\big|  \leq \negl\]
\end{itemize}
\end{definition}

The intuition behind this definition is that as far as a QPT adversary is concerned, the hard-core function appears indistinguishable from a randomly chosen element of the same length.

\begin{theorem}[Goldreich-Levin \cite{GLth}]\label{thm:GL}
  From any one-way function $f \colon D \rightarrow R$, we can construct another one-way
  function $g \colon D \times D \rightarrow R \times D $ and a hard-core predicate for $g$. If $f$ is a one-way
  function, then:
  \begin{itemize}
  \item $g(x, r) = (f(x), r)$ is a one-way function, where $|x|=|r|$.
  \item $hc(x, r) = \langle x, r\rangle \bmod 2$ is a hard-core predicate
  for $g$
  \end{itemize}
\end{theorem}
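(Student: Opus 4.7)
The plan has two logically independent components: proving that $g(x,r) = (f(x), r)$ is one-way, and proving that $hc(x,r) = \langle x, r\rangle \bmod 2$ is hard-core for $g$. The first part is a straightforward reduction. If some PPT algorithm $\cB$ inverts $g$ with non-negligible probability, I would build an inverter $\cA$ for $f$ as follows: on input $y$, sample $r \leftarrow \{0,1\}^{|y|}$ uniformly, run $\cB(y,r)$ to obtain some $(x', r')$, and output $x'$. Since the distribution $(f(x), r)$ with uniform $x, r$ matches the challenge distribution for $g$, and any valid preimage under $g$ must have first component a preimage of $y$ under $f$, $\cA$ succeeds with the same probability as $\cB$, contradicting one-wayness of $f$.

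The second part is the main content and the expected obstacle. I would assume for contradiction that a PPT adversary $\cA$ satisfies $\Pr[\cA(f(x), r) = \langle x, r\rangle] \ge 1/2 + \eps$ for some non-negligible $\eps$, and construct an inverter $\cI$ for $f$. A standard averaging argument shows that for at least an $\eps/2$ fraction of inputs $x$ (call these \emph{good}), the residual advantage $\Pr_r[\cA(f(x), r) = \langle x, r\rangle] \ge 1/2 + \eps/2$. It thus suffices to show that from oracle access to any function $B(r)$ that agrees with $\langle x, \cdot\rangle$ on a $1/2 + \eps/2$ fraction of inputs, one can recover $x$ in time $\poly(n, 1/\eps)$; then $\cI$ simply runs this recovery procedure using $B(\cdot) := \cA(f(x), \cdot)$ and checks success by evaluating $f$ on the output.

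For the recovery subroutine, the naive idea of setting $x_j = B(e_j)$ fails because $B$ is only slightly correlated with the true inner product. The key trick I would use is Goldreich--Levin's pairwise-independent sampling: pick $m = \Theta(\log(n/\eps))$ seeds $r_1, \ldots, r_m$ uniformly, guess the $m$ values $b_i := \langle x, r_i\rangle$ by exhaustive enumeration (only $2^m = \poly(n/\eps)$ branches), and for each of the $2^m - 1$ non-empty subsets $S \subseteq \{1,\ldots,m\}$ form $r_S := \bigoplus_{i \in S} r_i$ with implied value $b_S := \bigoplus_{i \in S} b_i$. For each bit $j$, estimate $x_j$ by taking the majority of $B(r_S \oplus e_j) \oplus b_S$ over all $S$. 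Since the $r_S$ are pairwise independent and uniform, Chebyshev's inequality shows that with probability at least $1 - 1/(4n)$ the majority is correct for each $j$, so a union bound yields the full $x$ with constant probability in the correct guess branch. Among the polynomially many candidate outputs across all guesses, at least one equals $x$, which we identify via one evaluation of $f$.

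The hardest part is the pairwise-independence analysis: one must verify that the random variables $B(r_S \oplus e_j) \oplus b_S$, indexed by $S$, are pairwise independent and each individually biased toward $x_j$ by at least $\eps/2$, so that Chebyshev (rather than a Chernoff bound requiring full independence) suffices to drive the failure probability below $1/n$ with only $\poly(n/\eps)$ samples. Everything else is bookkeeping: assembling the success probability across the averaging step, the correct-guess branch, and the per-bit majority, and observing that the total running time remains polynomial.
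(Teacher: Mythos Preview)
Your proof sketch is correct and is the standard textbook proof of the Goldreich--Levin theorem. However, there is nothing to compare it against: the paper does not prove this statement. Theorem~\ref{thm:GL} appears in the Preliminaries section as a cited result from \cite{GLth}, with no accompanying proof; the paper merely invokes it (and adapts its proof technique) later when establishing that the angle $\theta$ is a hard-core function in \autoref{Thm:hardcore}. So your write-up is a valid self-contained proof of a result the paper takes as given.
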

Informally, the Goldreich-Levin theorem is proving that when $f$ is a
one-way function, then $f(x)$ is hiding the xor of a random subset of
bits of $x$ from any PPT adversary\footnote{ The Goldreich-Levin proof is using a reduction from breaking the hard-core predicate $hc(x, r)$ to breaking the one-wayness of $h$. In this paper the functions we consider are one-way against quantum adversaries, and using the same reduction we conclude that $hc(x, r)$ is a hard-core predicate against QPT adversaries.}.

\begin{theorem}[Vazirani-Vazirani XOR-Condition Theorem \cite{Vazirani}]\label{thm:VV}  
  Function $h$ is hard-core function for $f$ if and only if the
  xor of any non-empty subset of $h$'s bits is a hard-core predicate for $f$.
\end{theorem}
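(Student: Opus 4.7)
The plan is to prove each direction of the if-and-only-if by contrapositive.

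For the forward direction ($h$ is a hard-core function for $f$ $\Rightarrow$ every non-empty XOR is a hard-core predicate for $f$), I would give a direct black-box reduction. Suppose, for some non-empty $S \subseteq [m]$ with $m = |h(x)|$, there exists a QPT predictor $P$ satisfying $\Pr_x[P(f(x)) = \bigoplus_{i \in S} h_i(x)] \geq 1/2 + \varepsilon$ for non-negligible $\varepsilon$. I would build a distinguisher $\cD$ as follows: on input $(u,y)$, it computes $b := P(u)$ and outputs $1$ iff $b = \bigoplus_{i \in S} y_i$. When $y = h(x)$, $\cD$ outputs $1$ with probability at least $1/2 + \varepsilon$; when $y$ is uniform, $\bigoplus_{i \in S} y_i$ is a uniformly random bit independent of $u$ (because $S$ is non-empty), so $\cD$ outputs $1$ with probability exactly $1/2$. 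The resulting $\varepsilon$ advantage contradicts the hard-core-function hypothesis.

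For the reverse direction I would use Fourier analysis on the Boolean cube. Suppose a QPT distinguisher $\cD$ achieves advantage $\varepsilon$; rescaling to $\{-1,+1\}$ and fixing the best random tape, WLOG $\cD$ is deterministic and $\{-1,+1\}$-valued. Expanding in the Walsh--Hadamard basis, $\cD(u, v) = \sum_{S} \widehat{\cD}_u(S)\chi_S(v)$ with $\chi_S(v) = (-1)^{\bigoplus_{i \in S} v_i}$, and using $\mathbb{E}_r[\chi_S(r)] = 0$ for $S \neq \emptyset$, the advantage rewrites as
\begin{equation*}
\varepsilon \;=\; \mathbb{E}_x\Big[\textstyle\sum_{S \neq \emptyset} \widehat{\cD}_{f(x)}(S)\,\chi_S(h(x))\Big].
\end{equation*}
Pigeonholing over the $2^m - 1$ non-empty $S$, some $S^*$ satisfies $\bigl|\mathbb{E}_x[\widehat{\cD}_{f(x)}(S^*)\chi_{S^*}(h(x))]\bigr| \geq \varepsilon/(2^m - 1)$. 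A predictor for $\bigoplus_{i \in S^*} h_i(x)$ is then built by empirically estimating $\widehat{\cD}_u(S^*) = \mathbb{E}_v[\cD(u,v)\chi_{S^*}(v)]$ from polynomially many uniform samples of $v$ and outputting the appropriate bit interpretation of its sign; a Chernoff bound yields prediction advantage $\Omega(\varepsilon / 2^m)$. Since $S^*$ is not known a priori, the reducer runs all $2^m - 1$ candidates in parallel and picks the one whose bias is empirically maximal.

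The main obstacle is the exponential $2^m$ loss in the reverse direction, which renders the equivalence meaningful only when $m = O(\log n)$ (or when the initial advantage exceeds $2^{-m}$). This is benign for the paper's intended application, where $h(x) = \theta \in \{0,\ldots,7\}$ has $m = 3$ and the loss is an absolute constant, so the reduction is tight for our purposes; I would explicitly flag this caveat. A secondary concern is that both directions must hold against QPT (not merely PPT) adversaries, but since the Fourier identities, Chernoff estimates, and goodness tests are all classical post-processing on the outputs of the QPT distinguisher/predictor, the reductions lift from the classical case with no additional work.
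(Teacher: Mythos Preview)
The paper does not prove this theorem; it is quoted from \cite{Vazirani} as a known result and invoked as a black box in Step~3 of the proof of \autoref{Thm:hardcore}. There is therefore no ``paper's own proof'' to compare against.

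Your argument is the standard one and is correct. One small imprecision: in the reverse direction you write that the reducer ``picks the one whose bias is empirically maximal,'' but you cannot test a predictor's bias without access to the true labels $\bigoplus_{i\in S} h_i(x)$. The cleaner way to finish is simply to note that the pigeonhole step exhibits a \emph{fixed} $S^*$ (depending only on $\cD$ and on $n$) for which the single-sample predictor $P_{S^*}(u)$ that draws $v$ uniformly and outputs the bit corresponding to $\cD(u,v)\chi_{S^*}(v)$ already has advantage $\geq \varepsilon/(2(2^m-1))$; no empirical selection is needed. When $m$ is an absolute constant (here $m=3$), a further pigeonhole over $n$ pins down one $S^*$ that works for infinitely many $n$, contradicting the hard-core-predicate assumption for that particular XOR. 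Your caveat about the $2^m$ loss and the restriction to $m=O(\log n)$ is exactly right and is all that is needed for the paper's application.
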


\noindent The Learning with Errors problem (\LWE{}) can be described in the following way:
\begin{definition}[\LWE{} problem (informal)]\label{def:lwe}
Given $s$, an $n$ dimensional vector with elements in $\mathbb{Z}_q$, the task is to distinguish between a set of polynomially many noisy random linear combinations of the elements of $s$ and a set of polynomially many random numbers from $\mathbb{Z}_q$. 
\end{definition}
Regev \cite{Regev} and Peikert \cite{Peikert} have given quantum and classical reductions from the average case of \LWE{} to problems such as approximating the length of the shortest vector or the shortest independent vectors problem in the worst case, problems which are conjectured to be hard even for quantum computers.
\begin{theorem}[Reduction \LWE{}, from {{\cite[Therem 1.1]{Regev}}}]
  Let $n$, $q$ be integers and $\alpha \in (0,1)$ be such that $\alpha q > 2\sqrt{n}$. If there exists an efficient algorithm that solves $\LWE{}_{q, \bar{\Psi}_\alpha}$, then there exists an efficient quantum algorithm that approximates the decision version of the shortest vector problem \GapSVP{} and the shortest independent vectors problem \SIVP{} to within $\tilde{O}(n/\alpha)$ in the worst case.
\end{theorem}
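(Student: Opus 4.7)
The plan is to adapt Regev's quantum worst-case-to-average-case reduction. The overall strategy is to use an oracle for $\LWE_{q,\bar{\Psi}_\alpha}$ to build an efficient quantum algorithm that, on input any lattice $L$ of dimension $n$ and any width $r$ above a $\mathrm{poly}(n)$ multiple of the smoothing parameter $\eta_\eps(L)$, produces samples from the discrete Gaussian distribution $D_{L,r}$. Once samples can be produced at widths close to the smoothing parameter, both \GapSVP{} and \SIVP{} follow by purely classical post-processing: narrow Gaussian samples concentrate near short lattice vectors, and collecting polynomially many independent samples supplies a set of approximately shortest independent vectors of the required quality $\tilde{O}(n/\alpha)$.

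The technical heart is an \emph{iterative quantum step} that contracts the Gaussian width by a constant factor per iteration. Given classical samples from $D_{L,r}$, I would first coherently prepare a quantum state supported near $L$ whose amplitudes follow the Gaussian envelope $\sqrt{\rho_r(x)}$. Applying the quantum Fourier transform produces a state whose support lies essentially on the dual lattice $L^*$ with Gaussian envelope of width $1/r$. I would then invoke the $\LWE$ oracle to implement bounded-distance decoding on $L^*$: a sample of $\LWE_{q,\bar{\Psi}_\alpha}$ corresponds, through Regev's gadget construction, to the ability to decode targets within distance roughly $\alpha q/\sqrt{2n}$ of the nearest point on an arbitrary lattice. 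Using this decoder to uncompute the dual-lattice component and applying the inverse QFT yields a coherent Gaussian on $L$ whose width has shrunk by a constant factor; measuring delivers classical samples from $D_{L,r'}$ with $r' < r$.

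The bootstrap starts from a classically samplable Gaussian of very large width (well above $\eta_\eps(L)$, obtainable from an arbitrary basis of $L$) and iterates the contraction step $\mathrm{poly}(n)$ times, driving the width down to the level that realizes the claimed approximation factor. The hypothesis $\alpha q > 2\sqrt{n}$ is precisely what guarantees that the BDD radius handled by the oracle matches the Gaussian spread appearing after each QFT step; without this condition the iteration cannot close.

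The main obstacle is the quantum iteration: one must carefully control the accumulated trace-distance error of the coherent amplitudes across polynomially many QFT-and-decode rounds, showing that each imperfect oracle call perturbs the state by only a negligible amount so that the final output remains statistically close to the ideal $D_{L,r'}$. A secondary difficulty is fixing the Gaussian tail cut-off, the fundamental-domain discretization, and the number of oracle calls per iteration sharply enough that the whole reduction runs in polynomial time uniformly over input lattices, while still reducing efficiently to \GapSVP{} and \SIVP{} at the end.
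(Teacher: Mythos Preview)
The paper does not prove this theorem at all: it is stated in the preliminaries as an imported result, explicitly attributed to Regev via the citation in the theorem header, and no proof (or even proof sketch) is given. So there is no ``paper's own proof'' to compare your proposal against; the intended treatment here is simply to cite the result and move on.

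That said, your proposal is a faithful high-level outline of Regev's original argument: the iterative quantum step that alternates between discrete Gaussian sampling on $L$ and bounded-distance decoding on $L^*$ (the latter enabled by the $\LWE$ oracle), with the QFT mediating between the two, is exactly the structure of the reduction, and your identification of the condition $\alpha q > 2\sqrt{n}$ as what makes the BDD radius match the dual Gaussian spread is correct. For the purposes of this paper, however, reproducing any of this is unnecessary; a one-line citation suffices.
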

\subsection{Quantum definitions}

We assume basic familiarity with quantum computing notions. For any function $f : A \rightarrow B$ that can be described by a polynomially-sized classical circuit, we define the controlled-unitary $U_f$, as acting in the following way:
\EQ{U_f\ket{x}\ket{y} = \ket{x}\ket{y \oplus f(x)} \, \, \, \forall x \in A \, \, \, \forall y \in B,} 
where we name the first register $\ket{x}$ control and the second register $\ket{y}$ target.
Given the classical description of this function $f$, we can always define a QPT algorithm that efficiently implements $U_f$. 

The protocol we want to implement (achieving PSRQG) can be viewed as a special case of a two-party quantum computation protocol, where one side (Client) has only classical information and thus the communication consists of classical messages. Furthermore, the client is honest, so we only need to prove security (and simulators) against adversarial server. Finally, the ideal protocol (giving same output but mediated by a trusted party; see definition below) that the real protocol implements, needs to be by itself PSRQG, i.e. obtaining the legitimate outputs should not leak any extra information (see Sections \ref{Sec:Ideal} and \ref{Sec:Privacy}).
In this paper, unless stated otherwise, we use the convention that all quantum operators considered are described by polynomially-sized quantum circuits.

We follow the notations and conventions of \cite{DNS10}. We have two parties $A,B$ with registers $\mathcal{A},\mathcal{B}$ and an extra register $\mathcal{R}$ with $\dim \mathcal{R}=(\dim\mathcal{A}+\dim\mathcal{B})$. The input state is denoted $\rho_{in}\in D(\mathcal{A}\otimes\mathcal{B}\otimes\mathcal{R})$, where $D(\mathcal{A})$ is the set of all possible quantum states in register $\mathcal{A}$. We also denote with $L(\mathcal{A})$ the set of linear mappings from $\mathcal{A}$ to itself. 
The ideal output\footnote{In case of unitary protocol $U$, while it generalises for any quantum operations.} is given by $\rho_{out}=(U\otimes\mathbb{I}_{\mathcal{R}})\cdot \rho_{in}$, where for simplicity we write $U\cdot \rho$ instead of $U\rho U^\dagger$. For two states $\rho_0,\rho_1$ we denote the trace norm distance $\Delta(\rho_0,\rho_1):=\frac12 \lVert \rho_0-\rho_1\rVert$. If $\Delta(\rho_0,\rho_1)\leq\epsilon$ then any process applied on $\rho_0$ behaves as for $\rho_1$ except with probability at most $\epsilon$.

\begin{definition}[taken from \cite{DNS10}] An $n$-step two party strategy
is denoted $\Pi^O=(A,B,O,n)$:
\begin{enumerate}
\item input spaces $\mathcal{A}_0,\mathcal{B}_0$ and memory spaces $\mathcal{A}_1,\cdots,\mathcal{A}_n$ and $\mathcal{B}_1,\cdots,\mathcal{B}_n$
\item $n$-tuple of quantum operations $(L_1^A,\cdots,L_n^A)$ and $(L_1^B,\cdots,L_n^B)$ such that $L_i^A: L(\mathcal{A}_{i-1})\rightarrow L(\mathcal{A}_i)$ and similarly for $L_i^B$.
\item $n$-tuple of global operations $(\mathcal{O}_1,\cdots,\mathcal{O}_n)$ for that step, $\mathcal{O}_i:L(\mathcal{A}_i\otimes\mathcal{B}_i)\rightarrow L(\mathcal{A}_i\otimes\mathcal{B}_i)$
\end{enumerate}
\end{definition}
The global operations (in our case) are communications that transfers some (classical) register from one party to another. The quantum state  in each step of the protocol is given by:
\EQ{\rho_1(\rho_{in})&:=&(\mathcal{O}_1\otimes\mathbb{I})(L_1^A\otimes L_1^B\otimes\mathbb{I})(\rho_{in})\nonumber\\
\rho_{i+1}(\rho_{in})&:=&(\mathcal{O}_{i+1}\otimes\mathbb{I})(L_{i+1}^A\otimes L_{i+1}^B\otimes\mathbb{I})(\rho_i(\rho_{in}))
}

\begin{definition}[Ideal Protocol]\label{def:ideal}
Given a real protocol, we call the corresponding\emph{``ideal protocol''} a protocol that has same input/output distributions with an honest run of the real protocol, but all intermediate steps are completed by a trusted third party.
\end{definition}

The security definitions are based on the corresponding ideal protocol of secure two-party quantum computation
(S2PQC) that takes a joint input $\rho_{in}\in \mathcal{A}_0\otimes\mathcal{B}_0$, obtains the state $U\cdot\rho_{in}$ and returns to each party their corresponding quantum registers. A protocol $\Pi^O_U$ implements the protocol securely, if no possible adversary in any step of the protocol, can
distinguish with a non negligible probability whether they interact with the real protocol or with a simulator (which has access to the ideal protocol). When a party is malicious we add the notation ``$\sim$'', e.g. $\tilde A$.

\begin{definition}[Simulator]
$\mathcal{S}(\tilde{A})=\langle (\mathcal{S}_1,\cdots,\mathcal{S}_n),q \rangle$ is a simulator for adversary $\tilde{A}$ in $\Pi^O_U$ if it consists of:
\begin{enumerate}
\item operations where $\mathcal{S}_i:L(\mathcal{A}_0)\rightarrow L(\tilde{\mathcal{A}_i})$ are described by polynomially-sized quantum circuits,
\item  sequence of bits $q\in\{0,1\}^n$ determining if the simulator calls the ideal functionality at step $i$ ($q_i=1$ calls the ideal functionality).
\end{enumerate}
\end{definition}
Given input $\rho_{in}$ the simulated view for step $i$ is defined as:
\EQ{\label{eq:simulator_defn}\nu_i(\tilde A,\rho_{in}):=\Tr_{\mathcal{B}_0}\left((\mathcal{S}_i\otimes\mathbb{I})(U^{q_i}\otimes\mathbb{I})\cdot \rho_{in}\right)
}
\begin{definition}[Privacy with respect to the Ideal Protocol] \label{def:private}We say that the protocol is $\delta$-private (with respect to an ideal protocol) if for all adversaries and for all steps $i$: 
\EQ{\label{eq:private_real_simul}
\Delta(\nu_i(\tilde{A},\rho_{in}),\Tr_{\mathcal{B}_i}(\tilde{\rho}_i(\tilde{A},\rho_{in})))\leq\delta
}
where $\tilde{\rho}_i(\tilde{A},\rho_{in})$ is the state of the real protocol with corrupted party $\tilde{A}$, at step $i$. 
\end{definition}

The honest-but-curious (HBC) adversaries, follow the protocol honestly, keeping records of all communication
and attempt to learn from those more than what they should. Since quantum states cannot be copied, in \cite{DNS10} they defined an adversary that could be considered the quantum analogue, called specious adversary. 

\begin{definition}[Specious]\label{def:specious}
An adversary $\tilde{A}$ is $\epsilon$-specious if there exists a sequence of operations $(\mathcal{T}_1,\cdots,\mathcal{T}_n)$, where $\mathcal{T}_i: L(\tilde{\mathcal{A}}_i)\rightarrow L(\mathcal{A}_i)$ can be described by polynomially-sized quantum circuits, such that for all $i$:
\EQ{\label{eq:specious}
\Delta\left((\mathcal{T}_i\otimes\mathbb{I})(\tilde\rho_i(\tilde A,\rho_{in})),\rho_i(\rho_{in})\right)\leq\epsilon
}
\end{definition}

In our protocol, where communications are classical, it is sensible to define a weaker version of the adversary:

\begin{definition}[Quantum-Honest-But-Curious (QHBC)]\label{def:QHBC}
An adversary $\tilde{A}$ is QHBC if it is $0$-specious.
\end{definition}

\section{Ideal Functionality}\label{Sec:Ideal}

In many distributed protocols the required communication consists of sending sequence of single qubits prepared in random states
that are unknown to the receiver (and any other third parties). What we want to achieve is a way to generate remotely single
qubits that are random and (appear to be) unknown to all parties but the ``client'' that gives the instructions.

In this work, for clarity and having in mind the applications we wish to implement, we will focus on a particular choice for the set $R$ of possible states that contains eight different single-qubit states (see below). One could easily modify our work to restrict to a smaller set (e.g. the four BB84 states \cite{BB84} that would actually simplify our proofs) or a larger set.

\begin{definition}
Let $\ket{+_\theta}=1/\sqrt{2}\left(\ket{0}+e^{i\theta}\ket{1}\right)$. We define the set of states 

\EQ{\label{eq:output_states}
R:= \{ \ket{+_\theta} \} \ \textrm{ where } \ \theta\in\{ 0,\pi/4,\pi/2,\cdots,7\pi/4\}
}
\end{definition}
By including magic states ($\ket{+_{\pi/4}}$), this set of states can be viewed as a ``universal'' resource, as applying Clifford operations on those states is sufficient for universal quantum computation. Furthermore, it is sufficient to implement both Blind Quantum Computation (e.g. \cite{bfk}) and Verifiable Blind Quantum Computation (e.g. \cite{FKD2017}).

\begin{algorithm}[H]
\caption{Ideal Functionality: Secret Random Qubit Generator (SRQG) -- $\mathcal{F}(M,p)$}
\label{ideal:q_factory}
\textbf{Public Information:} A distribution on pairs of lists $M$, intuitively containing the values of the classical variables used by the client and by the server\\
\textbf{Trusted Party:}\\
-- With some probability $p$ returns to both parties $\mathsf{abort}$, otherwise:\\
-- Samples $(m_C, m_S)\leftarrow M$\\
-- Samples $r\leftarrow\{0,1\}^3$\\
-- Prepares a qubit in state $\ket{+_{(r\pi/4)}}$\\
\textbf{Outputs:}\\
-- Either returns $\mathsf{abort}$ to both client and server\\
-- Or returns $(m_C, r)$ to the client, and $(m_S, \ket{+_{(r\pi/4)}})$ to the server
\end{algorithm}
\noindent\textbf{Remarks}: (i) The outcome of this functionality is the client ``sending'' the qubit $\ket{+_\theta}$ (that he knows) to the server, thus simulating a quantum channel. (ii) We note that there is an abort possibility and some auxiliary classical message $m$, both included to make the functionality general enough to allow for our construction. Furthermore, the classical description of the qubit $r$ and the classical message $m$ are totally uncorrelated (as $r$ is chosen randomly for each $m$). (iii) While the server \emph{can} learn something about the classical description (e.g. by measuring the qubit), this information is limited and is the exact same information that he could obtain if the client had prepared and send a random qubit. Therefore, the privacy is defined with respect to this ideal setting.

We are interested only in the honest-but-curious setting for now. The idea is that we will allow the adversary to have access to the classical registers/variables of the server (we will call these information a ``\emph{view}''), as well as the classical variables produced by the ideal functionality (uncorrelated with the quantum output, so secure by definition). The goal of the adversary will be to distinguish whether he is interacting with a view of the ideal functionality or a view of the real protocol. More formally, we will denote by $\cP_S$ the view of server $S$ in protocol $\cP$, which is the list of the content of the variables/classical registers assigned by the server $S$ in the protocol $\cP$. Similarly, $\cF_S$ will be the view of the server $S$ in the ideal functionality $\cF$, equal to the value of $m_S$ in a run of the idea functionality.

\begin{definition}[Pseudo-Secret Random Qubit Generator (PSRQG)]\label{def:c_q_factory}
  We call a protocol $\mathcal{P}(1^n)$ to be
  $\eps(n)$-Pseudo-Secret Random Qubit Generator ($\eps(n)$-PSRQG) if there exists a SRQG $\cF(M,p)$ such that for all
  Quantum Polynomial Time (QPT) adversaries/distinguishers
  $\cA$:
  \EQ{\label{eq:c_q_factory}\left|\Pr[\mathcal{A}(\mathcal{P}_S(1^n))=1]-\Pr[\mathcal{A}(\cF_S(M,p))=1]\right|\leq \eps(n)}
  If $\eps(n)$ is a negligible function, we will simply denote it
PSQRG (omitting $\eps(n)$).
\end{definition}

To achieve the PSRQG functionality we define an ideal protocol, called Ideal QFactory\footnote{We call this a ``qubit factory'', since we use this protocol to produce strings of qubits.}, mediated
by a trusted third party that (under certain assumptions) achieves the PSRQG functionality. 
This ideal protocol, can be realised by a concrete protocol without any trusted parties (see later), and certain choices in the definition of the ideal QFactory (e.g. the function required) are done with this in mind.

\begin{algorithm}[H]
\caption{Ideal QFactory Protocol}
\label{ideal:c_q_factory}
\textbf{Public Information:} A security parameter $n\in \mathbb{N}^*$, a trapdoor one-way function that are quantum-safe, two-regular and collision resistant (or the weaker second preimage resistance property, see \autoref{rmk:collision_resistance}) $\{f_k \colon D \rightarrow R\}_{k \in K}$ and a family of functions $\{g_k \colon D \times D \times E \rightarrow \{0,1\}^3\}_{k \in K}$\\
\textbf{Trusted Party:}\\
-- Runs the algorithm $\mathsf{Gen}(1^n)=(k,t_k)$ of the trapdoor function\\
-- Samples randomly $x \leftarrow D,\beta\leftarrow E$\\
-- Using $t_k$, computes the unique other preimage $x' \neq x$ such that $f_k(x)=f_k(x')=y$\\
-- If the last bit of $x$ and $x'$ is the same, $\mathsf{abort}$ otherwise\\
-- Computes  $\tilde{B} :=g_k(x,x',\beta)$. Setting $\theta := \tilde{B} \times \pi/4$, prepares a qubit in the state $\ket{+_\theta}$\\
\textbf{Outputs:}\\
-- Either returns $\mathsf{abort}$ to both parties\\ 
-- Or returns $(k,y,\beta, \ket{+_\theta})$ to server $S$ and $(t_k,y,\beta,\theta)$ to client $C$. Note that the $\theta$ is optional and could have been recomputed by the client from $t_k$.
\end{algorithm}

\begin{remark}\label{rmk:collision_resistance}
  It appears that the second preimage resistance property will be enough to prove the security of our scheme in the honest-but-curious setting. However, as soon as the server can be malicious, the collision resistance property will be very important, else the server might forge known valid states, which would break the security.
\end{remark}

We will denote by $M_{QF}$ the distribution obtained by sampling as above the index $k$ and trapdoor $t_k$ according to $\mathsf{Gen}(1^n)$, the $y$ uniformly in the elements of $R$ having two preimages, and the $\beta$ uniformly in $E$, and then outputting $((t_k,y,\beta),(k,y,\beta))$.

\begin{lemma}\label{lemma:hardcore}
  Ideal QFactory \autoref{ideal:c_q_factory} is a PSRQG protocol as described in  \autoref{def:c_q_factory} (with $M$ having the distribution $M_{QF}$) if the function
  $g_k(x,x',\beta)$ (restricted on $x,x'$ such that $f_k(x)=f_k(x')$) is a hard-core function for $f_k$. 
\end{lemma}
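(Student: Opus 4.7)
The plan is to give a direct reduction: any QPT distinguisher between the real view (Ideal QFactory) and the ideal view (SRQG with $M = M_{QF}$) yields a QPT distinguisher against the hard-core function property of $g_k$ with respect to $f_k$.

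First I would unpack the two server views. In Ideal QFactory, conditioned on not aborting, the server receives
$\mathcal{P}_S = (k,y,\beta,\ket{+_{\theta}})$ with $\theta = g_k(x,x',\beta)\cdot \pi/4$, where $x$ is uniform among inputs whose image under $f_k$ has two preimages (and the last bits of $x,x'$ differ), $\beta$ is uniform in $E$, and $y = f_k(x)$. In the SRQG with distribution $M_{QF}$, the server receives $\mathcal{F}_S = (k,y,\beta,\ket{+_{r\pi/4}})$ with $r$ uniform in $\{0,1\}^3$ and, crucially, independent of $(k,y,\beta)$. The abort probabilities are matched by construction of $M_{QF}$ (the non-abort condition is identical: $y$ has two preimages and the last bits of the two preimages differ), so it suffices to compare the conditional distributions given no abort.

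Next I would construct the reduction. Suppose there is a QPT distinguisher $\mathcal{A}$ with advantage $\eps(n)$ between $\mathcal{P}_S$ and $\mathcal{F}_S$. Define a QPT adversary $\mathcal{A}'$ against the hard-core function property of $g_k$ as follows: on input $\bigl((k,y,\beta),\, z\bigr)$ with $z\in\{0,1\}^3$, it prepares the single qubit $\ket{+_{z\pi/4}}$ (this is efficient since $z$ is only $3$ bits), and runs $\mathcal{A}$ on the tuple $(k,y,\beta,\ket{+_{z\pi/4}})$, outputting its guess. Observe that if $z = g_k(x,x',\beta)$ then $\mathcal{A}'$ simulates $\mathcal{P}_S$ exactly, while if $z$ is uniform in $\{0,1\}^3$ then $\mathcal{A}'$ simulates $\mathcal{F}_S$ exactly (because in $\mathcal{F}_S$, the bits $r$ are uniform and independent of $(k,y,\beta)$, and $\ket{+_{r\pi/4}}$ is a deterministic function of $r$). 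Hence $\mathcal{A}'$ has the same distinguishing advantage $\eps(n)$ against $g_k$ as $\mathcal{A}$ has against the PSRQG.

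By hypothesis, $g_k$ (restricted to valid pairs of preimages $(x,x')$) is a quantum-safe hard-core function for $f_k$, so for every QPT $\mathcal{A}'$ the advantage must be negligible. Contrapositively, $\eps(n)$ must be negligible, establishing \autoref{def:c_q_factory}. The main subtlety — and the only point that is not a pure syntactic rewriting of definitions — is that the view given to the distinguisher is quantum rather than classical: one must argue that the state $\ket{+_{\theta}}$ carries no more information than its $3$-bit classical description $\theta$. This works precisely because the preparation map $z \mapsto \ket{+_{z\pi/4}}$ is an efficient quantum operation, so the quantum register can be replaced by its classical label without loss of distinguishing power, reducing the security of the quantum output to the hard-core function property which concerns only classical randomness. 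A parenthetical observation is that, should one insist on applying \autoref{def:hardcore_function} literally (which defines hard-core for an $f$ with single input $x$), one can embed $(x,x',\beta)$ as a single input and treat $(y,\beta)$ as the image; the reduction is unchanged.
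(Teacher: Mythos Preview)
Your proposal is correct and follows essentially the same approach as the paper: both argue that the two server views differ only in whether the qubit is $\ket{+_{g_k(x,x',\beta)\pi/4}}$ or $\ket{+_{r\pi/4}}$ for uniform $r$, and then invoke the hard-core function property of $g_k$ to conclude indistinguishability. Your version is more explicit than the paper's (you spell out the reduction adversary $\mathcal{A}'$ that prepares $\ket{+_{z\pi/4}}$ from the challenge $z$ and feeds it to $\mathcal{A}$), whereas the paper simply observes that since the classical label is indistinguishable from random, the resulting quantum state is as well; but the underlying idea and structure are the same.
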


\begin{proof}
We can see that \autoref{ideal:c_q_factory} is identical with \autoref{ideal:q_factory} with $M=M_{QF}$ (since the Client having $t_k$ can determine if it aborts or not), apart from the fact that in \autoref{ideal:c_q_factory} the state received by the server is $\ket{+_\theta}$, while in \autoref{ideal:q_factory} is $\ket{+_{r}}$. 

Now we use the fact that $g_k$ is a hard-core function. By definition \ref{def:hardcore_function}, for a QPT adversary that  has access to $m=(k,y=f_k(x),\beta)$ the value of the hard-core function  $g_k(x,x',\beta)=4\theta/\pi$ where $x,x'$ are the unique preimages of $y$, is indistinguishable (up to negligible probability) to that of a random value $r$. It follows that such adversary cannot distinguish (apart with negligible probability) whether he received  the state $\ket{+_\theta}$ as in \autoref{ideal:c_q_factory} or the state $\ket{+_{r}}$ as in the ideal functionality described on \autoref{ideal:q_factory}, and therefore satisfies Eq. \ref{eq:c_q_factory}).
\end{proof}

It is not sufficient to prove that given the image $y=f_k(x)$ it is hard to obtain the exact value of the function $g$ (we will omit the $k$ if it is clear from the context), we want the stronger requirement that given $y$, 
a QPT adversary obtains no advantage in distinguishing the value of $g$ (the classical description of the state), from a totally random value $r$. Intuitively, what \autoref{ideal:c_q_factory} achieves, is that it produces (truly) random qubits in states that are pseudo-secret, i.e. their classical description is computationally unknown to anyone that does not have access to the trapdoor $t_k$ (i.e. the server).

\section{The Real Protocol}\label{Sec:Protocol}

We assume the existence\footnote{See \autoref{Sec:Functions} for  our function. With that choice, we are guaranteed that the last bits of the two preimages are always different, and thus no need for an abort. We keep the protocol general so that different functions can be used.} 
of a family
$\{f_k \colon \{0,1\}^n \rightarrow \{0,1\}^m\}_{k \in K}$ of trapdoor
one-way functions that are two-regular and collision resistant (or the weaker second preimage resistance property, see \autoref{rmk:collision_resistance}) even against a quantum adversary. For
any $y$, we will denote by $x(y)$ and $x'(y)$ the two unique different
preimages of $y$ by $f_k$ (if the $y$ is clear, we may remove it).
Note that because of the two-regularity property $m \geq n-1$. We use subscripts to denote the different bits of the strings.

\begin{algorithm}[H]
\caption{Real QFactory Protocol}\label{protocol:concrete_c_q_factory}
\textbf{Requirements:} \\
Public: A family $\mathcal{F}=\{f_k : \{0,1\}^n \rightarrow \{0,1\}^m \}$ of trapdoor one-way functions that are quantum-safe, two-regular and collision resistant (or second preimage resistant, see \autoref{rmk:collision_resistance})\\
\textbf{Input:}\\
-- Client: uniformly samples a set of random three-bits strings $\alpha=(\alpha_1,\cdots,\alpha_{n-1})$ where $\alpha_i\leftarrow\{0,1\}^3$, and runs the algorithm $(k,t_k) \leftarrow \text{Gen}_{\mathcal{F}}(1^n)$. The $\alpha$ and $k$ are public inputs (known to both parties), while $t_k$ is the ``private'' input of the Client.\\
\textbf{Stage 1: Preimages superposition} \\
-- Client: instructs Server to prepare one register at $\otimes^n H\ket{0}$ and second register initiated at $\ket{0}^{m}$

-- Client: returns $k$ to Server and the Server applies $U_{f_k}$ using the first register as control and the second as target

-- Server: measures the second register in the computational basis, obtains the outcome $y$ and returns this result $y$ to the Client. Here, an honest Server would have a state ${(\ket{x}+\ket{x'}) \otimes \ket{y}}$ with $f_k(x)=f_k(x')=y$ and $y\in \Im f_k$.
\\
\textbf{Stage 2: Squeezing}\\
-- Client: instructs the Server to measure all the qubits (except the last one) of the first register in the $\left\{\ket{0}\pm e^{\alpha_i\pi/4}\ket{1}\right\}$ basis. Server obtains the outcomes $b=(b_1,\cdots,b_{n-1})$ and returns the result $b$ to the Client
\\
-- Client: using the trapdoor $t_k$ computes $x,x'$. Then check if the $n$th bit of $x$ and $x'$ (corresponding to the $y$ received in stage 1) are the same or
different. If they are the same, returns $\mathsf{abort}$, otherwise, obtains the classical
description of the Server's state.\\
\textbf{Output:} If the protocol is run honestly, when there is no
abort, the state that Server has is $\ket{+_\theta}$, where the Client (only)
knows the classical description (see \autoref{Thm:correctness}):

\EQ{\label{eq:theta} \theta=\frac{\pi}{4}(-1)^{x_n}\sum_{i=1}^{n-1}(x_i-x_i')(4b_i+\alpha_i)\bmod 8}
\end{algorithm}

\noindent\textbf{Remarks:} The first thing to note is that the server
should not only be unable to guess $\theta$ from his classical
communications, but he should also be unable to distinguish it from a
random string with probability greater than negligible. We will prove
this later, but for now it is enough to point out that $\theta$
depends on the pre-images $x$ and $x'$ of $y$ (which the Client can
obtain using $t_k$).

The second thing to note is that previously, in \autoref{ideal:c_q_factory} and in \autoref{lemma:hardcore}, we used the variable $\beta$. In our
case, $\beta$ corresponds to both $\alpha_i$'s and $b$. While our
expression resembles the inner product in the Goldreich-Levin (GL)
theorem, it differs in a number of places and our proof (that $\theta$
is a hard-core function), while it builds on GL theorem proof, is
considerably more complicated. Details can be found in the security
proof, but here we simply mention the differences: (i) our case involves three-bits rather than a predicate, and the different bits, if we view them separately, may not be independent, (ii) we have a term $(x-x')$ rather than a single preimage, so rather than the one-way property of the function we will need the second preimage resistance and (iii) for the same reason, if we view our function as an inner product, it can take both negative and positive values ($(x-x')$
could be negative).

A third thing to note is that we have singled-out the last qubit of the first register, as the qubit that will be the output qubit. One could have a more general protocol where the output qubit is chosen randomly, or, for example, in the set of the qubits that are known to have different bit values between $x$ and $x'$, but this would not
improve our analysis so we keep it like this for simplicity. Moreover,
while the ``inner product'' normally involves the full string $x$ that
one tries to invert, in our case, it does not include one of the bits
(the last) of the string we wish to invert. It is important to note,
that it does not change anything to our proofs, since if one can
invert all the string apart from one bit with inverse polynomial
probability of success, then trivially one can invert the full string
with inverse polynomial probability (by randomly guessing the
remaining bit or by trying out both values of that bit). Therefore all
the proofs by contradiction are still valid and in the remaining, for
notational simplicity, we will take the inner products to involve all
$n$ bits.

\subsection{Correctness and intuition}

\begin{theorem}\label{Thm:correctness}
If both the Client and the Server follow \autoref{protocol:concrete_c_q_factory}, the protocol aborts when $x_n=x_n'=f_k^{-1}(y)$, while otherwise the Server ends up with the output (single) qubit being in the state $\ket{+_\theta}$, where $\theta$ is given by Eq. (\ref{eq:theta}).
\end{theorem}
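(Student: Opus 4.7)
The plan is to track the joint state of the registers through both stages and verify the claimed $\theta$ conditioned on no abort. In Stage 1, after the Hadamards and $U_{f_k}$, the state is $2^{-n/2}\sum_x\ket{x}\ket{f_k(x)}$. Measuring the target yields a value $y$; by two-regularity there are exactly two preimages $x \neq x'$, and the control register collapses to $\tfrac{1}{\sqrt 2}(\ket{x}+\ket{x'})$. Let $G := \{i : x_i \neq x'_i\}$ and $\bar G$ its complement; since $x'_i = 1 - x_i$ on $G$ and $x'_i = x_i$ on $\bar G$, this factorizes as
\[
\bigotimes_{i \in \bar G}\ket{x_i}\;\otimes\;\tfrac{1}{\sqrt 2}\bigl(\ket{x_G}+\ket{\overline{x_G}}\bigr).
\]
The abort condition $x_n = x'_n$ is exactly $n \in \bar G$; in that case qubit $n$ is frozen in the computational basis state $\ket{x_n}$, which is not of the form $\ket{+_\theta}$, confirming the stated abort behaviour. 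So assume henceforth $n \in G$.

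For Stage 2, treat each measured qubit $i \leq n-1$ separately. If $i \in \bar G$, qubit $i$ is already in the product state $\ket{x_i}$, so the measurement returns a uniformly random $b_i$ with no back-action on qubit $n$. If $i \in G' := G \setminus \{n\}$, I would compute the inner products $\langle\phi_{b,\alpha}|c\rangle$ for $c\in\{0,1\}$, with $\ket{\phi_{b,\alpha}} = \tfrac{1}{\sqrt 2}(\ket{0} + (-1)^b e^{i\alpha\pi/4}\ket{1})$, and show that conditioned on outcome $b_i$ the ratio of the $\overline{x_G}$-amplitude to the $x_G$-amplitude on the remaining qubits acquires the factor $(-1)^{b_i}\, e^{i(x_i - x'_i)\alpha_i\pi/4}$. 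The key identity is $1 - 2x_i = -(x_i - x'_i)$ on $G$, which reconciles the ``natural'' sign from expanding $\ket{x_i}$ versus $\ket{1-x_i}$ against the theorem's $(x_i - x'_i)$ convention; the reciprocation needed in the $x_i = 1$ case contributes another sign that is exactly absorbed.

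Aggregating the per-qubit factors over $G'$ gives
\[
\prod_{i \in G'} (-1)^{b_i}\, e^{i(x_i - x'_i)\alpha_i\pi/4}
\;=\;
\exp\!\Bigl(\tfrac{i\pi}{4}\sum_{i=1}^{n-1}(x_i - x'_i)(4b_i+\alpha_i)\Bigr),
\]
where $(-1)^{b_i} = e^{i(x_i - x'_i)b_i\pi}$ (valid since $|x_i - x'_i|=1$ on $G'$) and the sum is extended to $\bar G$ for free because $x_i - x'_i = 0$ there. Denoting the exponent by $\Theta$, qubit $n$ is, up to normalization, in $\ket{x_n} + e^{i\Theta}\ket{1-x_n}$. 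For $x_n = 0$ this is $\sqrt 2\,\ket{+_\Theta}$, so $\theta = \Theta$; for $x_n = 1$, pulling out an overall $e^{i\Theta}$ phase yields $\sqrt 2\, e^{i\Theta}\ket{+_{-\Theta}}$, so $\theta = -\Theta$. Either way $\theta = (-1)^{x_n}\Theta$, matching Eq.~\eqref{eq:theta} (mod $2\pi$, i.e.\ mod $8$ in $\pi/4$-units).

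The main obstacle is the sign bookkeeping in the second paragraph: showing that the per-qubit factor collapses to the uniform expression $(-1)^{b_i}\, e^{i(x_i - x'_i)\alpha_i\pi/4}$ regardless of the orientation of $x_i - x'_i$ and of the value of $b_i$. Once that symmetric form is established, the aggregation and the $(-1)^{x_n}$ prefactor from the final renormalization are immediate.
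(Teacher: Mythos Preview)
Your proof is correct and follows essentially the same approach as the paper: both identify the $G/\bar G$ factorisation after Stage~1, handle the abort case via $n\in\bar G$, and then track the per-qubit phase contribution from each Stage~2 measurement before aggregating and extracting the $(-1)^{x_n}$ prefactor. The only cosmetic difference is that the paper phrases the GHZ part as $\bigl(\prod_{j\in G}X^{x_j}\bigr)(\ket{0\cdots0}+\ket{1\cdots1})$ and reads off the phase $(-(-1)^{x_i}\alpha_i+4b_i)\pi/4$ by commuting the $X$ through the rotated measurement, whereas you compute the amplitude ratio $\langle\phi_{b_i,\alpha_i}|1-x_i\rangle/\langle\phi_{b_i,\alpha_i}|x_i\rangle$ directly; these are the same calculation in different clothing.
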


\begin{proof}
In the first stage, before the first measurement, but after the application of $U_{f_k}$, the state is $\sum_x\ket{x}\otimes\ket{f_k(x)}$. What the
measurement does, is that it collapses the first register in the equal
superposition of the two unique preimages of the measured $y=f_k(x)=f_k(x')$,
in other words in the state
$\left(\ket{x}+\ket{x'}\right)\otimes\ket{y}$. It is not possible, even for malicious adversary (not considered here), to force the output of the measurement to be a given $y$ (see \cite{postbqp} for relation of PostBQP with BQP).
This completes the first stage of the protocol. Before proceeding with the proof of correctness we make three observations.

By the second preimage resistance property of the trapdoor function, learning $x$ is not sufficient to learn $x'$ but with negligible probability, and intuitively, by the stronger collision resistance property, even a malicious server cannot forge a state $\ket{x}+\ket{x'}$ (with $f(x)=f(x')$) fully known to him.

Then, we examine what happens if the last bit of $x$ and $x'$ are the same and see why the protocol aborts. In this case, in the first register, the last qubit is in product form with the remaining state, and therefore any further measurements in stage 2
do not affect it, leaving it in the state $\ket{x_{n}}$. Because of this, the output state is not of the form of Eq. (\ref{eq:theta}), while including this states in the set of possible outputs would change considerably our analysis.

Finally, we should note that the resulting state is essentially a Greenberger-Horne-Zeilinger (GHZ)
state \cite{GHZ1989}: let $G$ be the set of bits positions where $x$ and $x'$ differ (which
include $n$ -- output qubit), while $\bar{G}$ is the set where they are identical. The state is then (where we no longer keep the qubits in order, but group them depending on their belonging to $G$ or
$\bar{G}$):
\EQ{
\big(\otimes_{i\in \bar{G}}\ket{x_{i}}\big)\otimes\big(\otimes_{j\in G}\ket{x_{j}}+\otimes_{j\in G}\ket{x_{j}\oplus 1}\big)
}
This can be rewritten as (up to trivial re-normalization):
\EQ{
\big(\otimes_{i\in \bar{G}}\ket{x_{i}}\big)\otimes\Big(\prod_{j\in G}X^{x_{j}}\Big)\big(\ket{0\cdots 0}_{G}+\ket{1\cdots 1}_{G}\big)
}

It is now evident that the state at the end of Stage 1 is a tensor
product of isolated $\ket{0}$ and $\ket{1}$ states, and a GHZ state
with random $X$'s applied. You can find on
Figure~\ref{fig:drawing_ghz} an illustration of this state\footnote{GHZ-states when viewed as graph states correspond to stars due to the corresponding graph as we can see here too.}, before and
after the Stage 2.

\begin{figure}
\centering
	\begin{subfigure}[t]{0.47\textwidth}
  \centering
  \includegraphics[width=\linewidth]{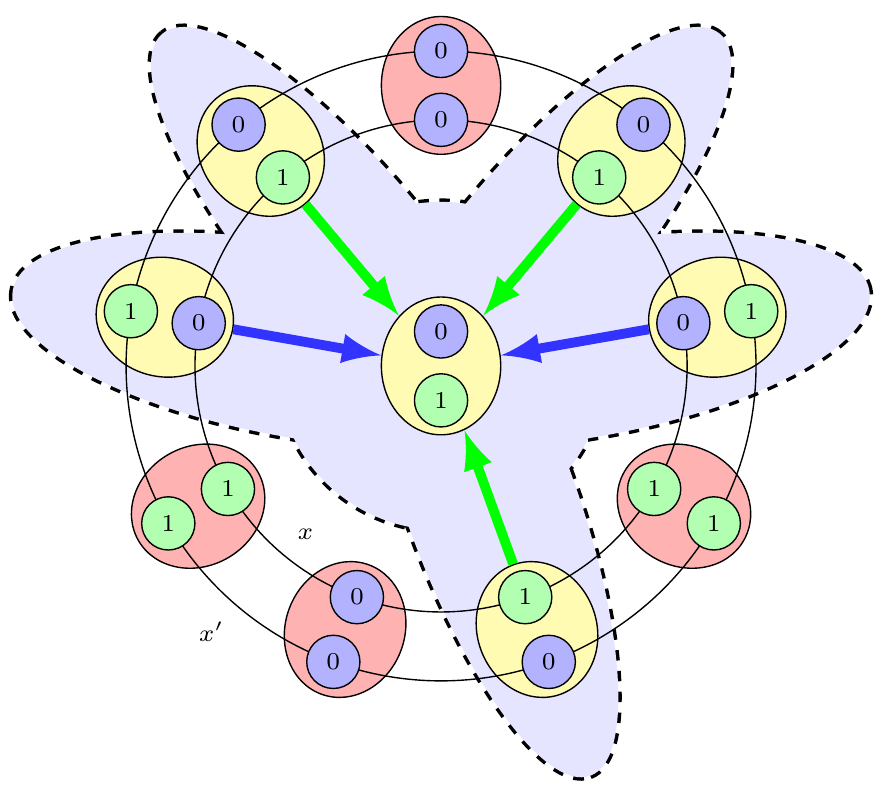}
  \caption{End of Stage 1: The yellow qubits are in a big \mbox{GHZ-like} state. The server does not know which qubits are in the GHZ state, and which qubits are not (in red).}
  \label{fig:drawing_ghz_left}
\end{subfigure}\hspace*{0.05\textwidth}
\begin{subfigure}[t]{0.47\textwidth}
  \centering
  \includegraphics[width=\linewidth]{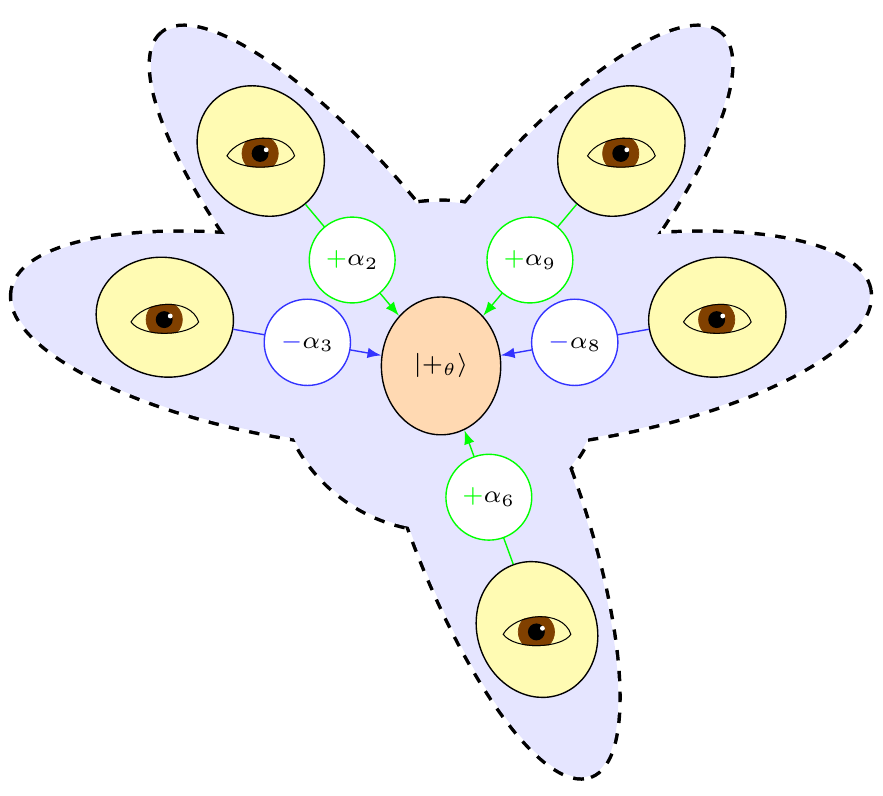}
  \caption{End of Stage 2: Measuring any qubit in the GHZ state will rotate the last (ouput) qubit depending on the angle (and result) of the measurement.}
  \label{fig:drawing_ghz_right}
\end{subfigure}
\caption{A simplified representation of the protocol. The red and yellow ellipses represent the qubits, the inner circle contains the bits of $x$ and the outer circle contains the bits of $x'$. The central qubit is the last one, which is not measured and which will be the output qubit.}
\label{fig:drawing_ghz}
\end{figure}

The important thing to note, is that the set $G$, that determines which qubits are in the GHZ state and which qubits are not, is \emph{not} known to the server (apart from the fact that the position of the output qubit belongs to $G$ since otherwise the protocol aborts). Moreover, this set denotes the positions where $x$ and $x'$ differ, which is given by the XOR of the two preimages $x\oplus x':=(x_1\oplus x_1',\cdots,x_n\oplus x_n')$. Because of
second preimage resistance of the function, the server should not be able to invert and obtain $x\oplus x'$ apart with negligible probability (without access to the trapdoor $t_k$). This in itself does not guarantee that the Server cannot learn \emph{any} information about the XOR of the preimages, but we will see that the actual form of the state is such that being able to obtain information would lead to invert the full XOR and thus break the second preimage resistance.

Now let us continue towards Stage 2. Measuring a qubit (other than the
last one) in $\bar{G}$ has no effect on the last qubit (since it is
disentangled). When the qubit index is in $G$, then measuring it at angle
$\alpha_i\pi/4$ gives a phase to the output qubit of the form
$(-(-1)^{x_{i}}\alpha_i+4b_i)\pi/4$ as one can easily
check\footnote{The $(-1)^{x_i}$-term arises because of the commutation
  of $X_i^{x_{i}}$ with the measurement angle, and the final
  $X_n^{x_n}$ gate gives an overall $(-1)^{x_n}$ to the angle of
  deviation}.
Therefore, adding all the phases leads to the output state being:
\EQ{
\ket{+_\theta}; \ \theta=\frac\pi4(-1)^{x_{n}}\left(\sum_{i\in G\setminus\{n\}}\big(-\alpha_i(-1)^{x_{i}}+4b_i\big)\right)\bmod 8
}
Because $\theta$ is defined modulo $2\pi$ and $-4 = 4 \bmod 8$, we can express the output angle in a more
symmetrical way:
\EQ{
\theta=\frac\pi4(-1)^{x_{n}}\left(\sum_{i=1}^{n-1}(x_{i}-x'_{i})\big(4b_i+\alpha_i\big)\right) \bmod 8
}
Note that because the angles are defined modulo $2\pi$, one can
represent this angle as a 3-bits string $\tilde{B}$ (interpretable as
an integer) such that $\theta := \tilde{B} \times \frac{\pi}{4}$ and
eventually remove the $(-1)^{x_{n}}$ if needed by choosing the suitable convention in defining $x$ and
$x'$.
\end{proof}

A final remark is that in an honest run of this protocol, the measurement outcomes $b_i$ and $y$ are uniformly chosen from $\{0,1\}$ and $\Im(f_k)$  respectively. This justifies why in the honest-but-curious model we can view the protocol as sampling randomly
the different $\alpha,y,b$'s.

\section{Privacy against QHBC adversaries}\label{Sec:Privacy}

Here we will prove the security of \autoref{protocol:concrete_c_q_factory} against QHBC adversaries (\autoref{def:QHBC}). It can easily be generalised for specious adversaries (\autoref{def:specious}). Before proceeding further, it is worth stressing that this security level has three-fold importance. Firstly, the QHBC model concerns any application of PSQRG that involves a protocol where the adversaries are third parties that have access to the classical communication and nothing else. In this case, we can safely assume that the quantum part of the protocol is followed honestly and we only require to prove that the third parties learn nothing about the classical description of the state from the classical public communication.
Second case of interest is scenarios where the ``server'' does not intend to sabotage/corrupt the computation but may be interested to learn (for free) extra information. In such case, the protocol should be followed honestly, since any non-reversible deviation other than copying classical information could corrupt the computation. Finally, the QHBC case, as in the classical setting, is a first step towards proving the full security against malicious adversaries, as we will discuss in \autoref{Sec:Malicious}).

\begin{theorem}\label{Thm:privacy}
\autoref{protocol:concrete_c_q_factory} realises a PSRQG Ideal Protocol (as in \autoref{def:c_q_factory}) that is private with respect to this ideal protocol (as in \autoref{def:private}) against a QHBC server $\mathcal{A}$ (\autoref{def:QHBC}).
\end{theorem}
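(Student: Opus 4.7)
The plan is to construct an explicit simulator that matches the server's view, and then reduce the residual distinguishing advantage, via \autoref{lemma:hardcore}, to proving that the classical description function $g_k$ of \autoref{protocol:concrete_c_q_factory} is a hard-core function for $f_k$. Because the adversary is QHBC (\autoref{def:QHBC}), he executes the quantum protocol faithfully, so his view at step $i$ is entirely determined by the classical transcript $(k, y, \alpha, b)$ together with the honest quantum register, which by \autoref{Thm:correctness} is exactly $\ket{+_\theta}$ with $\theta$ as in Eq.~(\ref{eq:theta}). I would define the simulator $\cS$ to query the Ideal QFactory \autoref{ideal:c_q_factory} with $\beta := (\alpha, b)$ sampled as in the real protocol, receive $(k, y, \beta, \ket{+_\theta})$, and forward this tuple unchanged. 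The joint distributions on $(k, y, \alpha, b, \ket{+_\theta})$ in the real and simulated views then coincide exactly (including the abort probabilities, which in both protocols depend only on whether $x_n = x'_n$), so Eq.~(\ref{eq:private_real_simul}) reduces to the PSRQG distinguishing advantage, which by \autoref{lemma:hardcore} is negligible provided $g_k(x,x',\alpha,b) := (4\theta/\pi) \bmod 8 \in \{0,1\}^3$ is a hard-core function for $f_k$.

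To establish the hard-core-function property, I would invoke the Vazirani-Vazirani theorem (\autoref{thm:VV}) and reduce to showing that every non-empty XOR of the three bits of $g_k$ is a hard-core predicate for $f_k$. Absorbing the $(-1)^{x_n}$ factor into an arbitrary choice of ordering for the pair $(x,x')$, one can view $g_k$ as the inner product $\langle x - x',\, 4b + \alpha\rangle \bmod 8$ in $\mathbb{Z}_8$, and any $\mathbb{F}_2$-linear combination of its bits can be rewritten as an $\mathbb{F}_2$-inner product between $x \oplus x'$ and a vector computable from $(\alpha, b)$. A Goldreich-Levin-style reduction then converts any efficient predictor for such a predicate into an efficient extractor for $x \oplus x'$ given only $y = f_k(x) = f_k(x')$; combined with one preimage $x$ (which the reduction obtains simply by sampling $x$ uniformly and setting $y := f_k(x)$), this yields a distinct second preimage $x' = x \oplus (x \oplus x')$, contradicting the second preimage resistance (\autoref{def:second_preimage_resistant}) assumed in \autoref{protocol:concrete_c_q_factory}.

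The main obstacle is adapting the classical Goldreich-Levin argument to this $\mathbb{Z}_8$-valued setting. Concretely, one must carefully isolate the $\mathbb{F}_2$-linear content of each bit of $\theta$ while tracking the carry bits introduced by the mod-$8$ addition, and the extracted quantity is the symmetric $x \oplus x'$ rather than a single preimage, which is precisely what forces the reduction to target second preimage resistance rather than plain one-wayness of $f_k$. Once these two technicalities are handled, the GL list-decoding machinery applies essentially unchanged, producing the negligible bound on the distinguishing advantage required by \autoref{def:private}.
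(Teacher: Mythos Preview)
Your proposal is correct and follows essentially the same strategy as the paper: establish that the classical description $\theta$ is a hard-core function for $f_k$ via the Vazirani--Vazirani reduction (\autoref{thm:VV}) to individual predicates, isolate the $\mathbb{F}_2$-linear part of each bit while tracking the carry terms from the mod-$8$ arithmetic, and run a Goldreich--Levin-style extraction that targets second preimage resistance (since what is recovered is $x \oplus x'$ rather than a single preimage). The paper's organisation differs in two respects worth noting. First, it cleanly separates the two claims in the theorem: privacy with respect to the ideal protocol (\autoref{def:private}) is shown to hold \emph{exactly} ($\delta = 0$), independently of any hard-core argument, by invoking the no-extra-information lemma of \cite{KW2017} to build the simulator; the hard-core property (stated and proved as the separate \autoref{Thm:hardcore}) is used only, via \autoref{lemma:hardcore}, to show that the corresponding ideal protocol is itself a PSRQG. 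Your sentence ``Eq.~(\ref{eq:private_real_simul}) reduces to the PSRQG distinguishing advantage'' conflates these two steps. Second, the paper's simulator does not forward the ideal functionality's quantum output: it simply reruns the honest protocol on the public inputs $(k,\alpha)$, which suffices because the client's secret $t_k$ plays no role until the final abort decision, and only that single bit is obtained from the ideal call. Your direct simulator is also valid for a QHBC adversary, but the no-extra-information route handles more systematically any ancillary registers a $0$-specious adversary may keep, and makes the step-by-step simulation (rather than just the final view) explicit.
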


Before proving the privacy with respect to the ideal functionality (see below for construction of simulators), the first step is to show that the corresponding ideal protocol (\autoref{def:ideal}) is a PSRQG. By \autoref{lemma:hardcore} this reduces to proving that the classical description is a hard-core function with respect to $f_k$.

\begin{theorem}\label{Thm:hardcore}
The function $\theta$ given here
\EQ{\theta=\frac{\pi}{4}\left(\sum_{i=1}^{n-1}(x_i-x_i')(4b_i+\alpha_i)\right) \bmod 8}
as was defined in \autoref{protocol:concrete_c_q_factory}, is a hard-core function with respect to $f_k$.\\
NB: here the collision resistance is not needed and is replaced by the weaker second preimage resistance property.
\end{theorem}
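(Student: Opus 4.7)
The plan is to adapt the Goldreich–Levin reduction to our setting, turning any QPT adversary that distinguishes $\tilde{B} := 4\theta/\pi \in \{0,\dots,7\}$ from the uniform distribution on $\{0,\dots,7\}$ into one that violates second-preimage resistance of $f_k$. First I will reparametrize: set $c_i := (4 b_i + \alpha_i) \bmod 8$. An enumeration over the $16$ equally likely pairs $(b_i, \alpha_i)$ shows that each element of $\mathbb{Z}/8$ is hit exactly twice, so the $c_i$ are independent and uniform on $\mathbb{Z}/8$; the Server's classical view is then equivalent to $(k, y, c)$, and the hidden value is $S := \sum_{i=1}^{n-1} (x_i - x_i')\, c_i \bmod 8 = \tilde B$. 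By the Vazirani–Vazirani XOR-condition theorem (\autoref{thm:VV}), it will then suffice to show that, writing $\tilde B = B_2 B_1 B_0$ in binary, every non-empty XOR $\bigoplus_{j \in T} B_j$ with $T \subseteq \{0,1,2\}$ is a hard-core predicate for $f_k$.

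The least-significant bit is the clean case. Reducing $S$ modulo $2$ kills every $4 b_i$ contribution and every carry, leaving
\[
B_0 \;=\; \sum_i (x_i - x_i')\, c_i \bmod 2 \;=\; \langle x \oplus x',\, c^{(0)}\rangle \bmod 2,
\]
where $c^{(0)} \in \{0,1\}^{n-1}$ collects the LSBs of $c$ and is uniform, independent of the higher-order bits. This is literally the Goldreich–Levin inner product with $d := x \oplus x'$ playing the role of the hidden string: a QPT predictor with inverse polynomial advantage on $B_0$, when fed into the Goldreich–Levin list decoder, outputs $d$ with inverse polynomial probability. Plugging into the second-preimage game on input $(k, x)$ then proceeds by computing $y := f_k(x)$, running the list decoder to obtain $d$, and outputting $x' := x \oplus d$; the resulting $x' \neq x$ with $f_k(x') = y$ contradicts \autoref{def:second_preimage_resistant}.

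The principal obstacle is the remaining XORs --- $B_1$, $B_2$, and those involving $B_0 \oplus B_1$ and so on --- because carries in $\mathbb{Z}/8$ arithmetic make these bits non-linear over $\mathbb{Z}/2$ in $d$, so a verbatim GL reduction does not apply. I will handle each such XOR $L(\tilde B)$ by exploiting the controllable-shift structure of $S$: a perturbation $c_j \mapsto c_j + \delta$ changes $S$ by exactly $\delta(x_j - x_j')$, so the response of a predictor for $L(\tilde B)$ to well-chosen perturbations isolates the bit $d_j$ up to a ``carry bias'' that can be averaged away over the remaining coordinates. The technical goal will be to convert an adversary with $\epsilon$-advantage on $L(\tilde B)$ into an inverse-polynomial-advantage predictor for some $\mathbb{Z}/2$-linear function $\langle d, v\rangle \bmod 2$, by partitioning over the auxiliary randomness in the GL style; the Goldreich–Levin extractor then recovers $d$ and the previous paragraph's second-preimage reduction concludes. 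Once every non-empty XOR of $B_0, B_1, B_2$ is known to be a hard-core predicate, a final appeal to \autoref{thm:VV} yields that $\tilde B$, and hence $\theta$, is a hard-core function for $f_k$.
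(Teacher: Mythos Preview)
Your overall architecture is the same as the paper's: reduce to second-preimage resistance via Goldreich--Levin, and use \autoref{thm:VV} to treat the three bits and their XORs one predicate at a time. Your treatment of the least-significant bit $B_0$ is also essentially the paper's argument for $\widetilde{B}_3$. The reparametrisation $c_i=(4b_i+\alpha_i)\bmod 8$ is fine and loses nothing, since the bit-decomposition $c_i^{(0)}=\alpha_i^{(3)}$, $c_i^{(1)}=\alpha_i^{(2)}$, $c_i^{(2)}=b_i\oplus\alpha_i^{(1)}$ shows the three bit-layers remain independent and uniform.

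The gap is in the handling of $B_1$, $B_2$ and the mixed XORs. Your ``controllable-shift'' plan --- perturb $c_j\mapsto c_j+\delta$, observe that $S$ moves by $\delta(x_j-x_j')$, and hope the predictor's response isolates $d_j$ --- is not a proof as stated, and it is not clear it can be made into one. Two concrete obstructions: first, when $d_j=1$ the shift is $\pm\delta$ depending on the unknown sign of $x_j-x_j'$, so a single perturbation cannot distinguish the two non-zero cases and the predictor's bit may flip in opposite directions; second, the phrase ``carry bias that can be averaged away over the remaining coordinates'' is doing all the work and is never made precise --- the carries into bit $1$ and bit $2$ of $S$ are highly non-linear functions of \emph{all} the $c_i$'s and of the signs $x_i-x_i'$, and you give no mechanism by which averaging over $c_{\neq j}$ yields an unbiased estimator of $d_j$ (or of any fixed $\langle d,v\rangle$) with inverse-polynomial advantage.

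The paper avoids this entirely by a different, more explicit route. Writing $\alpha_i=4\alpha_i^{(1)}+2\alpha_i^{(2)}+\alpha_i^{(3)}$, it computes directly that
\[
\widetilde{B}_3=\langle\tilde{x},\alpha^{(3)}\rangle\bmod 2,\qquad
\widetilde{B}_2=\langle\tilde{x},\alpha^{(2)}\rangle\bmod 2\;\oplus\;h_2(z,\alpha^{(3)}),\qquad
\widetilde{B}_1=\langle\tilde{x},\alpha^{(1)}\rangle\bmod 2\;\oplus\;h_1(z,\alpha^{(2)},\alpha^{(3)},b),
\]
with explicit closed forms for $h_1,h_2$. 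The crucial structural point is that each bit is a clean GL inner product in a \emph{fresh independent} random string $\alpha^{(j)}$, XORed with a function of the \emph{other} random strings only. Any non-empty XOR of the $\widetilde{B}_j$'s then has the same shape: $\langle\tilde{x},r_1\rangle\oplus g'(z,r_2,r_3)$ with $r_1$ independent of $(r_2,r_3)$. A simple averaging lemma (\autoref{lemma:proof1}) lets one \emph{fix} $x,r_2,r_3$ to ``good'' values at a polynomial cost in the advantage; once $r_2,r_3$ are fixed the nuisance term $g'(z,r_2,r_3)$ is a constant $c$, and the predictor $\mathcal{A}\oplus c$ is a genuine GL predictor for $\langle\tilde{x},r_1\rangle$. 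This is the step your perturbation sketch is trying to replace, and it is where your argument needs to be completed --- either by recovering this decomposition or by actually carrying out the carry analysis you allude to.
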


\begin{proof}[Sketch Proof of Theorem \ref{Thm:hardcore}]
In \autoref{protocol:concrete_c_q_factory}, the adversary (Server) can only use  the classical information that he possesses ($k,y,\alpha,b$) in  order to try and guess with some probability the value of $\theta$  in the case that there is no abort. Since the adversary follows the honest protocol, the choices of $y,b$ are truly random (and not  determined by the adversary as he could in the malicious case).

\noindent\textbf{Outline of sketch proof:} We first express the classical description of the state into expressions for each of the  corresponding three bits. The aim is to prove that it is impossible to distinguish the sequence of these three bits from three random bits with non-negligible probability. To show this we follow five steps. In \textbf{Step 1} we express each of the the bits as a sum
  mod two, of an inner product (of the form present in GL theorem) and some other terms. In \textbf{Step 2} we show that guessing the sum  modulo two of the two preimages breaks the second preimage resistance of the function and thus is impossible. We assume that the  adversary can achieve some inverse polynomial advantage in guessing  certain predicates and in the remaining steps we show that in that  case he can obtain a polynomial inversion algorithm for the one-way function $f_k$, and thus reach the contradiction. In \textbf{Step 3} we use the Vazirani-Vazirani \autoref{thm:VV} to reduce the proof of hard-core function to a number of single hard-core bits (predicates). In \textbf{Step 4} we use a Lemma that allows us  to fix all but one variable in each expression, with an extra cost
that is an inverse polynomial probability and therefore the (fixed variables) guessing algorithm still needs to have negligible success  probability. Finally, in \textbf{Step 5}, we reduce all the  predicates in a form of a known hard-core predicate XOR with a function that involves variables not included in that predicate. Using the previous step, it reduces to guessing the XOR of a hard-core predicate with a constant, which is bounded by the probability of guessing the (known to be hard-core) predicate.

Here we give the sketch described above, while the full proof can be found in the \autoref{app:hardcore}. Let us start by defining

\EQ{\label{eq:proof1}\widetilde{B}=\widetilde{B}_1\widetilde{B}_2\widetilde{B}_3=\left(\sum_{i=1}^{n-1}(x_i-x_i')(4b_i+\alpha_i)\right)\bmod 8}
where $\widetilde{B}_i$ are single bits. Moreover, we treat $x,x'$ as vectors in $\{0,1\}^n$; we define $\alpha^{(j)}=(\alpha_1^{(j)},\cdots,\alpha_{n-1}^{(j)})$ the vector that involves the $j\in\{1,2,3\}$ bit of each of three-bit strings $\alpha$, and we define $\tilde{x}:=x\oplus x'$. We define $z$ as a vector in $\{-1,0,1\}^n$ defined as the element-wise differences of the bits of $x$ and $x'$, i.e. $z_i=x_i-x_i'$. Finally, as in GL theorem, we use the notation for the inner product $\langle a,b\rangle=\sum_{i=1}^{n-1} a_ib_i$.

We will prove that any QPT adversary $\mathcal{A}$ having all the classical information that Server has $(y,\alpha,b)$, can guess $\widetilde{B}$ with at most negligible probability 

\EQ{\underset{\substack{ x \leftarrow \{0, 1\}^n \\ \alpha \leftarrow \{0, 1\}^{3n} \\ b \leftarrow \{0,1\}^n }} {\Pr} [\mathcal{A}(f(x), {\alpha}^{(1)}, {\alpha}^{(2)}, {\alpha}^{(3)}, b) = \widetilde{B_1}\widetilde{B_2}\widetilde{B_3}] \leq \frac18+\negl
}
where for simplicity we denote the function $f$ instead of $f_k$. This means that the adversary $\mathcal{A}$ cannot distinguish $\widetilde{B}$
from a random three-bit string with non-negligible probability and thus
\autoref{protocol:concrete_c_q_factory} is PSRQG as given in
\autoref{def:c_q_factory}.

\noindent\textbf{Step 1:} We decompose Eq. (\ref{eq:proof1}) into
three separate bits, and use the variable $\tilde{x},z$ defined above.

\EQ{\label{eq:proof2}
\widetilde{B}_3&=&\langle\tilde{x},\alpha^{(3)}\rangle\bmod 2 \nonumber\\
\widetilde{B}_2&=& \langle\tilde{x},\alpha^{(2)}\rangle\bmod 2\oplus h_2(z,\alpha^{(3)})\nonumber\\
\widetilde{B}_1&=&\langle\tilde{x},\alpha^{(1)}\rangle\bmod 2\oplus h_1(z,\alpha^{(3)},\alpha^{(2)},b)
}
where the derivation and exact expressions for the functions $h_1,h_2$
are given in \autoref{app:hardcore}. We notice from
Eq. (\ref{eq:proof2}) that each bit includes a term of the form
$\langle \tilde{x},\alpha^{(i)}\rangle\bmod 2$ which on its own is a
hard-core predicate following the GL theorem.

\noindent\textbf{Step 2:} By the second preimage resistance we have:

\EQ{\label{eq:proof3}& &\underset{x \leftarrow \{0,1\}^n} {\Pr} [\mathcal{A}(1^n, x) = x' \text{ such that } f(x) = f(x') \text{ and }x\neq x'] \leq \negl \Rightarrow \nonumber\\
& &\underset{x \leftarrow \{0,1\}^n} {\Pr} [\mathcal{A}(1^n, x) = x' \oplus x=\tilde{x}]\leq \negl
}
For each bit $j\in\{1,2,3\}$, separately we assume that the adversary can achieve an advantage in guessing the $\tilde{x}$ which is $\frac12+\eps_j(n)$. Then, similarly to GL theorem, we prove that if this $\eps_j(n)$ is inverse polynomial, this leads to contradiction with Eq. (\ref{eq:proof3}) since one can obtain an inverse-polynomial inversion algorithm for the one-way function $f$.

\noindent\textbf{Step 3:} While each bit includes terms that on its own it would make it hard-core predicate (as stated in Step 1), 
if we XOR the overall bit with other bits it could destroy this property. To proceed with the proof that $\widetilde{B}$ is hard-core function we use the Vazirani-Vazirani theorem which states that it suffices to show that individual bits as well as combinations of XOR's of individual bits are all hard-core predicates. In this way one evades the need to show explicitly that the guesses for different bits are not correlated. To proceed with the proof, we use a trick that ``disentangles'' the different variables.

\noindent\textbf{Step 4:} We would like to be able to fix one variable and vary only the remaining, while in the same time maintain some bound on the guessing probability.

The advantage $\eps_j(n)$ that we assume the adversary has for guessing one bit (or an XOR) is calculated ``on average'' over all the random choices of $(\tilde{x},\alpha^{(i)},b)$. Using \autoref{lemma:proof1} we can fix one-by-one all but one variable (applying the lemma iteratively, see \autoref{app:hardcore}). With suitable choices, the cardinality of the set of values that satisfies all these conditions is $O(2^n\eps_j(n))$ for each iteration. Unless $\eps_j(n)$ is negligible, this size is an inverse polynomial fraction of all
values. This suffices to reach the contradiction. The actual inversion probability that we will obtain is simply a product of the extra cost of fixing the variables with the standard GL inversion probability. This extra cost is exactly the ratio of the cardinality of the $\text{Good}$ sets (defined below) to the set of all values and is $O(\eps_{v_i}(n))$.

\begin{lemma}\label{lemma:proof1}
Let $\underset{(v_1,\cdots,v_k)| v_j\leftarrow\{0,1\}^n \forall j}{\Pr}[\text{Guessing}]\geq p+\eps(n)$, then for any variable $v_i$, $\exists$ a set $\text{Good}_{v_i}\subseteq\{0,1\}^n$ of size at least $\frac{\eps(n)}{2}2^n$, where $\forall$ $v_i\in \text{Good}_{v_i}$

\[
\underset{(v_1,\cdots,\cancel{v_i},\cdots,v_k)| v_j\in\{0,1\}^n}{\Pr}[\text{Guessing}]\geq p+\frac{\eps(n)}2
\]
where the probability is taken over all variables except $v_i$. 
\end{lemma}

\noindent\textbf{Step 5:} If the expression we wish to guess involves
XOR of terms that depend on different variables, then by using Step 4
we can fix the variables of all but one term. Then we note that trying
to guess a bit (that depends on some variable and has expectation value close to $1/2$) is at least as hard as
trying to guess the XOR of that bit with a constant. For example, if
the bit we want to guess is
$\langle \widetilde{x}, r_1\rangle \bmod 2 \oplus h(z, r_2, r_3)]$ and
we have a bound on the guessing probability where only $r_1$ is
varied, then we have: \footnote{Here and in the full proof, when we compare winning probabilities for QPT adversaries, it is understood that we take the adversary that maximises these probabilities.}

\EQ{&\underset{\substack{r_1 \leftarrow \{0, 1\}^{n}}} {\Pr} [\mathcal{A}(f(x), r_1, r_2, r_3) = \langle \widetilde{x}, r_1\rangle \bmod 2 \oplus h(z, r_2, r_3)]\leq \nonumber\\ & \underset{\substack{r_1 \leftarrow \{0, 1\}^{n}}} {\Pr} [\mathcal{A}(f(x), r_1, r_2, r_3) = \langle \widetilde{x}, r_1\rangle \bmod 2]\nonumber 
 } 

We note that all bits of $\widetilde{B}$ and their XOR's can be
brought in this form. Then using this, we can now prove security, as
the r.h.s. is exactly in the form where the GL theorem provides an
inversion algorithm for the one-way function $f$. For details, see \autoref{app:hardcore}.
\end{proof}

We now return and prove \autoref{Thm:privacy}.

\begin{proof}[Proof of \autoref{Thm:privacy}]
In the proof we use QHBC adversaries, but following closely the more general Specious adversaries, we can see that it would easily generalise. The proof has two steps. In the \textbf{first step} of the proof, using the operators $\mathcal{T}_i$ (as per definition of specious) and the existence of certain fixed state (see below) the simulator can reproduce the real view of the Server, if he can reproduce the honest state $\rho_i(\rho_{in})$ of the corresponding part of the protocol. The \textbf{second step} of the proof is to notice that apart from the last step of the protocol (decision to abort or not), the (only) secret input $t_k$ of the Client plays no role, and thus the simulator can reproduce the view of the Server without calling the ideal functionality. Finally, the simulator of the last step of the protocol, calls the ideal functionality (and thus $q_i=1$ in Eq. (\ref{eq:simulator_defn})) and receives the decision to abort (without access to the secret $t_k$).

\noindent \textbf{Step 1:} We use the no-extra information lemma from \cite{KW2017}:

\begin{lemma}[No-extra Information (from \cite{KW2017})]\label{rushing1}
Let $\Pi_U=(A,B,n)$ be a correct protocol for two party evaluation of $U$. Let $\tilde A$ be any $\epsilon$-specious
adversary. Then there exists an isometry $T_i:\tilde A_i\rightarrow A_i\otimes \hat{A}$ and a (fixed) mixed state $\hat{\rho_i}\in D(\hat{A_i})$ such that for all joint input states $\rho_{in}$,

\EQ{\label{eq:rushing}
\Delta\left((T_i\otimes\mathbb{I})(\tilde\rho_i(\tilde A,\rho_{in})),\hat{\rho_i}\otimes\rho_i(\rho_{in})\right)\leq 12\sqrt{2\epsilon}
}
where $\rho_i(\rho_{in})$ is the state in the honest run and $\tilde\rho_i(\tilde A,\rho_{in})$ is the real state (with the specious adversary $\tilde A$).
\end{lemma}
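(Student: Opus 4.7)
The plan is to upgrade the defining property of an $\epsilon$-specious adversary (which only provides a CP map witnessing closeness to honest behaviour) into the stronger statement that the adversary's internal state factorises, approximately, into the honest state tensored with an input-independent ancilla. First, I would apply the Stinespring dilation to the CP map $\mathcal{T}_i:L(\tilde{\mathcal{A}}_i)\to L(\mathcal{A}_i)$ provided by Definition \ref{def:specious}, obtaining an isometry $T_i:\tilde{\mathcal{A}}_i\to \mathcal{A}_i\otimes\hat{\mathcal{A}}$ with $\mathcal{T}_i(X)=\Tr_{\hat{\mathcal{A}}}(T_iXT_i^\dagger)$. Setting $\sigma_i(\rho_{in}):=(T_i\otimes\mathbb{I})\,\tilde\rho_i(\tilde A,\rho_{in})\,(T_i^\dagger\otimes\mathbb{I})$, the specious bound (\ref{eq:specious}) translates directly to $\Delta(\Tr_{\hat{\mathcal{A}}}\sigma_i(\rho_{in}),\rho_i(\rho_{in}))\leq\epsilon$ for every input.

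Next I would pass to purifications on a reference register large enough to purify $\rho_{in}$ together with all the internal ancillas used by the adversary and the honest parties, so that both $\tilde\rho_i$ and $\rho_i$ become pure states $\ket{\tilde\Psi_i(\rho_{in})}$ and $\ket{\Psi_i(\rho_{in})}$. Then $\ket{\Omega_i(\rho_{in})}:=(T_i\otimes\mathbb{I})\ket{\tilde\Psi_i(\rho_{in})}$ is a pure state whose marginal on $\mathcal{A}_i\otimes\mathcal{B}_i\otimes\mathcal{R}$ lies within trace distance $\epsilon$ of $\ket{\Psi_i}\bra{\Psi_i}$. Translating to fidelity via Fuchs--van de Graaf and invoking Uhlmann's theorem yields, for each input, a pure state $\ket{\xi(\rho_{in})}\in\hat{\mathcal{A}}$ with $\Delta\bigl(\ket{\Omega_i(\rho_{in})}\bra{\Omega_i(\rho_{in})},\ket{\Psi_i(\rho_{in})}\bra{\Psi_i(\rho_{in})}\otimes\ket{\xi(\rho_{in})}\bra{\xi(\rho_{in})}\bigr)\leq\sqrt{2\epsilon}$.

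The crux of the argument, and where the main obstacle lies, is replacing the input-dependent $\ket{\xi(\rho_{in})}$ by a genuinely fixed $\hat{\rho}_i$. The key observation is that the map $\rho_{in}\mapsto\sigma_i(\rho_{in})$ is implemented by a single fixed isometric extension of the joint adversary--honest strategy followed by $T_i$, so any input-dependence must flow through the honest registers. Concretely, I would fix a reference input $\rho_{in}^{\ast}$ and define $\hat{\rho}_i:=\Tr_{\mathcal{A}_i\mathcal{B}_i\mathcal{R}}\sigma_i(\rho_{in}^{\ast})$. For a generic $\rho_{in}$, I would connect the two inputs through their purifications on $\mathcal{R}$, exploit the fact that any operation on the purifying register commutes with $T_i$ (which acts only on $\tilde{\mathcal{A}}_i$), and apply the Uhlmann isometry on $\hat{\mathcal{A}}$ in a way that does not depend on $\rho_{in}$. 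Combining this with the reference-input identification by triangle inequality converts the input-dependent $\ket{\xi(\rho_{in})}$ into the fixed state $\hat\rho_i$.

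The final constant $12\sqrt{2\epsilon}$ would be extracted by a careful bookkeeping: one Uhlmann step for the reference input, one for the generic input, the trace-distance/fidelity conversion $\sqrt{2\epsilon}$ at each step, and a handful of triangle inequality applications to pass through an intermediate hybrid in which the Uhlmann unitary has been made input-independent. The hardest technical point will be making the input-independence of the Uhlmann isometry rigorous: it amounts to showing that the only way the adversary's extension can correlate $\hat{\mathcal{A}}$ with the input is via $\mathcal{A}_i$ itself, which would contradict the hypothesis that the $\mathcal{A}_i$-marginal is $\epsilon$-close to the honest state for \emph{every} input simultaneously.
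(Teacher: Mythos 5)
First, note that the paper does not prove this lemma at all: it is imported verbatim from \cite{KW2017} (and originates in \cite{DNS10}), so there is no in-paper proof to compare against; your attempt has to be judged against the original argument. Your overall strategy (Stinespring-dilate the specious map $\mathcal{T}_i$ to an isometry $T_i$, purify everything, and use Fuchs--van de Graaf plus Uhlmann to obtain an ancilla state on $\hat{\mathcal{A}}$ at distance $O(\sqrt{\epsilon})$) is indeed the right skeleton and matches the original proof up to that point.

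The genuine gap is exactly the step you flag as ``the crux'': you never actually make the ancilla input-independent, and the plan you sketch would fail as stated. Fixing an \emph{arbitrary} reference input $\rho_{in}^{\ast}$ and defining $\hat{\rho}_i$ from it does not work, because for a generic $\rho_{in}$ there is no operation on the purifying register $\mathcal{R}$ alone that transforms the run on $\rho_{in}^{\ast}$ into the run on $\rho_{in}$ (e.g.\ if $\rho_{in}^{\ast}$ is a pure product state, nothing can be steered from $\mathcal{R}$). The original argument chooses the reference run to be the protocol executed on (half of) a \emph{maximally entangled} state between $\mathcal{A}_0\otimes\mathcal{B}_0$ and $\mathcal{R}$; then every joint input state is obtained from that single run by a completely positive map acting on $\mathcal{R}$ only, which commutes with everything $T_i$, the adversary and the honest parties do, and which can only decrease trace distance. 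This is what makes the state $\hat{\rho}_i$, defined once from the maximally entangled run, simultaneously good for \emph{all} inputs, and it is also where the triangle-inequality bookkeeping producing the constant $12\sqrt{2\epsilon}$ lives (Uhlmann at the reference run, monotonicity under the steering map, and a hybrid step). Your appeal to ``the only way the adversary's extension can correlate $\hat{\mathcal{A}}$ with the input is via $\mathcal{A}_i$ itself'' is not an argument one can complete as written; without the maximally entangled input (or an equivalently faithful one) the input-independence of the Uhlmann isometry is precisely what remains unproven, so the proposal as it stands does not establish the lemma.
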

By setting $\epsilon=0$ (as QHBC) and using the inverse of the isometry $T_i$, we have\footnote{We denote the two parties $C$ for Client and $S$ for Server and their corresponding spaces (instead of the generic $A,B$ used in the definitions).}

\EQ{
\tilde\rho_i(\tilde S,\rho_{in})=T_i^{-1}\otimes\mathbb{I}(\hat{\rho_i}\otimes\rho_i(\rho_{in}))
}
and the operation $S_i$ of the simulator for \emph{any} step, consists of generating $\rho_i(\rho_{in})$ (see next part of the proof), tensor it with the fixed state $\hat{\rho_i}$ and apply the inverse of the isometry $T_i$. This recovers exactly the real state $\tilde\rho_i(\tilde S,\rho_{in})$ and thus tracing out the system of the Client to obtain the simulated view $\nu_i(\tilde S,\rho_{in})$ gives  $(\delta=0)$-private with respect to the ideal protocol (see Eq. (\ref{eq:private_real_simul})).

\noindent \textbf{Step 2:} We give below the honest states at the two steps of the protocol before the Server (classically) communicates with the Client, noting that a simulator (with no access to the private information $t_k$) could interact with the Server (instead of the Client) just following the normal steps of the protocol, using the public inputs ($k,\alpha$).

\begin{itemize}
\item State after the Server measures the second register:

\EQ{
\left(\ket{k,t_k,\alpha}\right)_C\otimes\left(\ket{k,\alpha}\otimes (\ket{x}+\ket{x'})\otimes\ket{y}\right)_S
}

\item State after the Server measures the first registers in $\alpha$ angles:

\EQ{
\left(\ket{k,t_k,\alpha,y}\right)_C\otimes\left(\ket{k,\alpha,y}\otimes \ket{b}\otimes\ket{Output}\right)_S
}
where $\ket{Output}=\ket{+_\theta}$ if there is no abort, while $\ket{Output}=\ket{x_n}$ otherwise.
\end{itemize}
The final state is

\EQ{\rho_f(\rho_{in})&=&
\left(\ket{k,t_k,\alpha,b}\right)_C\otimes\left(\ket{k,\alpha,y,b}\otimes\ket{+_\theta}\otimes\ket{\mathsf{no- abort}}\right)_S \ \textrm{if no abort}\nonumber\\
&=& \left(\ket{k,t_k,\alpha,b}\right)_C\otimes\left(\ket{k,\alpha,y,b}\otimes\ket{x_n}\otimes \ket{\mathsf{abort}}\right)_S \ \textrm{if abort}
}
To obtain the corresponding view, the Simulator calls the ideal functionality, but only uses the $\mathsf{abort}/\mathsf{no-abort}$ decision, and otherwise acts as in previous steps: Generates the state $\hat{\rho}_f$ (from the no-extra information lemma), obtains the final state $\rho_f(\rho_{in})$ by running the actual protocol until the previous step and adding the extra register $\ket{\mathsf{abort}}/\ket{\mathsf{no-abort}}$, and then applies the inverse of the isometry $T_f$ and traces out the Client's registers. Note that, as given in the definitions, all operators used correspond to polynomially-sized quantum circuits and therefore the Simulator is also QPT.
\end{proof}

Before moving to the constructions of trapdoor functions with the required properties and discussing the malicious case, we need to make an important observation. The ideal Protocol 
\ref{ideal:q_factory} other than the classical information $(k,y,\alpha,b)$, returns the state $\ket{+_\theta}$ to the Server. The security of our real \autoref{protocol:concrete_c_q_factory} that we proved is with respect to the ideal protocol (i.e. no information beyond that of the ideal protocol is obtained). However, having access to (a single copy) of the state $\ket{+_\theta}$ can (and does) give some non-negligible information on the classical description of that specific $\theta$. For example, by making a measurement one can rule-out one of the eight states with certainty. This, naively, would appear to be in contradiction with the properties of the function we have (where we prove that one can have only negligible advantage in guessing $\theta$). This is no different from the SRQG functionality, that the server can obtain some information on $r_m$. The resolution to this apparent contradiction, is that the basis of the proof of the hard-core property of $\theta$ with respect to the function,  is that one can repeat the same guessing algorithm keeping same $x$ (or $y$) but varying $\alpha$'s. However, to obtain any information from the (output) qubit, one needs to measure it and disturb it. Then repeating the experiment the probability of obtaining the same $y$ a second time (and thus having prepared the same $\theta$) is negligible for any QPT adversary (if one can repeat only polynomial number of times). Therefore, this one-shot extra information on $\theta$, cannot be distinguished from a one-shot information on a truly random $r_m$.

\section{Function Constructions}\label{Sec:Functions}

For our \autoref{protocol:concrete_c_q_factory} we need a trapdoor one-way function that is also quantum-safe, two-regular and second preimage resistant (or the stronger collision resistance property). These properties may appear to be too strong to achieve, however, we give here methods to construct functions that achieve these properties starting from trapdoor one-way functions that have fewer (more realistic) conditions, and we specifically give one example that achieves \emph{all} the desired properties. In particular we give:

\begin{itemize}
\item A general construction given either (i) an injective,
 homomorphic (with respect to any operation\footnote{in particular it is only required to be homomorphic once - rephrase this}) trapdoor one-way function or (ii) a bijective trapdoor one-way function, to obtain a two-regular, second preimage resistant\footnote{In (i) we prove the stronger collision-resistant property.},
trapdoor one-way function. In both cases the quantum-safe property is maintained (if the initial function has this property, so does the constructed function).
\item (taken from \cite{MP2012}) A method of how to realise injective quantum-safe trapdoor functions derived from the \LWE{} problem, that has certain homomorphic property. 
\item A way to use the first construction with the trapdoor from \cite{MP2012} that requires a number of modifications, including relaxation of the notion of two-regularity. The resulting function satisfy all the desired properties if a choice of parameters that satisfy multiple constraints, exists. 
\item A specific choice of these parameters, satisfying all constraints, that leads to a concrete function with all the desired properties.
\end{itemize}

\subsection{Obtaining two-regular, collision resistant/second preimage resistant, trapdoor one-way functions}\label{sec:general_two_regular}
Here we give two constructions. The first uses as starting point an injective, homomorphic trapdoor function while the second a bijective trapdoor function. While we give both constructions, we focus on the first construction since (i) we can prove the stronger collision-resistance property and (ii)
(to our knowledge) there is no known bijective trapdoor function that is believed to be quantum-safe.

\begin{theorem}\label{Thm:injective_to_desired}
If $\mathcal{G}$ is a family of injective, homomorphic, trapdoor one-way functions, then there exists a family $\mathcal{F}$ of two-regular, collision resistant, trapdoor  one-way functions. Moreover the family $\mathcal{F}$ is quantum-safe if and only if the family $\mathcal{G}$ is quantum-safe.
\end{theorem}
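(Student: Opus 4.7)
\medskip

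\noindent\textbf{Proof plan (sketch).} The construction is to add one control bit to the domain of the given family. Given $g_k : D \to R$ injective, trapdoor one-way and homomorphic with respect to a group operation $\cdot$ on $D$ (lifted to $R$), I augment the key generation of $\mathcal{G}$ as follows: sample $x_0 \leftarrow D$ uniformly, compute $y_0 := g_k(x_0)$, and publish the index $(k, y_0)$ while storing $(t_k, x_0)$ as the trapdoor. Define
\[
f_{(k,y_0)}(x, c) \;=\; g_k(x) \cdot y_0^{\,c}, \qquad (x,c)\in D\times\{0,1\}.
\]
By the homomorphism, $f_{(k,y_0)}(x,1) = g_k(x\cdot x_0)$, so for every $y \in \Image(g_k)$ the preimages under $f$ are exactly $(g_k^{-1}(y),0)$ and $(g_k^{-1}(y)\cdot x_0^{-1},1)$: two distinct elements, giving two-regularity. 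Inversion with the trapdoor is immediate: invert $y$ under $g_k$ using $t_k$ to recover $x$, then output $\{(x,0),(x\cdot x_0^{-1},1)\}$.

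Next I verify one-wayness by reduction. Given a one-way challenge $y^* = g_k(x^*)$ for $g_k$, the reduction samples $x_0$ itself, sets $y_0 = g_k(x_0)$, and hands $(k,y_0,y^*)$ to an $f$-inverter $\cA$. The distribution of $y^*$ matches the distribution of images of $f_{(k,y_0)}$ (both are uniform over $\Image(g_k)$, since $f$ is two-regular with $\Image(f) = \Image(g_k)$). If $\cA$ returns $(x,0)$ then $x=x^*$; if it returns $(x,1)$ then $x^* = x\cdot x_0$, computable by the reduction. Thus a non-negligible advantage against $f$ yields the same advantage against $g_k$, contradicting the one-wayness of $\mathcal{G}$.

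For collision resistance, suppose an adversary produces $(x_1,c_1)\neq(x_2,c_2)$ with $f_{(k,y_0)}(x_1,c_1) = f_{(k,y_0)}(x_2,c_2)$. If $c_1=c_2$ injectivity of $g_k$ forces $x_1 = x_2$, a contradiction; so $c_1 \neq c_2$, and rearranging with the homomorphism gives $g_k(x_1 \cdot x_2^{-1}) = y_0$ (up to swapping indices to absorb the sign of the exponent). Since $y_0$ is the one-way challenge and $x_0$ is information-theoretically hidden among the preimages of $y_0$ (here I use that $g_k$ is injective so $x_0 = x_1\cdot x_2^{-1}$ is the unique preimage), the adversary has inverted $g_k$ on a uniformly random image, again contradicting one-wayness of $\mathcal{G}$. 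Finally, since every reduction above is a uniform polynomial-time procedure that makes a single oracle/adversary call and does no quantum postprocessing beyond what $\cA$ already does, a QPT adversary against $f$ yields a QPT adversary against $g_k$, so quantum-safety is preserved; the converse (quantum-safety of $f$ implies quantum-safety of $g$) is trivial since $g_k$ can be emulated by calling $f_{(k,y_0)}(\cdot,0)$.

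The only real subtlety I expect is the bookkeeping around the group operation: the statement allows ``any operation,'' so I should be explicit that what is used is a group structure on $D$ (with $g_k$ extended to a homomorphism on images) so that $x_0^{-1}$ and the rewriting $g_k(x)\cdot y_0 = g_k(x\cdot x_0)$ both make sense. If only a one-shot homomorphism (a single multiplication by $y_0$) is available, the same argument goes through by reinterpreting $\cdot x_0^{-1}$ as the inversion-side partner map supplied by the trapdoor — this is the only place where care is needed, and it is routine once the algebraic setting is fixed.
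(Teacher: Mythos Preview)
Your proposal is correct and follows essentially the same approach as the paper: the construction $f_{(k,y_0)}(x,c)=g_k(x)\cdot y_0^{\,c}$ with $y_0=g_k(x_0)$ is exactly the paper's {\tt FromInj} construction (written there with abstract operations $\square,\star,\triangle$), and your reductions for one-wayness and collision resistance mirror the paper's Lemmas~6.4 and~6.5. The only detail the paper handles that you omit is sampling $x_0$ from $D\setminus\{0\}$ rather than all of $D$, to guarantee the two preimages are genuinely distinct in the first coordinate; this is harmless for your security arguments (the identity is hit with negligible probability) but worth stating for the two-regularity claim.
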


From now on, we consider that any function  $g_k \in \mathcal{G}$ has domain $D$ and range $R$ and let $\square$ be the closed operation on $D$ and $\star$ be the closed operation on $R$ such that $g_k$ is the morphism between $D$ and $R$ with respect to  these 2 operations: 
$$ g_k(a) \star g_k(b) = g_k(a \, \square \, b) \, \, \, \, \forall a,b \in D $$
We also denote the operation $\triangle$ on $D$, the inverse operation of $\square$, specifically: $ a \, \square \, b^{-1} = a \, \triangle \, b \, \, \, \, \forall a,b \in D $ and $0$ be the identity element for $\square$. \\
Then, the family $\mathcal{F}$ is described by the following PPT algorithms: \\

\procedure [linenumbering]{{\tt FromInj.Gen}$_{\mathcal{F}}(\secparam) $}{
	(k, t_k) \sample Gen_{\mathcal{G}}(\secparam) \, \, \,\, \, \, \pccomment{ $k$ is an index of a function from $\mathcal{G}$ and $t_k$ is its associated trapdoor} \\
	x_0 \sample D \setminus \{0\} \, \, \, \pccomment{$x_0 \neq 0$ to ensure that the 2 preimages mapped to the same output are distinct} \\
	k' := (k, g_k(x_0)) \, \, \, \pccomment{the description of the new function} \\
	t_k' := (t_k, x_0) \, \, \, \pccomment{ the trapdoor associated with the function $f_{k'}$} \\
	\pcreturn k', t_k' 
}\\

The Evaluation procedure receives as input an index $k'$ of a function from $\mathcal{F}$ and an element $\bar{x}$ from the function's domain ($\bar{x} \in D \times \{0, 1\}$): \\

\procedure {{\tt FromInj.Eval}$_{\mathcal{F}}(k', \bar{x})$}{
\pcreturn  f_{k'}(\bar{x})
}\\ 
where every function from $\mathcal{F}$ is defined as:
\begin{equation}
\boxed{
\begin{aligned}[b]
	& f_{k'} : D \times \{0, 1\} \rightarrow R \nonumber \\  
	& f_{k'}(x,c) = 
     \begin{cases}
      g_k(x)  \text{, } &\quad\text{if } c = 0\\
      g_k(x) \star g_k(x_0) = g_k(x \, \square \, x_0) \footnotemark\ \text{, } &\quad\text{if } c = 1\\ 
     \end{cases} \nonumber
\end{aligned} 
}
\end{equation}

\footnotetext{The last equality follows since each function $g_k$ from $\mathcal{G}$ is homomorphic}

\procedure [linenumbering]{{\tt FromInj.Inv}$_{\mathcal{F}}(k', y, t_k')$} {
	\pccomment{y is an element from the image of $f_{k'}$, $k' = (k, g_k(x_0)), \, \, t_k' = (t_k, x_0)$} \\
	x_1 := Inv_{\mathcal{G}}(k, y, t_{k}) \\
	x_2 := x_1 \triangle x_0 \\
	\pcreturn (x_1, 0) \, \, \, and \, \, \, (x_2, 1)  \, \, \, \pccomment{the unique 2 preimages corresponding to } \\
	\tab \tab \tab \tab \pccomment{an element from the image of $f_{k'}$}
}

\begin{proof}
To prove \autoref{Thm:injective_to_desired} we give below five lemmata showing that, the family $\mathcal{F}$ of functions defined above, satisfies the following properties: (i) two-regular, (ii) trapdoor, (iii) one-way, (iv) collision-resistant and (v) quantum-safe if $\mathcal{G}$ is quantum-safe. 
\end{proof}

\begin{lemma}[two-regular]\label{lemma:two-regular}
If $\mathcal{G}$ is a family of injective, homomorphic functions, then $\mathcal{F}$ is a family of two-regular functions.
\end{lemma}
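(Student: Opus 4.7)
The plan is to fix an arbitrary $y \in \Image(f_{k'})$ and show directly that $|f_{k'}^{-1}(y)| = 2$ by exhibiting two distinct preimages and then ruling out any others. The argument will split on the bit $c \in \{0,1\}$, since each fiber $f_{k'}^{-1}(y) \cap (D \times \{c\})$ can be analyzed separately.

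Concretely, I would start by picking any preimage $(x^*,c^*) \in f_{k'}^{-1}(y)$ and construct a partner preimage using the homomorphism property of $g_k$. If $c^* = 0$, so that $y = g_k(x^*)$, then $(x^* \, \triangle \, x_0,\, 1)$ is also a preimage, because
\begin{equation*}
f_{k'}(x^* \, \triangle \, x_0,\, 1) \;=\; g_k\bigl((x^* \, \triangle \, x_0) \, \square \, x_0\bigr) \;=\; g_k(x^*) \;=\; y,
\end{equation*}
and analogously, if $c^* = 1$, then $(x^* \, \square \, x_0,\, 0)$ is a second preimage. In both cases the two preimages live in different slices $D \times \{0\}$ and $D \times \{1\}$, so they are distinct as elements of $D \times \{0,1\}$, giving at least two preimages.

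Next, I would show each slice contributes at most one preimage. For the $c = 0$ slice: if $(x_1,0), (x_2,0) \in f_{k'}^{-1}(y)$, then $g_k(x_1) = y = g_k(x_2)$, and injectivity of $g_k$ forces $x_1 = x_2$. For the $c = 1$ slice: if $(x_1,1), (x_2,1) \in f_{k'}^{-1}(y)$, then $g_k(x_1 \, \square \, x_0) = g_k(x_2 \, \square \, x_0)$; injectivity gives $x_1 \, \square \, x_0 = x_2 \, \square \, x_0$, and applying $\triangle \, x_0$ to both sides (using that $\triangle$ is the inverse of $\square$ and $0$ is the identity) yields $x_1 = x_2$. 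Combining the two slices, $|f_{k'}^{-1}(y)| = 2$.

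I do not expect a substantial obstacle here: the proof is essentially unwinding the definitions, with the only subtlety being the implicit group-like structure on $(D,\square)$ that is needed to justify the cancellation step and the equality $(x^* \, \triangle \, x_0) \, \square \, x_0 = x^*$. I would briefly remark that the choice $x_0 \neq 0$ (made in $\mathsf{FromInj.Gen}_{\mathcal{F}}$) is not logically required for two-regularity as stated — distinctness of the two preimages in $D \times \{0,1\}$ comes for free from the differing $c$-coordinate — but it ensures that the two first-coordinates also differ, which is what is actually used downstream in the quantum protocol.
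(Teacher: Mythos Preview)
Your proposal is correct and follows essentially the same approach as the paper: split on the bit $c$, use injectivity of $g_k$ to show each slice $D \times \{c\}$ contributes at most one preimage, and use the homomorphic/group structure to exhibit the partner preimage in the other slice. The paper organizes this slightly differently (it first observes $\Im f_{k'} = \Im g_k$ to locate the $c=0$ preimage directly as $g_k^{-1}(y)$, then derives the $c=1$ preimage), but the content is the same, and your remark about $x_0 \neq 0$ is a correct and useful clarification.
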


\begin{proof}

For every $y \in \Im f_{k'} \subseteq R$, where $k'=(k,g_k(x_0))$:
\begin{enumerate}
\item Since $\Im f_{k'}= \Im g_k$ and $g_k$ is injective, there exists a unique $x:=g^{-1}_k(y)$ such that $f_{k'}(x,0)=g_k(x)=y$.
\item Assume $x'$ such that $f_{k'}(x',1)=y$. By definition $f_{k'}(x',1)=g_k(x' \, \square \, x_0) = y$, but $g_k$ is injective and $g_k(x) = y$ by assumption, therefore there exists a unique $x'= x \triangle x_0$ such that $f_{k'}(x',1) = y$
\end{enumerate}
Therefore, we conclude that:

\EQ{\label{eq:preimages}
\forall \ y \ \in \Im f_{k'}: \ f_{k'}^{-1}(y):=\{(g_k^{-1}(y),0),(g_k^{-1}(y) \, \triangle \, x_0,1)\} 
} 

\end{proof}

\begin{lemma}[trapdoor]
  If $\mathcal{G}$ is a family of injective, homomorphic, trapdoor functions, then $\mathcal{F}$ is a family of trapdoor functions.
\end{lemma}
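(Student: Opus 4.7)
My plan is to verify each of the three bullets in \autoref{def:trapdoor_function} directly, invoking the inductive structure of the explicit construction and the assumption that $\mathcal{G}$ is a trapdoor family. The only non-trivial requirement is one-wayness, which I would state as being handled in the subsequent lemma of the section (so as not to duplicate work), while I focus here on the existence and correctness of {\tt FromInj.Gen}$_{\mathcal{F}}$ and {\tt FromInj.Inv}$_{\mathcal{F}}$ as PPT algorithms.

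First, I would argue that {\tt FromInj.Gen}$_{\mathcal{F}}$ is PPT: by the trapdoor hypothesis on $\mathcal{G}$ there is a PPT algorithm $\mathsf{Gen}_{\mathcal{G}}$ producing $(k,t_k)$; sampling $x_0\in D\setminus\{0\}$ is efficient since $D$ is efficiently sampleable (as required by the first bullet of \autoref{def:one_way_function} applied to $\mathcal{G}$); computing $g_k(x_0)$ is efficient because $\mathcal{G}$ is a family of one-way functions and therefore has a PPT evaluation algorithm. Hence $(k',t_{k'})=((k,g_k(x_0)),(t_k,x_0))$ is produced in polynomial time.

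Second, I would check that $f_{k'}$ itself admits a PPT evaluation algorithm: on input $(x,c)$, one simply computes $g_k(x)$ and, if $c=1$, applies the operation $\star$ with the stored value $g_k(x_0)$ from the public key. Both are efficient by assumption on $\mathcal{G}$ and on the group operation $\star$ (which must be efficient for ``homomorphic'' to be a meaningful computational property).

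Third, I would verify the existence and correctness of the inversion algorithm. {\tt FromInj.Inv}$_{\mathcal{F}}(k',y,t_{k'})$ calls $\mathsf{Inv}_{\mathcal{G}}(k,y,t_k)$ to recover $x_1$, then computes $x_2:=x_1\,\triangle\,x_0$; both steps are PPT. Correctness that this outputs \emph{both} preimages (as required by the footnote in \autoref{def:trapdoor_function} for our two-regular setting) follows directly from \autoref{lemma:two-regular}: equation~(\ref{eq:preimages}) shows that $f_{k'}^{-1}(y)=\{(g_k^{-1}(y),0),(g_k^{-1}(y)\,\triangle\,x_0,1)\}$, which is exactly $\{(x_1,0),(x_2,1)\}$. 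The success probability of $\mathsf{Inv}_{\mathcal{G}}$ is non-negligible over the choice of $(k,t_k)$ and uniform $x$ by the trapdoor property of $\mathcal{G}$; since $x_0$ is independent of $y$ and the distribution of the first preimage $g_k^{-1}(y)$ under uniform $y\in\Im f_{k'}$ coincides with the distribution of uniform preimages in the original family, the non-negligibility is preserved.

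Finally, for the one-wayness requirement I would simply cite the next lemma of the section (which will establish that $\mathcal{F}$ is one-way from the one-wayness of $\mathcal{G}$). The main obstacle, conceptually, is ensuring that the distributions used in the trapdoor definition match up so that ``non-negligible inversion probability'' transfers from $\mathsf{Inv}_{\mathcal{G}}$ to {\tt FromInj.Inv}$_{\mathcal{F}}$; this is handled by noting that $x_0$ is chosen independently and that $\Im f_{k'}=\Im g_k$, so sampling a uniform input to $f_{k'}$ and taking its image yields the same induced distribution on $y$ as sampling a uniform input to $g_k$.
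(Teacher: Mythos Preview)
Your proposal is correct and follows essentially the same approach as the paper: both exhibit the inversion algorithm that calls $\mathsf{Inv}_{\mathcal{G}}$ on $y$ and then returns $(x,0)$ and $(x\,\triangle\,x_0,1)$, with correctness implicit from the two-regularity lemma. The paper's proof is in fact far terser than yours---it simply writes down the algorithm without separately arguing PPT-ness of {\tt Gen} and {\tt Eval} or the distributional point about transferring non-negligible inversion probability---so your version is, if anything, a more complete verification of \autoref{def:trapdoor_function}.
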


\begin{proof}
Let $y \in \Im f_{k'} \subseteq R$. We construct the following inversion algorithm:\\

\procedure [linenumbering]{$Inv_{\mathcal{F}}(k', y, t_k')$} {
\pccomment{ $t_k' = (t_k, x_0)$,  $k' = (k, g_k(x_0))$}\\
x := Inv_{\mathcal{G}}(k, y, t_k)\\
\pcreturn (x,0) \text{ and } (x \, \triangle \, x_0, 1)
}

\end{proof}

\begin{lemma}[one-way]\label{lemma:onewayFromInj}
If $\mathcal{G}$ is a family of injective, homomorphic, one-way functions, then $\mathcal{F}$ is a family of one-way functions.
\end{lemma}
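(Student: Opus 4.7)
The plan is a direct reduction: assume for contradiction that there is a QPT adversary $\mathcal{A}_F$ that inverts a random $f_{k'} \in \mathcal{F}$ with non-negligible probability $\varepsilon(n)$, and use it to build a QPT adversary $\mathcal{A}_G$ that inverts a random $g_k \in \mathcal{G}$ with essentially the same probability, contradicting the one-wayness of $\mathcal{G}$.

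The reduction $\mathcal{A}_G$ works as follows. On input $(k, y)$, where $k \leftarrow \text{Gen}_{\mathcal{G}}(1^n)$ and $y = g_k(x)$ for $x \leftarrow D$, it samples $x_0 \leftarrow D \setminus \{0\}$, forms $k' := (k, g_k(x_0))$ (matching the output distribution of $\texttt{FromInj.Gen}_{\mathcal{F}}$), and runs $\mathcal{A}_F(k', y)$ to obtain a candidate preimage $(x^*, c^*) \in D \times \{0,1\}$. If $c^* = 0$, $\mathcal{A}_G$ outputs $x^*$; if $c^* = 1$, it outputs $x^* \, \square \, x_0$. Correctness of the recovery step uses both the homomorphism and the injectivity of $g_k$: whenever $f_{k'}(x^*, c^*) = y$, we have $g_k(x^*) = y$ (when $c^*=0$) or $g_k(x^* \, \square \, x_0) = y$ (when $c^*=1$) by the definition of $f_{k'}$, and since $g_k$ is injective and $g_k(x) = y$, the element returned by $\mathcal{A}_G$ must equal $x$.

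The main thing to verify is that the pair $(k', y)$ fed to $\mathcal{A}_F$ has the same distribution as in the one-wayness game for $\mathcal{F}$. The key $k'$ is distributed exactly as $\texttt{FromInj.Gen}_{\mathcal{F}}(1^n)$ produces. For the input distribution, in the real game one draws $(x', c) \leftarrow D \times \{0,1\}$ and presents $f_{k'}(x', c)$, which equals $g_k(x')$ if $c=0$ and $g_k(x' \, \square \, x_0)$ if $c=1$. Using the group structure implicit in the homomorphism of $g_k$, the map $x' \mapsto x' \, \square \, x_0$ is a bijection of $D$, so in either case the argument of $g_k$ is uniform on $D$, and hence $y = g_k(\text{uniform } x)$ exactly matches what $\mathcal{A}_G$ receives. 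Consequently $\mathcal{A}_F$'s success probability on the simulated instance is identical (up to the negligible statistical difference between sampling $x_0$ from $D$ and from $D \setminus \{0\}$) to its success probability in the real game, so $\mathcal{A}_G$ succeeds with probability at least $\varepsilon(n) - \negl$, contradicting the one-wayness of $\mathcal{G}$.

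The step I expect to require the most care is the distributional argument justifying that $x' \square x_0$ is uniform when $x'$ is: this is where the homomorphism assumption on $\mathcal{G}$ (read as a group homomorphism so that $\square$ has inverses, as already used in defining $\triangle$ in the \texttt{FromInj.Inv} algorithm) is essential. Everything else is a standard injectivity/reduction argument and carries through identically when $\mathcal{A}_F$ is quantum, which also establishes the quantum-safe preservation claim.
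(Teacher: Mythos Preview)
Your proposal is correct and follows essentially the same reduction as the paper: sample $x_0$, form $k'=(k,g_k(x_0))$, feed the challenge $y$ to the $\mathcal{F}$-inverter, and undo the shift depending on the returned bit. The only notable difference is that you explicitly verify the distribution of $(k',y)$ matches the $\mathcal{F}$ one-wayness game via the bijectivity of $x'\mapsto x'\,\square\,x_0$, a point the paper leaves implicit; note also that both $\texttt{FromInj.Gen}$ and your reduction sample $x_0$ from $D\setminus\{0\}$, so no statistical correction is needed there.
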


\begin{proof}
We prove it by contradiction. We assume that a QPT adversary $\mathcal{A}$ can invert any function in $\mathcal{F}$ with non-negligible probability $P$ (i.e. given $y\in \Im f_{k'}$ to return a correct preimage of the form $(x',b)$ with probability $P$). We then construct a QPT adversary $\mathcal{A'}$ that inverts a function in $\mathcal{G}$ with the same non-negligible probability $P$ reaching the contradiction since $\mathcal{G}$ is one-way by assumption.

From Eq. (\ref{eq:preimages}) of \autoref{lemma:two-regular} we know the two preimages of $y$ are: (i) $(g_k^{-1}(y),0)$ and  (ii) $(g^{-1}_k(y) \triangle x_0,1)$. We see that information on $g^{-1}_k(y)$ is obtain in both cases, i.e. obtaining any of these two preimages, is sufficient to recover $g_k^{-1}(y)$ if $x_0$ is known. We now construct an adversary $\mathcal{A'}$ that for any function $g_k : D \rightarrow R$, inverts any output $y=g_k(x)$ with the same probability $P$ that $\mathcal{A}$ succeeds.\\

\procedure [linenumbering] {$\mathcal{A}'(k, y)$} {
	x_0 \sample D \setminus \{0\} \pccomment{$\mathcal{A'}$ knows $x_0$, but is not given to $\mathcal{A}$}\\
	k' := (k, g_k(x_0)) \\
	(x', b) \gets \mathcal{A}(k', y) \\
	\pcif ((b == 0) \wedge (g_k(x') = y) \pcthen\\
        \t \pccomment{equivalent to $\mathcal{A}$ succeeded in returning the first preimage}	 \\
	\t \pcreturn x' \\
	\pcelseif ((b == 1) \wedge (g_k(x' \, \square \, x_0)) = y) \pcthen\\
        \t \pccomment{$\mathcal{A}$ succeeded in returning the second preimage} \\
	\t \pcreturn  x' \, \square \, x_0\pccomment{$\mathcal{A'}$ uses $x_0$ known from step 1}  \\
	\pcelse \pccomment{$\mathcal{A}$ failed in giving any of the preimages (happens with probability $1-P$)}\\
	\t \pcreturn 0 
}\\

\end{proof}

\begin{lemma}[collision-resistance]
If $\mathcal{G}$ is a family of injective, homomorphic,  one-way functions, then any function $f \in \mathcal{F}$ is collision resistant.
\end{lemma}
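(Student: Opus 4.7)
The plan is a direct reduction to the one-wayness of $\mathcal{G}$: I will show that any collision produced by an adversary against $f_{k'}$ literally reveals the secret $x_0$, and since $g_k(x_0)$ is published as part of $k'$, this immediately yields a preimage for $g_k$ on a uniformly random challenge.

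First, I would analyze the shape of any collision. Suppose a QPT adversary $\mathcal{A}$ outputs $(x_1,c_1) \neq (x_2,c_2)$ with $f_{k'}(x_1,c_1) = f_{k'}(x_2,c_2)$. The case $c_1=c_2$ can be ruled out: both preimages then reduce to either $g_k(x_1)=g_k(x_2)$ or $g_k(x_1 \, \square \, x_0) = g_k(x_2 \, \square \, x_0)$, and injectivity of $g_k$ forces $x_1=x_2$, contradicting the assumption. Hence WLOG $c_1=0$ and $c_2=1$, giving $g_k(x_1) = g_k(x_2 \, \square \, x_0)$, and injectivity again yields $x_1 = x_2 \, \square \, x_0$, i.e.\ $x_0 = x_1 \, \triangle \, x_2$. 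So every successful collision deterministically exposes $x_0$.

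Next, I would build the reduction $\mathcal{A}'$ against the one-wayness of $\mathcal{G}$. On input a challenge $(k, y_0)$ with $y_0 = g_k(x_0)$ for uniformly random $x_0 \in D\setminus\{0\}$, $\mathcal{A}'$ sets $k' := (k, y_0)$ and forwards $k'$ to $\mathcal{A}$. The distribution of $k'$ produced this way is identical to the one output by ${\tt FromInj.Gen}_{\mathcal{F}}(1^n)$, so $\mathcal{A}$ returns a collision $((x_1,0),(x_2,1))$ with the same (assumed non-negligible) probability $P$. By the analysis above, $\mathcal{A}'$ then outputs $x_1 \, \triangle \, x_2$, which equals $x_0 = g_k^{-1}(y_0)$. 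Thus $\mathcal{A}'$ inverts a uniformly random challenge of $g_k$ with probability $P$, contradicting the one-wayness of $\mathcal{G}$.

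The argument is essentially combinatorial once the key structural observation is made; the only subtlety to check carefully is that the challenge distribution simulated by $\mathcal{A}'$ matches the genuine key-generation distribution (in particular, that $x_0\neq 0$ is sampled uniformly from $D\setminus\{0\}$ in both), so that $\mathcal{A}$'s success probability on the simulated $k'$ is the same as in the real game. The main ``hard'' part is really just spotting that the trapdoor parameter $x_0$ doubles as the one-wayness challenge, after which no further machinery is required.
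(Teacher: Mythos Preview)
Your proposal is correct and matches the paper's own proof essentially line for line: both embed the one-wayness challenge $g_k(x_0)$ as the second component of $k'$, observe that any collision must have $c_1\neq c_2$ and hence reveals $x_0 = x_1\,\triangle\,x_2$, and output this as the preimage. If anything, your write-up is slightly more explicit than the paper's (you spell out the $c_1=c_2$ case via injectivity and flag the distribution-matching check), but the underlying reduction is identical.
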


\begin{proof}
Assume there exists a QPT adversary $\mathcal{A}$ that given $k'=(k,g_k(x_0))$ can find a collision $(y, (x_1,b_1),(x_2,b_2))$ where $f_k(x_1,b_1)=f_{k'}(x_2,b_2)=y$ with non-negligible probability $P$. From Eq. (\ref{eq:preimages}) we know that the two preimages are of the form $(x,0),(x \, \triangle \, x_0,1)$ where $g_k(x)=y$. It follows that when $\mathcal{A}$ is successful, by comparing the first arguments of the two preimages, can recover $x_0$. 

We now construct a QPT adversary $\mathcal{A'}$ that inverts the function $g_k$ with the same probability $P$, reaching a contradiction:\\

\procedure [linenumbering]{$\mathcal{A'}(k, g_k(x))$} {
k' = (k, g_k(x)) \\
(y, (x_1, b_1), (x_2, b_2)) \wedge x_1\neq x_2 \gets \mathcal{A}(k') \pccomment{where $y$ is an element from the image of $f_{k'}$} \\
\pcif f(x_1,b_1)=f(x_2,b_2)=y\\
\pcreturn x= x_1 \, \triangle \, x_2 \\
\pcelse \pccomment{$\mathcal{A}$ failed to find collision of $f_{k'}$; happens with probability $(1-P)$}\\
\pcreturn 0
}

\end{proof}

\begin{lemma}[quantum-safe]
If $\mathcal{G}$ is a family of quantum-safe trapdoor functions, with properties as above, then $\mathcal{F}$ is also a family of quantum-safe trapdoor functions.
\end{lemma}

\begin{proof}
The properties that require to be quantum-safe is the one-wayness and collision resistance. Both these properties of $\mathcal{F}$ that we derived above were proved using reduction to the hardness (one-wayness) of $\mathcal{G}$. Therefore if $\mathcal{G}$ is quantum-safe, its one-wayness is also quantum-safe and thus both properties of $\mathcal{F}$ are also quantum-safe.
\end{proof}

\begin{theorem}\label{Thm:bijective_to_desired}
If $\mathcal{G}$ is a family of bijective, trapdoor one-way functions, then there exists a family $\mathcal{F}$ of two-regular, second preimage resistant, trapdoor one-way functions. Moreover ,the family $\mathcal{F}$ is quantum-safe if and only if the family $\mathcal{G}$ is quantum-safe.
\end{theorem}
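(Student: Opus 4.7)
Since we no longer have a homomorphism to play with, the trick from \autoref{Thm:injective_to_desired} (translating an input by a fixed secret shift $x_0$) is unavailable. Instead, I would exploit bijectivity to build two-regularity by a different mechanism: sample two independent keys $k_0,k_1 \leftarrow \mathsf{Gen}_{\mathcal{G}}(1^n)$, set $k' := (k_0,k_1)$ and $t_{k'} := (t_{k_0}, t_{k_1})$, and define
\begin{equation*}
  f_{k'} : D \times \{0,1\} \rightarrow R, \qquad f_{k'}(x,c) := g_{k_c}(x).
\end{equation*}
Since each $g_{k_c}$ is a bijection $D \to R$, every $y \in R$ has exactly one preimage under $g_{k_0}$ and one under $g_{k_1}$, giving the two-regular property $f_{k'}^{-1}(y) = \{(g_{k_0}^{-1}(y),0), (g_{k_1}^{-1}(y),1)\}$. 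The trapdoor $(t_{k_0},t_{k_1})$ lets one invert each branch, so $\mathcal{F}$ is trapdoor.

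For one-wayness, I would reduce to the one-wayness of $\mathcal{G}$ by the standard embedding-into-a-random-branch trick: given a challenge $(k, y=g_k(x))$ with $x$ uniform in $D$, pick a uniform bit $b \in \{0,1\}$, set $k_b := k$, sample $k_{1-b} \leftarrow \mathsf{Gen}_{\mathcal{G}}(1^n)$ independently, and invoke the alleged $\mathcal{F}$-inverter on $((k_0,k_1),y)$. The joint distribution of $(k_0,k_1,y)$ is statistically identical to that produced by $\mathsf{Gen}_{\mathcal{F}}$ together with a uniform $y$, so the adversary outputs some preimage $(x',c)$ with the same success probability. With probability $1/2$ we have $c = b$, in which case bijectivity of $g_{k_b}$ forces $x'=x$, so we win the $\mathcal{G}$-one-wayness game with half the original advantage, contradicting one-wayness of $\mathcal{G}$.

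The key step is second preimage resistance. Given $(x,c)$ with $x$ uniform in $D$, an attacker must produce $(x',1-c)$ such that $g_{k_{1-c}}(x') = g_{k_c}(x)$, i.e.\ compute $g_{k_{1-c}}^{-1}(g_{k_c}(x))$. Crucially, since $g_{k_c}$ is a bijection and $x$ is uniform, $y := g_{k_c}(x)$ is uniform in $R$ and independent of $k_{1-c}$, so inverting $g_{k_{1-c}}$ on $y$ is exactly the one-wayness game for $\mathcal{G}$. A QPT adversary against second preimage resistance therefore yields a QPT inverter for $\mathcal{G}$: pick a random bit $b$, embed the challenge key as $k_{b}$, sample $(k_{1-b},t_{k_{1-b}})$ yourself so you can compute $x := g_{k_{1-b}}^{-1}(y_{\text{chal}})$ (wait — better: place the challenge on the $1-c$ side). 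Concretely, to invert a challenge $y^* = g_{k^*}(x^*)$: sample $(k_0',t_{k_0'}) \leftarrow \mathsf{Gen}_{\mathcal{G}}$, set $k_{1-c} := k^*$ and $k_c := k_0'$, draw $x \leftarrow D$ uniformly (serving as the input to the second preimage challenge in the $c$-branch), hand $((k_0,k_1),(x,c))$ to the adversary; if it returns $(x',1-c)$ with $g_{k^*}(x') = g_{k_0'}(x)$, output $x'$ — but this equals $x^*$ only when $g_{k_0'}(x) = y^*$, which holds with probability $1/|R|$. To avoid this loss I would instead program the adversary's view by choosing $x$ so that $g_{k_c}(x) = y^*$: since we know $t_{k_c} = t_{k_0'}$, set $x := g_{k_0'}^{-1}(y^*)$, which is uniform in $D$ because $y^*$ is uniform. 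Then a successful adversary outputs $x' = g_{k^*}^{-1}(y^*) = x^*$, yielding a full inverter for $\mathcal{G}$. This embedding via the trapdoor of the \emph{non-challenge} branch is the main obstacle — making sure the simulated distribution of $(k_0,k_1,x,c)$ matches the real one while keeping $y^*$ in the right position.

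Finally, quantum-safety is immediate: all three reductions above (one-wayness, second preimage resistance, and the trivial direction that a $\mathcal{F}$-inverter contradicts $\mathcal{G}$-one-wayness) are black-box, polynomial-time, and make no use of the classical nature of the adversary, so they lift verbatim to QPT adversaries. Conversely, $\mathcal{F}$ contains $\mathcal{G}$ as a restriction to $c=0$, so a quantum inverter for $\mathcal{G}$ yields one for $\mathcal{F}$; hence $\mathcal{F}$ is quantum-safe iff $\mathcal{G}$ is.
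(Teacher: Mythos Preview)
Your proposal is correct and follows essentially the same approach as the paper: the identical two-key construction $f_{(k_0,k_1)}(x,c)=g_{k_c}(x)$, the same random-branch embedding for one-wayness (losing a factor $1/2$), and the same reduction for second preimage resistance using the trapdoor of the non-challenge branch to manufacture the adversary's input $x=g_{k_c}^{-1}(y^*)$. Your write-up meanders through a false start before landing on the right reduction, but the final argument matches the paper's; you also explicitly supply the converse direction for quantum-safety, which the paper leaves implicit.
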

The family $\mathcal{F}$ is described by the following PPT algorithms, where each function $g_k \in \mathcal{G}$ has domain $D$ and range $R$:\\

\procedure [linenumbering] {{\tt FromBij.Gen}$_{\mathcal{F}}(\secparam) $} {
	(k_1, t_{k_1}) \sample Gen_{\mathcal{G}}(\secparam) \\
	(k_2, t_{k_2}) \sample Gen_{\mathcal{G}}(\secparam) \\
	k' := (k_1, k_2) \\
	t_k' := (t_{k_1}, t_{k_2}) \\
	\pcreturn k', t_k' 
}\\ \\

\procedure {{\tt FromBij.Eval}$_{\mathcal{F}}(k', \bar{x})$}{
\pcreturn  f_{k'}(\bar{x})
}\\ \\
where every function from $\mathcal{F}$ is defined as:

\begin{equation}
\boxed{
\begin{aligned}[b]
	& f_{k'} : D \times \{0, 1\} \rightarrow R \nonumber \\
	& f_{k'}(x,c) = 
     \begin{cases}
      g_{k_1}(x)  \text{, } &\quad\text{if } c = 0\\
      g_{k_2}(x)  \text{, } &\quad\text{if } c = 1\\ 
     \end{cases}  \nonumber
\end{aligned} 
}
\end{equation}

\procedure [linenumbering]{{\tt FromBij.Inv}$_{\mathcal{F}}(k', y, t_k')$} {
	\pccomment{y is an element from the image of $f_{k'}$, $k' = (k_1, k_2), \, \, t_k' = (t_{k_1}, t_{k_2})$} \\
	x_1 := Inv_{\mathcal{G}}(k_1, y, t_{k_1}) \\
	x_2 := Inv_{\mathcal{G}}(k_2, y, t_{k_2}) \\
	\pcreturn (x_1, 0) \, \, \, and \, \, \, (x_2, 1)  \, \, \, \pccomment{the unique 2 preimages corresponding to } \\
	\tab \tab \tab \tab \pccomment{an element from the image of $f_{k'}$}
}

The proof of \autoref{Thm:bijective_to_desired}, using the family of function defined above, follows same steps as of \autoref{Thm:injective_to_desired} and is given in the \autoref{app:bijective}.

\subsection{Injective, homomorphic quantum-safe trapdoor one-way function from \LWE{} (taken from \cite{MP2012})}

We outline the Micciancio and Peikert \cite{MP2012} construction of injective trapdoor one-way functions, naturally derived from the Learning With Errors problem. At the end we comment on the homomorphic property of the function, since this is crucial in order to use this function as the basis to obtain our desired two-regular, collision resistant trapdoor one-way functions.

The algorithm below generates the index of an injective function and its corresponding trapdoor. The matrix $G$ used in this procedure, is a fixed matrix (whose exact form can be seen in \cite{MP2012}) for which the function from the family $\mathcal{G}$ with index $G$ can be efficiently inverted.

\procedure [linenumbering]{{\tt LWE.Gen}$_{\mathcal{G}}(1^n) $}{
  A' \sample \mathbb{Z}_q^{n \times \bar{m}} \\
  \pccomment{$\cD$ denotes the element-wise gaussian distribution with mean 0}\\
  \pccomment{and standard deviation $\alpha q$ on matrices of size $\bar{n} \times kn$}\\
  R \sample   \mathcal{D}^{\bar{m} \times kn}_{\alpha q} \, \, \, \pccomment{trapdoor information} \\
  A := (A', G - A'R) \, \, \, \pccomment{concatenation of matrices A' and G - A'R, representing the index of the function} \\
  \pcreturn A, R
} \\ 

The actual description of the injective trapdoor function is given in the Evaluation algorithm below, where each function from $\mathcal{G}$ is defined on: $ g_K : \mathbb{Z}_q^n \times L^m \rightarrow \mathbb{Z}_q^m$, and $L$ is the domain of the errors in the \LWE{} problem (the set of integers bounded in absolute value by $\mu$):

\procedure [linenumbering]{{\tt LWE.Eval}$_{\mathcal{G}}(K, (s, e)) $}{
	y := g_K(s, e) = s^tK + e^t \\
    \pcreturn y
}\\

The inversion algorithm returns the unique preimage $(s, e)$ corresponding to $b^t \in \Im (g_K)$. The algorithm uses as a subroutine the efficient algorithm $Inv_G$ for inverting the function $g_G$, with $G$ the fixed matrix mentioned before.

\procedure {{\tt LWE.Inv}$_{\mathcal{G}}(K, t_K, b^t) $}{
  \pcln {b'}^t := b^t \begin{bmatrix} R \tabularnewline I \end{bmatrix} \\
  \pcln (s', e') := Inv_G(b') \\
  \pcln s := s' \\
  \pcln e := b - K^ts \\
  \pcln \pcreturn s, e
}\\

We examine now whether the functions $g_K$ are homomorphic with respect to some operation. 
Given $a = (s_1, e_1) \in  \mathbb{Z}_q^n \times L^m $ and $b = (s_2, e_2) \in  \mathbb{Z}_q^n \times L^m$, the operation $\star$ is defined as:
$$ (s_1, e_1) \, \star \, (s_2, e_2)  = (s_1 + s_2 \bmod q, e_1 + e_2)$$
Given $y_1 = g_K(a) \in \mathbb{Z}_q^m$ and $y_2 = g_K(b) \in \mathbb{Z}_q^m$, the operation $\square$ is defined as:
$$ y_1 \, \square \, y_2 = y_1 + y_2 \bmod q $$
Then, we can easily verify that:
\EQ{& g_K(s_1, e_1) + g_K(s_2, e_2) \bmod q = {s_1}^tK + {e_1}^t + {s_2}^tK + {e_2}^t \bmod q = \nonumber\\
& (s_1 + s_2 \bmod q)^tK + (e_1 + e_2)^t = g_K((s_1 + s_2) \bmod q, e_1 + e_2)\nonumber}
However, the sum of two error terms, each being bounded by $\mu$, may not be bounded by $\mu$. This means that the function is not (properly) homomorphic. Instead, what we conclude is that as long the vector $e_1 + e_2$ lies inside the domain of $g_K$, then $g_K$ is homomorphic. To address this issue, we will need to define a weaker notion of 2-regularity, and a (slight) modification of the {\tt FromInj} construction to provide a desired function starting from the trapdoor function of \cite{MP2012}.

\subsection{A suitable $\delta$-2 regular trapdoor function}\label{Subsec:actual_trapdoor}

Using the homomorphic injective trapdoor function of Micciancio and Peikert \cite{MP2012} and the construction defined in the proof of \autoref{Thm:injective_to_desired}, we derive a family $\mathcal{F}$ of collision-resistant trapdoor one-way function, but with a weaker notion of 2-regularity, called $\delta$-2 regularity:
\begin{definition}[$\delta$-2 regular]
A family of functions $(f_i)_{i \leftarrow Gen_{\mathcal{F}}}$ is said to be \mbox{$\delta$-2 regular}, with $\delta \in [0,1]$ if:
  \[\Pr_{i \leftarrow Gen_{\mathcal{F}}, y \in \Im (f_i)} [~ |f^{-1}_i(y)| = 2 ~] \geq \delta \]
\end{definition}

Given this definition, we should note here that in \autoref{protocol:concrete_c_q_factory} we need to modify the abort case to include the possibility that the image $y$ obtained from the measurement does not have two preimages (something that happens with at most probability $(1-\delta)$).

\begin{theorem}[Existence of a $\delta$-2 regular trapdoor function family]
  There exists a family of functions that are $\delta$-2 regular (with $\delta$ at least as big as a fixed constant), trapdoor, one-way, collision resistant and quantum-safe, assuming that there is no quantum algorithm that can efficiently solve $\SIVP{}_\gamma$ for $\gamma = \poly[n]$.
\end{theorem}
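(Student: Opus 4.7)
The plan is to instantiate the injective-to-2-regular construction of \autoref{Thm:injective_to_desired} using the Micciancio--Peikert trapdoor function $g_K$ of \cite{MP2012}, modified to accommodate its only being \emph{near}-homomorphic. Concretely, $\mathsf{Gen}$ would sample $(K, R) \gets \texttt{LWE.Gen}(1^n)$, a uniform $s_0 \gets \mathbb{Z}_q^n$, and $e_0$ from a discrete Gaussian of width $\sigma_0 = \Theta(\mu/m)$ much smaller than the error bound $\mu$ defining the domain of $g_K$. It publishes $b_0 := g_K(s_0, e_0)$ as part of the function index and defines
\[
f(s, e, 0) := g_K(s, e), \qquad f(s, e, 1) := g_K(s, e) + b_0,
\]
on the domain $\{(s, e, c) : \|e\|_\infty \leq \mu\}$, with trapdoor $(R, s_0, e_0)$. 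Given an appropriate parameter regime (discussed below), I would then verify the five required properties in turn.

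For $\delta$-2-regularity I would use the near-homomorphism of $g_K$: we have $f(s, e, 1) = f(s + s_0, e + e_0, 0)$ \emph{provided} $\|e + e_0\|_\infty \leq \mu$. By Gaussian tail bounds, each coordinate $(e_0)_i$ satisfies $|(e_0)_i| = O(\mu/m)$ with overwhelming probability, so for a uniformly random $e$ with $\|e\|_\infty \leq \mu$, the bad event $|e_i + (e_0)_i| > \mu$ requires $e_i$ to lie within an $O(\mu/m)$-band of the boundary, a probability $O(1/m)$ per coordinate. Taking a union bound over the $m$ coordinates leaves the good event with some constant probability $\delta > 0$ independent of $n$, producing the two distinct preimages $(s, e, 1)$ and $(s + s_0, e + e_0, 0)$.

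For one-wayness and collision-resistance I would reduce both to the one-wayness of $g_K$, which reduces to decisional LWE and hence to quantum $\SIVP_\gamma$ with $\gamma = \poly[n]$ via Regev's theorem. The one-wayness reduction mimics \autoref{lemma:onewayFromInj}: on a challenge $y = g_K(s^\star, e^\star)$, the reduction samples $(s_0, e_0)$ itself, publishes $b_0$, and runs the $f$-inverter on $y$; either preimage returned reveals $(s^\star, e^\star)$, subtracting the known offset in the $c=1$ branch. For collision resistance, injectivity of $g_K$ forbids collisions with $c_1 = c_2$, so any $f$-collision has $c_1 \neq c_2$, and subtracting the two defining equations produces a preimage of $b_0$; the reduction plants its own external one-wayness challenge as $b_0$ and wins whenever the adversary does. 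Quantum-safety transfers directly from the quantum hardness of LWE.

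The hard part will be jointly tuning the parameters $(n, q, \bar m, k, \alpha, \mu, \sigma_0)$ so that three tensions are simultaneously resolved: (i) the MP construction is well-defined and $g_K$ is injective and has polynomial-time inversion with the trapdoor $R$; (ii) $\sigma_0 = \Theta(\mu/m)$ is small enough to give a constant $\delta$ in the regularity argument, yet $b_0 = g_K(s_0, e_0)$ is still one-way despite $e_0$ being drawn from a \emph{narrower}-than-LWE distribution --- this is recovered by noting that the uniform $s_0$ already contributes $n \log q$ bits of min-entropy and that the reduction to standard LWE absorbs the narrow noise into the randomisation of the $s_0$-term; and (iii) the overall modulus-to-noise ratio of the LWE instance leaves the approximation factor of the underlying $\SIVP_\gamma$ at $\gamma = \poly[n]$ so that Regev's quantum reduction applies. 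Exhibiting an explicit parameter setting satisfying all these inequalities simultaneously is exactly the content of \autoref{Subsec:actual_trapdoor}, and together with the steps above establishes the theorem.
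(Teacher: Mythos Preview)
Your construction and overall strategy match the paper's: instantiate the {\tt FromInj} scheme with the Micciancio--Peikert function, drawing the offset error $e_0$ from a narrower Gaussian so that $e+e_0$ stays in the hypercube with constant probability. The collision-resistance reduction (a collision reveals $(s_0,e_0)$, breaking LWE) and the trapdoor argument are essentially the paper's.

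There is, however, a genuine gap in your one-wayness step. You plant an LWE challenge $y=g_K(s^\star,e^\star)$ and feed it to the $f$-inverter, but this is a distribution mismatch: the LWE challenge has Gaussian error $e^\star$, whereas the $f$-inverter is only promised to succeed on honestly generated $f$-images, where the error is \emph{uniform} on the hypercube $[-\mu,\mu]^m$. These are far in statistical distance and you give no bridge. The paper flags precisely this obstacle (``we need more care here, since in our construction the error term $e$ is not sampled from a Gaussian distribution'') and circumvents it by deriving one-wayness \emph{from} collision resistance via the generic observation that if a constant fraction of images have two preimages and $f$ is collision-resistant, then an inverter yields a collider with probability $\Omega(\delta)$. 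That route needs only the hardness of recovering $(s_0,e_0)$ from $b_0$, which \emph{is} a clean LWE instance because $e_0$ is Gaussian with parameter $\alpha' q\geq 2\sqrt n$; incidentally, your point~(ii) about ``min-entropy of $s_0$ absorbing the narrow noise'' is not the right justification here --- Regev's reduction simply applies directly at width $\alpha' q$.

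A minor point on $\delta$-2-regularity: a union bound over the $m$ coordinates gives only $\Pr[\text{good}]\geq 1-m\cdot O(1/m)=1-O(1)$, which need not be positive. Since the coordinates are independent, the correct estimate is the product $(1-O(1/m))^m\to e^{-c}>0$; the paper computes it explicitly as $(1-\mu'/(4\mu))^m$.
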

\begin{proof}
To prove this theorem, we define a function similar to the one in the {\tt FromInj} construction, where the starting point is the function defined in \cite{MP2012}. Crucial for the security is a choice of parameters that satisfy a number of conditions given by \autoref{thm:req_param} and proven in \autoref{app:implementation}. The proof is then completed by providing a choice of parameters given in \autoref{thm:exist_param} that satisfies all conditions as it is shown in \autoref{app:implementation2}.  
\end{proof}

\begin{definition}\label{def:REG2_fct}
  For a given set of parameter $\cP$ chosen as in \autoref{thm:req_param}, we define the following functions, that are similar to the construction {\tt FromInj}, except for the key generation that require an error sampled from a smaller set:\\
  \begin{minipage}[t]{.45\textwidth}
    \procedure [linenumbering]{{\tt REG2.Gen}$_\cP(1^n) $}{
      A, R \gets {\tt LWE.Gen}_{\cG}(1^n)\\
      s_0  \gets \Z_q^{n,1} \\
      e_0 \gets \cD^{m,1}_{\alpha' q}\\
      b_0 := {\tt LWE.Eval}(A, (s_0 + e_0))\\
      k := (A,b_0)\\
      t_k := (R,(s_0,e_0))\\
      \pcreturn (k, t_k)
    }
  \end{minipage}
  \begin{minipage}[t]{.45\textwidth}
    \procedure [linenumbering]{{\tt REG2.Eval}$_\cP((A,b_0), (s,e,c)) $}{
      \pccomment{$s$ is a random element in $\Z_q^{n,1}$, $c \in \{0, 1\}$}\\
      \pccomment{$e$ is sampled uniformly and such that}\\
      \pccomment{\t each component is smaller than $\mu$}\\
      \pcreturn {\tt LWE.Eval}(A, (s, e)) + c b_0
    } \\
    
    \procedure [linenumbering]{{\tt REG2.Inv}$_\cP((A, R,(s_0,e_0)), b) $}{
      (s_1,e_1) := {\tt LWE.Inv}(R, b)\\
      \pcif ||e_1 - e_0||_\infty \leq \mu \pcthen \  \pcreturn \bot \\
      \pcreturn ((s_1,e_1,0), (s_1-s_0,e_1-e_0,1))
    } \\
  \end{minipage}
\end{definition}
Note, that the pairs $(s,e)$ and $(s_0,e_0)$ correspond to $x$ and $x_0$ of the {\tt FromInj} construction of \autoref{sec:general_two_regular}. The idea behind this construction is that the noise of the trapdoor is sampled from a set which is small compared to the noise of the input function. That way, when you will add the trapdoor together with an input, the total noise will still be small enough to stay in the set of possible input noise with good probability, mimicking the homomorphic property needed in \autoref{Thm:injective_to_desired}. Note that the parameters need to be carefully chosen, and a trade-off between probability of success and security exists.

\begin{lemma}[Requirements on the parameters]\label{thm:req_param}
  For all $n,q, \mu \in \Z, \mu' \in \R$, let us define:
  \begin{itemize}
  \item $k := \ceil{log(q)}$
  \item $\bar{m} = 2n$
  \item $\omega = nk$
  \item $m := \bar{m} + \omega = 2n + nk$
  \item $\alpha' = \frac{\mu'}{\sqrt{m}q} $
  \item $\alpha  = m \alpha' $
  \item $C$ the constant in Lemma 2.9 of \cite{MP2012} which is around $\frac{1}{\sqrt{2\pi}}$
  \item $B = 2$ if $q$ is a power of 2, and $B = \sqrt{5}$ otherwise.
  \end{itemize}
  Now, if for all security parameters $n$ (dimension of the lattice), there exist $q$ (the modulus of \LWE{}) and $\mu$ (the maximum amplitude of the components of the errors) such that:
  \begin{enumerate}
  \item $m$ is such that $n = o(m)$ (required for the injectivity of the function (see e.g. \cite{Lecture13})) 
  \item $0 < \alpha < 1$
  \item $\mu' = O(\mu / m)$ (required to have non negligible probability to have two preimages)
  \item $\alpha' q \geq 2 \sqrt{n}$ (required for the \LWE{} to \SIVP{} reduction)
  \item $\frac{n}{\alpha'}$ is \poly[n] (representing, up to a constant factor, the approximation factor $\gamma$ in the $\SIVP{}_\gamma$ problem)
  \item \[\sqrt{m} \mu < \underbrace{\frac{q}{2 B \sqrt{\left(C \cdot (\alpha \cdot q) \cdot (\sqrt{2n} + \sqrt{kn} + \sqrt{n})\right)^2+1}}}_{r_{max}} - \mu'\sqrt{m} \]
    (required for the correctness of the inversion algorithm - $r_{max}$ represents the maximum length of an error vector that one can correct using the \cite{MP2012} function\footnote{We chose to use the computational definition of \cite{MP2012}, but this theorem can be easily extended to other definitions of this same paper, or even to other construction of trapdoor short basis)}, and the last term is needed in the proof of collision resistance to ensure injectivity even when we add the secret trapdoor noise, as illustrated in \autoref{fig:picture_balls_squares})
  \end{enumerate}
  then the family of functions of \autoref{def:REG2_fct} is $\delta$-2 regular (with $\delta$ at least as big as a fixed constant), trapdoor, one-way and collision resistant (all these properties are true even against a quantum attacker), assuming that there is no quantum algorithm that can efficiently solve $SIVP{}_\gamma$ for $\gamma = \poly[n]$.
\end{lemma}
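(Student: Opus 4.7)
The plan is to establish each of the four required properties (trapdoor, one-way, collision resistance, $\delta$-2 regularity, all quantum-safe) by adapting the reductions from \autoref{Thm:injective_to_desired} to the approximately-homomorphic setting of the Micciancio--Peikert function. The core difficulty is that addition of errors in the \LWE{} function is only homomorphic when the resulting noise stays bounded by $\mu$, so the clean algebraic identity $f_{k'}(x,1)=g_k(x\square x_0)$ used in the {\tt FromInj} construction now only holds probabilistically. Each condition of the lemma is used to close exactly one gap in the reductions.

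First, I would tackle the trapdoor and $\delta$-2 regularity properties together. For trapdoor: by construction, given $b = g_K(s,e) + cb_0$ the algorithm {\tt LWE.Inv} decodes $(s,e)$ from $b - cb_0$ whenever the effective error has infinity norm below the decoding radius $r_{max}$ of \cite{MP2012}; condition 6 guarantees exactly this, since the error never exceeds $\sqrt{m}\mu$ in the input domain. For $\delta$-2 regularity: an image $y = g_K(s,e)$ obtained from $(s,e,0)$ has a second preimage $(s-s_0, e-e_0, 1)$ iff $\|e - e_0\|_\infty \leq \mu$. Since $e$ is uniform in $[-\mu,\mu]^m$ and $e_0$ is discrete Gaussian of parameter $\alpha' q = \mu'/\sqrt{m}$, by condition 3 we have $\mu' = O(\mu/m)$, so a standard tail bound on each Gaussian coordinate together with a union bound yields a probability that is at least a fixed constant $\delta > 0$. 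I would push the precise constant into the appendix.

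Second, for one-wayness, I would mirror the reduction of \autoref{lemma:onewayFromInj}. Given a QPT adversary $\cA$ inverting $f_k$ with non-negligible probability, the reduction $\cA'$ samples $(s_0, e_0)$ itself, computes $b_0 = g_K(s_0, e_0)$, packages $k = (A, b_0)$, forwards the challenge $y$ to $\cA$, and (knowing $x_0 = (s_0, e_0)$) translates whichever of the two preimages $\cA$ returns back to the underlying \LWE{} preimage of $y$. The key subtlety is that $b_0$ must be distributed as the adversary expects in the family; here $b_0$ has exactly the \LWE{} distribution $g_K(s_0, e_0)$ with $(s_0, e_0)$ sampled as in {\tt LWE.Gen}, so the simulation is perfect. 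One-wayness of the underlying \LWE{} function then reduces, via \cite{Regev,MP2012}, to $\SIVP_\gamma$ with $\gamma = \poly[n]$; conditions 1, 2, 4, and 5 are precisely what is needed for this reduction to be meaningful and efficient.

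Third, for collision resistance, I would adapt the argument that a collision in $f_{k'}$ reveals $s_0, e_0$. Given two distinct preimages $(s_1, e_1, b_1) \neq (s_2, e_2, b_2)$ of the same image $y$, consider cases on $(b_1, b_2)$. If $b_1 = b_2$, then $g_K(s_1, e_1) = g_K(s_2, e_2)$ with both errors bounded by $\mu$; condition 6, which ensures $2\sqrt{m}\mu < r_{max}$, forces $(s_1, e_1) = (s_2, e_2)$ by the injectivity of $g_K$ on bounded-error inputs, a contradiction. If $b_1 \neq b_2$ (WLOG $b_1 = 0, b_2 = 1$), then $b_0 = g_K(s_1 - s_2, e_1 - e_2)$, and the inequality $\sqrt{m}\mu + \mu'\sqrt{m} < r_{max}$ in condition 6 ensures the recovered pair is within the decodable domain. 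The reduction $\cA'$ for inverting $g_K$ on an arbitrary challenge $b_0^\star$ embeds $b_0^\star$ into the public key of $f_{k'}$, invokes the collision finder, and extracts a preimage; by the same \LWE{} hardness invoked for one-wayness, this contradicts the assumption. The main obstacle throughout is the tightrope walk in parameter selection: $\mu'$ must be small enough (condition 3) for the constant-probability two-regularity, yet large enough (condition 4) for \LWE{} to remain hard, and condition 6 must simultaneously leave room for both the input error and the trapdoor error without breaking injectivity. Verifying that such parameters exist is deferred to \autoref{thm:exist_param}, and quantum-safety is inherited throughout since every reduction is efficient and the base \LWE{} assumption is against QPT adversaries.
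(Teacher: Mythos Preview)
Your treatment of the trapdoor, $\delta$-2 regularity, and collision-resistance properties matches the paper's proof: condition~6 ensures the Micciancio--Peikert decoder succeeds even on the combined noise $e+e_0$; condition~3 gives the constant-probability two-preimage bound via a coordinate-wise argument; and a collision reveals $(s_0,e_0)$, i.e.\ solves an \LWE{} instance with Gaussian parameter $\alpha' q$, for which condition~4 licenses the Regev reduction to $\SIVP_\gamma$.

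The one-wayness argument, however, has a gap. You propose to mirror \autoref{lemma:onewayFromInj}: receive an \LWE{} challenge $y$, embed it as a challenge for the $f_k$-inverter $\cA$, and translate the returned preimage back. But $\cA$ expects the error in its challenge to be \emph{uniform} on $[-\mu,\mu]^m$ (that is how {\tt REG2.Eval} samples $e$), whereas an \LWE{} challenge carries \emph{Gaussian} error. You correctly check that $b_0$ is simulated perfectly, but the challenge $y$ itself is not, so $\cA$'s success probability need not carry over. The paper explicitly flags this issue (``we need more care here, since in our construction the error term $e$ is not sampled from a Gaussian distribution'') and takes a different route: it derives one-wayness \emph{from} the already-established collision resistance and $\delta$-2 regularity, via the generic observation that if a constant fraction of inputs land on two-preimage images and one can invert on such images with non-negligible probability, then with probability $\tfrac12$ the recovered preimage differs from the sampled one, yielding a collision. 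This sidesteps the distribution-matching problem entirely.
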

\begin{proof}
The proof follows by showing that the function with these constraints on the parameters is (i) $\delta$-2 regular, (ii) collision resistant, (iii) one-way and (iv) trapdoor. In \autoref{app:implementation} we give and prove one lemma for each of those properties. For an intuition of the choice of parameters see also \autoref{fig:picture_balls_squares}. 
\end{proof}


\begin{figure}[h]
  \begin{minipage}[c]{0.5\textwidth}
    \includegraphics[width=\textwidth]{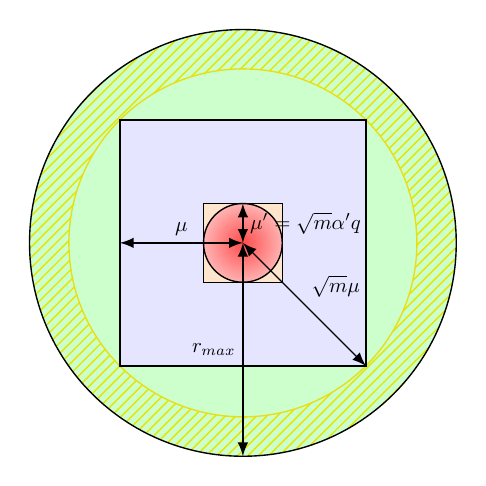}
  \end{minipage}\hfill
  \begin{minipage}[c]{0.4\textwidth}
    %
    \caption{The red circle represents the domain of the error term from the trapdoor information, which is being sampled from a Gaussian distribution. The orange square is an approximation of this domain, which must satisfy that its length is much smaller (by a factor of at least $m$ -- the dimension of the error) than the length of the blue square, used for the actual sampling from the domain of the error terms, for which it is known that the trapdoor function is invertible, domain represented by the green circle. The dashed part is needed to ensure that if there is a collision $(x_1, x_2)$, then $x_1 = x_2 \pm x_0$.  \label{fig:picture_balls_squares}}
  \end{minipage}
\end{figure}

\newpage%
\subsection{Parameter Choices}

\begin{lemma}[Existence of parameters]\label{thm:exist_param}
  The following set of parameters fulfills \autoref{thm:req_param}.
  \begin{align*}
   n &= \lambda\\
   k &= 5\ceil{log(n)} + 21\\
   q &= 2^k\\
   \bar{m} &= 2n\\
   \omega &= nk\\
   m &= \bar{m} + \omega\\
   \mu &= \ceil{2mn \sqrt{2+k}}\\
   \mu' &= \mu/m\\
   B &= 2\\
 \end{align*}
 and $\alpha, \alpha', C$ are defined like in \autoref{thm:req_param}. 
\end{lemma}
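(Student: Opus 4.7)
The plan is to verify each of the six requirements of \autoref{thm:req_param} by direct substitution of the chosen parameters, exploiting that $q = 2^k \geq 2^{21} n^5$ is exponential in $k$ while every other quantity is polynomial in $n$ and $k$. The only nontrivial bookkeeping lies in condition 6; the remaining five conditions follow by short algebraic simplification.

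First I would record the derived quantities once: $m = 2n + nk = \Theta(n\log n)$, $\mu = \Theta(n^2 (\log n)^{3/2})$, $\mu' = \mu/m$, and---via the definitions $\alpha' = \mu'/(\sqrt{m}q)$ and $\alpha = m\alpha'$---the compact forms $\alpha' q = \mu/(m\sqrt{m})$, $\alpha q = \mu/\sqrt{m}$, and $\alpha = \mu/(q\sqrt{m})$. With these identities in hand, the easy conditions dispatch quickly. Condition 1 is immediate from $m \geq nk$; condition 3 is the definition of $\mu'$; condition 4 follows from $\mu \geq 2mn\sqrt{2+k}$ combined with $m = n(2+k)$, giving $\alpha' q \geq 2\sqrt{n}$ essentially at equality; condition 2 reduces to $\mu < q\sqrt{m}$, where $q = 2^k$ swamps the polynomial $\mu$ by many orders of magnitude; and condition 5 gives $n/\alpha' = n m^{3/2} q/\mu = \Theta(n^{11/2})$, hence polynomial in $n$. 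In none of these is tightness relevant, and the specific constants $5$ and $21$ in $k$ play no role.

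The main obstacle is condition 6, which is precisely what forces the specific choice of the constants. After dropping the $\mu'\sqrt{m}$ term on the right (it is a $\Theta(1/m)$ fraction of $\sqrt{m}\mu$) and the additive $+1$ inside the square root in the denominator of $r_{max}$, a direct substitution using $\alpha q = \mu/\sqrt{m}$ and $\sqrt{2n}+\sqrt{kn}+\sqrt{n} = \Theta(\sqrt{kn})$ reduces the inequality $\sqrt{m}\mu < r_{max}$ to a target of the form
\[
c_1 \cdot n^{9/2}\, k^{7/2} \;<\; 2^k
\]
for an explicit universal constant $c_1$ collecting $B=2$, $C \approx 1/\sqrt{2\pi}$, and the factors from the three square-root contributions. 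The coefficient $5$ in $k = 5\lceil\log n\rceil + 21$ is chosen so that $2^{5\lceil\log n\rceil} \geq n^5$ already beats the $n^{9/2}$ on the left, leaving $\sqrt{n}$ of slack against the subpolynomial factor $k^{7/2} = O((\log n)^{7/2})$; the additive $21$ is picked large enough that the remaining $2^{21}$ absorbs $c_1$ and all bounding losses uniformly over $n \geq 1$. Closing the argument then amounts to a single explicit check at $n=1$ (where the ratio $\sqrt{n}/k^{7/2}$ is smallest in the relevant range), after which monotonicity of $\sqrt{n}/(\log n)^{7/2}$ takes over. The explicit arithmetic is what the paper relegates to \autoref{app:implementation2}.
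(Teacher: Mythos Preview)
Your outline follows the same approach as the paper: conditions 1--5 are dispatched by the same direct substitutions you describe (the paper even uses your identities $\alpha q = \mu/\sqrt{m}$ and $\alpha' q = \mu/(m\sqrt{m})$), and condition 6 is reduced to an inequality of the shape $c_1\, n^{9/2} k^{7/2} < 2^k$, after which the choice $k = 5\lceil\log n\rceil + 21$ is justified exactly as you say---$2^{5\lceil\log n\rceil}$ provides $n^5$ against the $n^{9/2}$, and the residual $\sqrt{n}$ eventually dominates $(\log n)^{7/2}$.

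Two points in your treatment of condition 6 would not survive as written. First, ``dropping'' the $\mu'\sqrt{m}$ term on the right and the $+1$ under the square root both relax the inequality in the \emph{wrong} direction: $r_{\max} - \mu'\sqrt{m} < r_{\max}$ and $\sqrt{X^2+1} > |X|$, so what you end up proving is strictly weaker than what is required. The paper instead moves $\mu'\sqrt{m} = \mu/\sqrt{m}$ to the left to obtain $(\sqrt{m}+1/\sqrt{m})\mu < r_{\max}$ and keeps the $+1$, absorbing both into explicit constants $\gamma_0,\gamma_1,\gamma_2,\gamma_3$ bounded uniformly in $n$ once $v=21$ is fixed. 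Your parenthetical that these terms are $\Theta(1/m)$ corrections is morally right, but the argument must carry them as multiplicative factors near $1$, not discard them. Second, the closing step is not a single check at $n=1$ followed by monotonicity: the function $\sqrt{n}/(\log n)^{7/2}$ is \emph{not} monotone (it decreases until roughly $n\approx e^7$ before increasing), so checking $n=1$ alone does not close the range. The paper sidesteps this by bounding $(1+5\lceil\log n\rceil/19)^{7/2}\leq 3\sqrt{n}$ directly, which folds the $k^{7/2}$ factor into $v^{7/2}\cdot 3\sqrt{n}$ and leaves a clean comparison $48C\gamma_0\gamma_1\gamma_2\gamma_3\, v^{7/2} n^5 \leq 2^{v}n^5$ with no residual $n$-dependence to worry about.
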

The proof is given in \autoref{app:implementation2}. As a final remark, we stress that other choices of the parameters are possible (considering the trade-off between security and probability of success) and we have not attempted to find an optimal set.


\section{Discussion}\label{Sec:Malicious}

In this work we deal with Quantum-Honest-But-Curious adversaries. Naturally, the final aim should be to provide security against a (fully) malicious adversary/Server. There are two (linked) issues to consider when dealing with malicious adversaries. The \textbf{first issue} is whether the Server (by deviating arbitrarily) can obtain extra information about the secret classical description (of the state supposingly prepared). The \textbf{second issue} is whether the actual state at the end of the protocol is (essentially) the one that the Client believes, i.e. if the functionality provides verification. We make remarks separately on these issues, and then conclude with an approach that could lead in a solution to both issues.

\noindent\textbf{Issue 1 (privacy):} The most naive attempt for the Server to deviate in order to obtain information, is to return $y,b$ other than those obtained from an honest run of the protocol. Since $y,b$ \emph{determine} (along with other parameters) the value of the secret $\theta$ a deviation there could lead to breaking the security. For example, instead of the (truly) random $y$ that is obtained in the honest run, the Server can choose $y$ such that he has information on the preimages for the given $k$ or can choose $b$ adaptively depending on values of $\alpha$. 
However, the function $f_k$ is \emph{collision}-resistant, which means that even if the adversary chooses the $y$ he cannot find such $y$ that both preimages are known with a non-negligible probability. Moreover, if the Server chooses $y$, it means that the protocol was not followed and thus the final output state is not going to be related with the value $\theta$ as expected. We conjecture the hard-core function proof (\autoref{Thm:hardcore}) will remain valid in that case with minor modifications. The more significant difficulty, however, comes from ``mixed'' strategies, where the adversary follows partly the protocol (and thus the output qubit is correlated with the classical secret description), and partly deviates. In those cases it is hard to quantify what information the server has, and whether this is strictly less than that of an ideal protocol (where the state $\ket{+_\theta}$ gives some legitimate information).

\noindent\textbf{Issue 2 (verification):} The first thing to note, is that the adversary has in his lab the output state, and therefore (trivially) he can always apply a final-step deviation corrupting the legitimate output. Thus when we speak of verification, we mean a correct state, up to a (reversible) deviation on the Server's side (as the operations $\mathcal{T}_i$ in the definition of specious). The second thing to stress, is that \autoref{protocol:concrete_c_q_factory} cannot be verifiable against a malicious Server, unless some extra mechanism is added. There is a way, by deviating from the instructions, to corrupt the output in a way that depends on the secret classical description ($\theta$), but without actually learning any information about the same classical description. In particular it is possible by measuring all qubits of the first register in $3\alpha$ angle to generate the state $\ket{+_{3\theta}}$ as output. This deviation does not help the Server to learn \emph{any} information about $\theta$ (protocol still ``private'') but affects the output state in a ``non-reversible'' way and thus compromises the verifiability.

\noindent \textbf{A way forward:} The ultimate goal would be to extent QFactory into a Quantum Universal Composable protocol \cite{unruh2010uni} in order to be able to compose it with any other protocol, or at least to proof the security against a malicious adversary. In classical protocols (and recently in quantum too \cite{KMW2017}), the way to boost the security from honest-but-curious to malicious is to introduce a ``compiler'' (e.g. using the construction in \cite{GMW87} or a cut-and-choose technique) and boost the security by essentially enforcing the honest-but-curious behaviour to malicious adversaries (or abort). In our case, the protocol is simple enough, having single qubits as outputs. One method could be to prepare a large string of qubits, and have the Client choose a random subset of those and instructs the Server to measure them. By observing the correct statistics on the ``test'' qubits, one can infer the correct preparation. This is closely related to the parameter-estimation in QKD, and with self-testing \cite{MYS2012}. The exact details are involved, as the analogous cases of compilers, parameter-estimation and self-testing suggests, and will be explored in a future publication.

\section*{Acknowledgements}

A.C. and P.W. are very grateful to Thomas Zacharias, Aggelos Kiayias and especially Yiannis Tselekounis for many useful discussions about the security proofs. L.C. also thanks Atul Mantri and Dominique Unruh for useful preliminary conversations, with a special mention to C\'{e}line Chevalier whose discussions were a precious source of inspiration.
A.C. would also like to show his appreciation to his grandmother, Petra Ilie for all the support and help she has given to him his entire life.
The work was supported by EPSRC grants EP/N003829/1 and EP/M013243/1.

\newpage%
\appendix

\begin{appendices}

\section{PSRQG within several applications}
\label{app:applications}

In \autoref{Sec:applications} we listed several applications that can use the PSRQG functionality to allow for fully classical parties to participate using, a potentially malicious, quantum server. Here we give details on how to use the exact output of our QFactory protocol in these applications. We emphasize that in all protocols in which the ``server'' used by the classical party is a malicious party, the cost of using our QFactory construction is that the security becomes computational and applies in the quantum-honest-but-curious setting.
\begin{enumerate}
\item In the quantum homomorphic encryption scheme AUX in \cite{broadbent2015quantum}, where the target quantum computation must have constant $T$-gate depth, using our QFactory protocol would allow a classical client to participate (delegate such computation) provided, of course, that the input/output are classical.
Specifically, as the input is classical, the client will instruct the server to prepare a quantum state of the classical one-time pad of this input (and then the client will also send to the server a classical homomorphic encryption of the classical one-time pad key of each of the input's bits). Moreover, for every $T$-gate in the quantum computation, the auxiliary qubits in the evaluation key can be produced using QFactory: $\left\{\ket{+}, P\ket{+} = \ket{+_{2{\pi}/4}}, Z\ket{+} = \ket{+_{4\pi/4}}, ZP\ket{+} = \ket{+_{6\pi/4}}\right\}$. \\ 
We note that due to the use of a classical fully homomorphic encryption scheme, the AUX protocol \cite{broadbent2015quantum} has computationally security, thus, the computational security offered by the QFactory is not downgrading the security of this protocol. 

\item In the blind delegated quantum computation protocol of \cite{bfk}, the client needs to prepare and send to the server qubits, randomly chosen, from the set of states $\{\ket{+}, \ket{+_{\pi/4}}, ... , \ket{+_{7\pi/4}} \}$. This is exactly the set of states of Eq. (\ref{eq:output_states}) which are given by the QFactory. It follows that our construction eliminates the need for quantum communication and thus any classical client can use this protocol.

\item The verifiable blind quantum computation protocol in \cite{FKD2017}, the only quantum ability that the verifier needs is to prepare and send to the prover single qubits, randomly chosen, from the set of states $\{\ket{+_{k\pi/4}}\}$. Again, this is exactly the set of states given by the QFactory. Therefore, the quantum communication, and thus quantum abilities of the verifier, can be completely replaced by the QFactory functionality.

\item For the quantum key-distribution construction in \cite{BB84}, we can use two conjugate bases to realise this protocol, namely: the diagonal basis $\{\ket{+}, \ket{+_{\pi}}\}$ and the left-right handed circular basis $\{\ket{+_{\pi/2}}, \ket{+_{3\pi/2}}\}$. All these four quantum states can be obtained by the QFactory protocol\footnote{Actually, if one was interested in obtaining exactly and only this set of states, we can modify the QFactory to do so, in a way that actually simplifies the proofs too. For example, we could simply ask to measure the qubits in the second stage in the basis $\{\ket{\pm_{\alpha_i\pi/2}}\}$.}. As the quantum coin flipping protocol of \cite{BB84}, the quantum money protocol of \cite{BOV2018}, or the quantum digital signatures protocol of \cite{WDKA2015} only require, as in \cite{BB84}, any pair of conjugate bases, this implies that we can use QFactory in a straight forward way. On the other hand, for the quantum coin flip construction in \cite{PCDK2011}, the single qubit quantum states needed are of the form $ \sqrt{a}\ket{0} + {(-1)}^{{\alpha}_i}\sqrt{1 - a}\ket{1} $, which might be achieved by a different construction of the PSRQG.

\item In the multiparty quantum computation protocol of \cite{KP16}, the $n$ clients need to send multiple copies of quantum states in the set $\{\ket{+_{k\pi/4}}\}$ to the server, who entangles and measures them all but one. Using QFactory all these states will be prepared by the server, which would enable the $n$ clients to be fully classical.

\item The verifiable blind quantum computation protocols in \cite{Broadbent2015}, \cite{fk} or the two-party quantum computation protocols in \cite{KW2017}, \cite{KMW2017}, require the honest party to prepare single qubit states from the set of states $\{\ket{0}, \ket{1},  \ket{+_{k\pi/4}} \}$. While the QFactory primitive can output the $\ket{+_{k\pi/4}}$ states, in order to make the honest party fully classical, we need to change the construction of QFactory in order to also be able to output the $\ket{0}$ and $\ket{1}$ states, and maintain the same guarantees in privacy as in the QFactory.

\end{enumerate}

\section{Full proof of \autoref{Thm:hardcore}}
\label{app:hardcore}

\begin{proof}
  From Eq. (\ref{eq:proof1}) we have the definition of $\widetilde{B}$
  in terms of the three corresponding bits and we aim to prove that it
  is hard-core, i.e. that Eq. (\ref{eq:proof2}) is satisfied. We will
  follow the five steps outlined in the main text. Before that let us
  define some simple identities that will be used.
  $ \forall a,b,d,e \in \mathbb{N}$, we have:

\begin{equation}
\tag{I1}\label{I1} (a + b) \bmod 8 = (a \bmod 8 + b \bmod 8) \bmod 8 
\end{equation} 
\begin{equation}
\tag{I2}\label{I2} [(a + b) \bmod 8] \bmod 4 =  (a \bmod 4 + b \bmod 4) \bmod 4
\end{equation}
\begin{equation}
\tag{I3}\label{I3} [(a + b) \bmod 4] \bmod 2 =  (a \bmod 2 + b \bmod 2) \bmod 2
\end{equation}
\begin{equation}
\tag{I4}\label{I4} (2a) \bmod 4 = 2 \cdot (a \bmod 2)
\end{equation}
\begin{equation}
\tag{I5}\label{I5} (2a) \bmod 8 = 2 \cdot (a \bmod 4)
\end{equation}
\begin{equation}
\tag{I6}\label{I6} (2d + e) \bmod 4 - e \bmod 2 = [2d + e - (e \bmod 2)] \bmod 4
\end{equation}
\begin{equation}
\tag{I7}\label{I7} (2d + e) \bmod 8 - e \bmod 2 = [2d + e - (e \bmod 2)] \bmod 8
\end{equation}
We now return to Eq. (\ref{eq:proof1})

\[\widetilde{B} = g(x-x')=\sum \limits_{i = 1}^n (x_i - x_i')(4b_i + \alpha_i) \bmod 8\]

where $ \widetilde{B} = \widetilde{B_1}\widetilde{B_2}\widetilde{B_3} $, with $\widetilde{B_j} \in \{0, 1\}.$ We also define $\widetilde{x} = x \oplus x' \in \{0, 1\}^n$ and $z \in \{-1, 0, 1\}^n$ be the vector defined as: $z_i = x_i - {x'}_i = (-1)^{{x'}_i}\widetilde{x}_i  \, , \, \forall i \in \{1,2,...,n\}$. \\

\noindent\textbf{Step 1:} We will rewrite this expression in terms of
single bits and obtain the expression of Eq. (\ref{eq:proof2}).
We have $g(z) = \sum \limits_{i = 1}^n z_i(4b_i + \alpha_i) \bmod 8$, or equivalently: 
\EQ{\label{eq:app1} 4\widetilde{B_1} + 2\widetilde{B_2} + \widetilde{B_3} = \left[ \left(4 \sum \limits_{i = 1}^n z_ib_i \right) \bmod 8 + \left( \sum \limits_{i = 1}^n z_i \alpha_i \right) \bmod 8 \right] \bmod 8\nonumber
}       

\noindent We define the following terms:
$\alpha_i = 4{\alpha}_i^{(1)} + 2{\alpha}_i^{(2)} + {\alpha}_i^{(3)}$,
where ${\alpha}_i^{(1)}, {\alpha}_i^{(2)}, {\alpha}_i^{(3)}$ are the 3
bits of $\alpha_i$ and ${\alpha}^{(j)} \in \{0, 1\}^n$ are the vectors
consisting of the $j$-th bit of all values ${\alpha}_i$,
$\forall i \in \{1,2,...,n\}, j \in \{1,2,3\}$;

\EQ{S_0 = \sum \limits_{i = 1}^n z_i b_i & ; & S_1 = \sum \limits_{i = 1}^n z_i {\alpha}_i^{(1)}\nonumber\\
S_2 = \sum \limits_{i = 1}^n z_i {\alpha}_i^{(2)} & ; & S_3 = \sum \limits_{i = 1}^n z_i {\alpha}_i^{(3)}\nonumber
}
We also notice that under $\bmod \, 2$, we have that: 
\EQ{S_j \bmod 2 = \sum \limits_{i = 1}^n \widetilde{x_i} {\alpha}_i^{(j)} \bmod 2 = \langle \widetilde{x}, {\alpha}^{(j)} \rangle \bmod 2 \textrm{ , for }j \in \{1, 2, 3\}.\nonumber
} 
Then, we have: 

\EQ{\label{eq:app2}& 4\widetilde{B_1} + 2\widetilde{B_2} + \widetilde{B_3} = ( \sum \limits_{i = 
1}^n (x_i - x_i')(4b_i + \alpha_i)) \bmod 8 = \nonumber \\
& \left[ 4S_0 + \sum \limits_{i = 1}^n (x_i - x_i')( 4{\alpha}_1^{(i)} + 2{\alpha}_2^{(i)} + 
{\alpha}_3^{(i)}) \right] \bmod 8 \nonumber \\
& 4\widetilde{B_1} + 2\widetilde{B_2} + \widetilde{B_3} = (4S_0 + 4S_1 + 2S_2 + S_3) \bmod 8}  
Applying $\bmod \, 2$ to Eq. (\ref{eq:app2}), we get: 

\EQ{\widetilde{B_3} = (4S_0 \bmod 2 + 4S_1 \bmod 2 + 2S_2 \bmod 2 + S_3 \bmod 2) \bmod 2\nonumber
}
\begin{equation}\label{eq:app3}
\boxed{\widetilde{B_3} = S_3 \bmod 2 = \langle \widetilde{x}, {\alpha}^{(3)} \rangle \bmod 2}
\end{equation}
If, instead we apply $\bmod \, 4$ to Eq. (\ref{eq:app2}), we get: 

\EQ{2\widetilde{B_2} + \widetilde{B_3} &=& [4S_0 \bmod 4 + 4S_1 \bmod 4 + 2S_2 \bmod 4 + S_3 \bmod 4] \bmod 4\nonumber\\
2\widetilde{B_2} + \widetilde{B_3} &=& [(2S_2) \bmod 4 + S_3 \bmod 4] \bmod 4.\textrm{ Using \ref{I4}, we have:}\nonumber \\
2\widetilde{B_2} + \widetilde{B_3} &=& [2(S_2 \bmod 2) + S_3 \bmod 4] \bmod 4\nonumber
}
\EQ{\widetilde{B_2} &=& \frac{1}{2}\{[2(S_2 \bmod 2) + S_3 \bmod 4] \bmod 4 - S_3 \bmod 2\}. \textrm{ Using \ref{I6}:}\nonumber\\
\widetilde{B_2} &=& \frac{1}{2}[2(S_2 \bmod 2) + S_3 \bmod 4 - S_3 \bmod 2] \bmod 4\nonumber \\
\widetilde{B_2} &=& \frac{1}{2}\left\{2 \cdot \left[(S_2 \bmod 2) + \frac{(S_3 \bmod 4 - S_3 \bmod 2)}{2}\right] \right\} \bmod 4. \textrm{ Using \ref{I4}, we obtain:}\nonumber \\
\widetilde{B_2} &=& \left[S_2 \bmod 2 + \frac{(S_3 \bmod 4 - S_3 \bmod 2)}{2}\right] \bmod 2. \nonumber\\
\widetilde{B_2} &=& S_2 \bmod 2 \oplus \left( \frac{S_3 \bmod 4 - S_3 \bmod 2}{2} \right)\nonumber
}
\begin{equation}\label{eq:app4}
\boxed{\widetilde{B_2} = \langle \widetilde{x}, {\alpha}^{(2)} \rangle \bmod 2 \oplus \left( \frac{S_3 \bmod 4 - S_3 \bmod 2}{2} \right)}
\end{equation}
Finally, we can derive $\widetilde{B_1}$: 
\EQ{& \widetilde{B_1} = \frac{1}{4}\left\{(4S_0 + 4S_1 + 2S_2 + S_3) \bmod 8 - (S_3 \bmod 2) - \right.  \nonumber\\ 
& \left. 2\left[\left(S_2 \bmod 2 + \frac{(S_3 \bmod 4 - S_3 \bmod 2)}{2}\right) \bmod 2\right]\right\}\nonumber} 
Using \ref{I7}:  
\EQ{& \widetilde{B_1} = \frac{1}{4}\left\{(4S_0 + 4S_1 + 2S_2 + S_3 - (S_3 \bmod 2)) \bmod 8 - \right. \nonumber\\
& \left. 2\left[\left(S_2 \bmod 2 + \frac{(S_3 \bmod 4 - S_3 \bmod 2)}{2}\right) \bmod 2\right]\right\}\nonumber
} 
Using \ref{I5}: 
\EQ{& \widetilde{B_1} = \frac{1}{4}\left\{ 2 \left[\left(2S_0 + 2S_1 + S_2 + \frac{S_3 - S_3 \bmod 2}{2}\right) \bmod  4\right] -  \right. \nonumber\\
& \left. 2\left[\left(S_2 \bmod 2 + \frac{(S_3 \bmod 4 - S_3 \bmod 2)}{2}\right) \bmod 2\right]\right\}\nonumber\\
& \widetilde{B_1} = \frac{1}{2} \left[\left(2S_0 + 2S_1 + S_2 + \frac{S_3 - S_3 \bmod 2}{2}\right) \bmod 4 - \left(S_2 \bmod 2 + \frac{(S_3 \bmod 4 - S_3 \bmod 2)}{2}\right) \bmod 2\right]\nonumber} 
Using \ref{I6} we can rewrite the first term, and we get:
\EQ{\widetilde{B_1} &=& \frac{1}{2} \left\{ \left(S_2 + \frac{S_3 - S_3 \bmod 2}{2}\right) \bmod 2 + \left[ 2(S_0 + S_1) + S_2 + \frac{S_3 - S_3 \bmod 2}{2} - \right. \right. \nonumber\\
& & \left. \left. - \left(S_2 + \frac{S_3 - S_3 \bmod 2}{2}\right) \bmod 2 \right] \bmod 4 - \left[S_2 \bmod 2 + \frac{(S_3 \bmod 4 - S_3 \bmod 2)}{2}\right] \bmod 2 \right\}\nonumber
}
Combining the first and third term:
\EQ{\widetilde{B_1} &=& \frac{1}{2} \left\{ \left[ (S_2 - S_2 \bmod 2) + \frac{S_3 - S_3 \bmod 4}{2}\right] \bmod 2 \, + \right. \nonumber\\
& &\left. + \left[ 2(S_0 + S_1) + S_2 + \frac{S_3 - S_3 \bmod 2}{2} - \left(S_2 + \frac{S_3 - S_3 \bmod 2}{2}\right) \bmod 2 \right] \bmod 4 \right\} \nonumber
}
We notice that both $S_2 - S_2 \bmod 2$ and $\frac{S_3 - S_3 \bmod 4}{2}$ are even, so the first big term is $0$: 
\EQ{ \widetilde{B_1} = \frac{1}{2} \left\{ \left[ 2(S_0 + S_1) + S_2 + \frac{S_3 - S_3 \bmod 2}{2} - \left(S_2 + \frac{S_3 - S_3 \bmod 2}{2}\right) \bmod 2 \right] \bmod 4 \right\}\nonumber
} which can be rewritten as: 
\EQ{\widetilde{B_1} = \frac{1}{2} \left\{ \left\{ 2 \cdot \left[ S_0 + S_1 + \frac{\left(S_2 + \frac{S_3 - S_3 \bmod 2}{2}\right) - \left(S_2 + \frac{S_3 - S_3 \bmod 2}{2}\right) \bmod 2}{2}  \right] \right\} \bmod 4 \right\}\nonumber
}
Finally using \ref{I4}, we get:
\EQ{\widetilde{B_1} &=& \left[ S_0 + S_1  + \frac{\left(S_2 + \frac{S_3 - S_3 \bmod 2}{2}\right) - \left(S_2 + \frac{S_3 - S_3 \bmod 2}{2}\right) \bmod 2}{2} \right] \bmod 2\nonumber\\
\widetilde{B_1} &=& S_1 \bmod 2 \oplus S_0 \bmod 2 \oplus \left[ \frac{\left(S_2 + \frac{S_3 - S_3 \bmod 2}{2}\right) - \left(S_2 + \frac{S_3 - S_3 \bmod 2}{2}\right) \bmod 2}{2} \right] \bmod 2 \nonumber
}
\EQ{ \label{eq:app5} & \widetilde{B_1} = \langle \widetilde{x}, {\alpha}^{(1)} \rangle \bmod 2 \oplus \langle \widetilde{x}, b \rangle \bmod 2 \oplus \nonumber \\
& \left[ \frac{\left(S_2 + \frac{S_3 - S_3 \bmod 2}{2}\right) - \left(S_2 + \frac{S_3 - S_3 \bmod 2}{2}\right) \bmod 2}{2} \right] \bmod 2 }
\textbf{Important observation:} $\widetilde{B_1}$, $\widetilde{B_2}$, $\widetilde{B_3}$ all depend on the same value of $x$ and $x'$ (or $\widetilde{x}$, or $z$). Therefore, to make our analysis easier, we can consider that $z$ and $\widetilde{x}$ are fixed. Then, if we define the function: 
\EQ{B(r) = \langle \widetilde{x}, r \rangle \bmod 2 = \left( \sum \limits_{i = 1}^n \widetilde{x_i} r_i \right) \bmod 2
}
we can rewrite $\widetilde{B_3}, \widetilde{B_2}, \widetilde{B_1}$ as in Eq. (\ref{eq:proof2}) completing Step 1: 

\EQ{\widetilde{B_3} &=& B\left(\alpha^{(3)}\right)\nonumber\\
\widetilde{B_2} &=& B\left(\alpha^{(2)}\right) \oplus h_2(z, \alpha^{(3)})\nonumber \\
\widetilde{B_1} &=& B\left(\alpha^{(1)}\right) \oplus h_1(z, \alpha^{(3)}, \alpha^{(2)}, b)
}
Where: 
\EQ{\label{eq:app6}
& h_2(z, \alpha^{(3)}) := \frac{\langle z, {\alpha}^{(3)} \rangle \bmod 4 - \langle z, {\alpha}^{(3)} \rangle \bmod 2}{2} \\ 
& h_1(z, \alpha^{(3)}, \alpha^{(2)}, b) := \langle z, b \rangle \bmod 2 \, \,  \oplus  \\
 &\oplus \, \left[ \frac{ \left( \langle z, {\alpha}^{(2)} \rangle + \frac{\langle z, {\alpha}^{(3)} \rangle - \langle z, {\alpha}^{(3)} \rangle \bmod 2}{2} \right) - \left( \langle z, {\alpha}^{(2)} \rangle + \frac{\langle z, {\alpha}^{(3)} \rangle - \langle z, {\alpha}^{(3)} \rangle \bmod 2}{2} \right) \bmod 2 }{2} \right] \bmod 2 \nonumber}

\noindent\textbf{Step 2:} We see from Eq. (\ref{eq:app5}) that each of
the three bits involve a term similar to that of the GL theorem
\ref{thm:GL} (the $B\left(\alpha^{(i)}\right)$ term), but with two the
important differences. First, there is another term, and the bits of
$\widetilde{B}$ are XORs of the GL-looking term and that other
one. The second type of terms (that involve $h_1,h_2$) depend on
variables that appear in the expressions of other bits, potentially
introducing correlations among the different bits. We will deal with
the issue of correlations in Step 3, while with the effects of having
extra terms in Steps 4 and 5. Here we deal with the second important
difference, namely that the GL-looking terms (those of the form
$\langle \tilde{x},r\rangle\bmod2$) depend on $\tilde{x}$ rather than
$x$ in the inner product. For the remaining Step 2, we assume that the
first issue is resolved and it all reduces to GL theorem subject to
having $\tilde{x}$ rather than $x$.

Since we have $\tilde{x}$ in our expression if we follow the same
proof with that of the GL theorem we can follow the proof until the
point that we end up with obtaining a polynomial number of guesses for
$\tilde{x}$ of which one is the correct value with probability
negligibly close to unity. Now to continue with the proof we are
lacking two elements. First, in GL theorem the use the fact that
computing $f(x)$ given $x$ is easy, and check one-by-one the
polynomial guesses to see which one (if any) is correct. We cannot do
this since we only obtain $\tilde{x}$ and there is no way with no
extra information to check if $\tilde{x}$ actually corresponds to a
given image $y=f(x)=f(x')$. The second issue, is that even if we could
check this, having obtained $\tilde{x}$ does not contradicts the
definition of one-way function (definition
\ref{def:one_way_function}).

We resolve both these issues with two
observations. \textbf{Observation 1:} We notice that because of the
2-regularity property of $f$, $\widetilde{x}$ is uniquely determined
by $x$ ($f(x) = f(x')$, $\widetilde{x} = x \oplus
x'$). \textbf{Observation 2:} The assumption that our 2-regular
trapdoor function $f$ is second preimage resistant (i.e. a QPT
adversary given $x$, cannot find the second preimage $x'$, where
$f(x) = f(x'))$ means that:

\EQ{\underset{x \leftarrow \{0,1\}^n} {\Pr} [\mathcal{A}(1^n, x) = x' \text{ such that } f(x) = f(x')] \leq \negl \nonumber}
As 
\EQ{\underset{x \leftarrow \{0,1\}^n} {\Pr} [\mathcal{A}(1^n, x) = x'] = \underset{x \leftarrow \{0,1\}^n} {\Pr} [\mathcal{A}(1^n, x) = x' \oplus x]\nonumber
} 
we have that: \\
\EQ{\label{eq:app7}\underset{x \leftarrow \{0,1\}^n} {\Pr} [\mathcal{A}(1^n, x) = \widetilde{x}] \leq \negl
}

As we have mentioned, following the GL theorem proof we would obtain
polynomially many guesses for $\tilde{x}_g$ (where subscript $g$
stands for guess). Now by the second preimage resistance, if we are
given $x$ we should be unable to obtain $x'$ in polynomial
time. However, using our polynomially many guesses for $\tilde{x}$ and
checking for each guess if $f(x\oplus\tilde{x}_g)=f(x)$ we can obtain
with probability negligible close to unity the correct $\tilde{x}$ and
therefore come to contradiction with Eq. (\ref{eq:app7}).\\

\noindent\textbf{Step 3:} Since the different bits involve common
variables, to prove that our function is hard-core we need to consider
the issue of correlations. One way to deal with this would be to prove
the independence of both the bits and of the optimal guessing
algorithms. We, instead, use the \textit{Vazirani-Vazirani Theorem}
\ref{thm:VV}, which for our case it means that it suffices to show
that: $\widetilde{B_1}$, $\widetilde{B_2}$, $\widetilde{B_3}$,
$\widetilde{B_1} \oplus \widetilde{B_2}$,
$\widetilde{B_1} \oplus \widetilde{B_3}$,
$\widetilde{B_2} \oplus \widetilde{B_3}$,
$\widetilde{B_1} \oplus \widetilde{B_2} \oplus \widetilde{B_3}$ are
all hard-core predicates for $f$.

The most general expression that captures all these (to be proven)
hard-core predicates (formed from the subsets of
$\{\widetilde{B_1}, \widetilde{B_2}, \widetilde{B_1} \}$ ) is:

\EQ{ E(x, r_1, r_2, r_3) = \langle \widetilde{x}, r_1 \oplus r_2\rangle \bmod 2 \oplus g(z, r_2, r_3)}
where $g$ can be any binary function. Using $\langle \widetilde{x}, r_1 \oplus r_2\rangle \bmod 2=\langle \widetilde{x}, r_1\rangle \bmod 2\oplus \langle \widetilde{x}, r_2\rangle \bmod 2$ we can rewrite this as

\EQ{\label{eq:app8} E(x, r_1, r_2, r_3) = \langle \widetilde{x}, r_1\rangle \bmod 2 \oplus g'(z, r_2, r_3)}
where $g'(z,r_2,r_3)=\langle \widetilde{x}, r_2\rangle \bmod 2\oplus g(z, r_2, r_3)$. In other words, in order to prove that $\widetilde{B_1}\widetilde{B_2}\widetilde{B_3}$ is a hard-core function for $f$, it suffices to prove that $E(x, r_1, r_2, r_3)$ is a hard-core predicate for $f$.\\

\noindent\textbf{Step 4:} In this step, we will see how we can
effectively fix all but one variables, and turn Eq. (\ref{eq:app8}) to
depend only on $r_1$.

We want to prove that if there exists a QPT algorithm $\mathcal{A}$
that can guess the predicate $E$ as given in Eq. (\ref{eq:app8}),
$\mathcal{A} (f(x), r_1, r_2, r_3) = E(x, r_1, r_2, r_3)$ with
probability non-negligible better than $1/2$, then the second preimage
resistance assumption is violated by constructing a QPT algorithm
$\mathcal{A}'$ that, when given $x$ can obtain $\widetilde{x}$,
$\mathcal{A}'(f(x), 1^n) = \widetilde{x}$, with non-negligible
probability.

We now assume that the advantage $\mathcal{A}$ has in computing is
$\eps(n)$, without restricting $\eps(n)$ to be non-negligible,
aiming to reach a contradiction if this $\eps(n)$ is inverse
polynomial. We therefore assume:

\EQ{\label{eq:app9} \underset{\substack{ x \leftarrow \{0, 1\}^n \\ r_1 \leftarrow \{0, 1\}^{n} \\ r_2 \leftarrow \{0, 1\}^{n} \\ r_3 \leftarrow \{0, 1\}^{n}}} {\Pr} [\mathcal{A}(f(x), r_1, r_2, r_3) = E(x, r_1, r_2, r_3)] = \frac{1}{2} + \eps(n) 
}

Since the different variables ($x,r_1,r_2,r_3$) are chosen randomly
and independently we can effectively ``fix'' one variable.  We can
consider the set of values of that variable that satisfy some
condition that we need and name these values ``Good'' values (e.g. the
guessing algorithm $\mathcal{A}$ to succeed with higher than
negligible probability). Then we can work with the assumption that the
fixed variable is within the ``Good'' set, with only caveat that at
the end, whatever probability of inversion we obtain, is conditional
on the fixed variables being ``Good'' and thus we need to multiply
that probability with the probability that the fixed variable is
``Good''. For this reason, it is important that the probability of
being ``Good'' (ratio of cardinality of Good values with total values)
should be at least inverse polynomial.

We will, therefore, be using the following Lemma:
\begin{lemma}
  Let
  $\underset{(v_1,\cdots,v_k)| v_j\leftarrow\{0,1\}^n \forall
    j}{\Pr}[\text{Guessing}]\geq p+\eps(n)$, then for any variable
  $v_i$, $\exists$ a set $\text{Good}_{v_i}\subseteq\{0,1\}^n$ of size
  at least $\frac{\eps(n)}{2}2^n$, where $\forall$
  $\tilde{v}_i\in \text{Good}_{v_i}$

\[
\underset{(v_1,\cdots,\cancel{\tilde{v}_i},\cdots,v_k)| v_j\in\{0,1\}^n}{\Pr}[\text{Guessing}]\geq p+\frac{\eps(n)}2
\]
where the probability is taken over all variables except $v_i$.
\end{lemma}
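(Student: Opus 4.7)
The plan is a standard reverse-Markov (averaging) argument by contradiction. For each fixed value $\tilde v_i \in \{0,1\}^n$, define the conditional success probability
\[
P(\tilde v_i) := \underset{(v_1,\ldots,\cancel{v_i},\ldots,v_k) \leftarrow \{0,1\}^{n(k-1)}}{\Pr}[\text{Guessing} \mid v_i = \tilde v_i].
\]
By the law of total probability and uniformity of $v_i$, averaging $P$ over $\tilde v_i \in \{0,1\}^n$ recovers the overall guessing probability, so the hypothesis gives
\[
\mathbb{E}_{\tilde v_i \leftarrow \{0,1\}^n}[P(\tilde v_i)] \geq p + \eps(n).
\]

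Next I would set $\text{Good}_{v_i} := \{\tilde v_i \in \{0,1\}^n : P(\tilde v_i) \geq p + \eps(n)/2\}$ and assume for contradiction that $|\text{Good}_{v_i}| < \tfrac{\eps(n)}{2}\cdot 2^n$. Splitting the expectation according to whether $\tilde v_i$ lies in $\text{Good}_{v_i}$ or not, and bounding $P(\tilde v_i) \leq 1$ on the good set and $P(\tilde v_i) < p + \eps(n)/2$ on the complement, one obtains
\[
\mathbb{E}_{\tilde v_i}[P(\tilde v_i)] < \frac{|\text{Good}_{v_i}|}{2^n} \cdot 1 + \left(1 - \frac{|\text{Good}_{v_i}|}{2^n}\right)\left(p + \frac{\eps(n)}{2}\right) < \frac{\eps(n)}{2} + p + \frac{\eps(n)}{2} = p + \eps(n),
\]
contradicting the lower bound on $\mathbb{E}[P]$. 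Hence $|\text{Good}_{v_i}| \geq \tfrac{\eps(n)}{2}\cdot 2^n$, which is exactly the claim, and by definition every $\tilde v_i \in \text{Good}_{v_i}$ satisfies the required conditional bound.

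There is essentially no obstacle here; it is a clean reverse-Markov argument. The only subtle point worth flagging is that we intentionally lose a factor of two twice (once in the measure of $\text{Good}_{v_i}$ and once in the conditional probability threshold), which is the standard trade-off when passing from an average-case to a pointwise guarantee on a large subset. This is exactly what makes the lemma useful in Step 4 of the main proof: it can be iterated to fix each of the auxiliary variables in turn at only a polynomial cost in the success probability, as long as the initial $\eps(n)$ is inverse-polynomial.
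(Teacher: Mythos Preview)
Your argument is correct and is essentially identical to the paper's own proof: both split the average guessing probability over $\text{Good}_{v_i}$ and its complement, bound the good part by $1$ and the bad part by $p+\eps(n)/2$, and solve for $|\text{Good}_{v_i}|$. The only cosmetic difference is that you phrase it as a contradiction while the paper writes the inequality chain directly.
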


\begin{proof}
\EQ{
& p+\eps(n) \leq \frac1{2^n}\sum_{v_i\in \text{Good}_{v_i}}\underset{(v_1,\cdots,\cancel{v_i},\cdots,v_k)| v_j\in\{0,1\}^n}{\Pr}[\text{Guessing}] + \nonumber\\  
& \frac1{2^n}\sum_{v_i\notin \text{Good}_{v_i}}\underset{(v_1,\cdots,\cancel{v_i},\cdots,v_k)| v_j\in\{0,1\}^n}{\Pr}[\text{Guessing}]\nonumber\\
& \leq \frac1{2^n}|\text{Good}_{v_i}|+\frac1{2^n}\sum_{v_i\notin \text{Good}_{v_i}}(p+\frac{\eps(n)}2)\nonumber\\}
\EQ{
p+\eps(n)&\leq& \frac1{2^n}|\text{Good}_{v_i}|+(p+\frac{\eps(n)}2)\nonumber\\
\frac{\eps(n)}{2}2^n&\leq&|\text{Good}_{v_i}|
}
\end{proof}
Now we return to Eq. (\ref{eq:app9}), we fix the set of 
$\text{Good}_x$, the set of inputs $x$ such that:
\EQ{\label{eq:app10}\underset{\substack{r_1 \leftarrow \{0, 1\}^{n} \\ r_2 \leftarrow \{0, 1\}^{n} \\ r_3 \leftarrow \{0, 1\}^{n}}} {\Pr} [\mathcal{A}(f(x), r_1, r_2, r_3) = E(x, r_1, r_2, r_3)] \geq \frac{1}{2} + \frac{\eps(n)}{2} \, \, \, \, \, \, \forall x \in \text{Good}_x}
and using \autoref{lemma:proof1} we have
$|\text{Good}_x|\geq \frac{\eps(n)}{2} 2^n$. Note that fixing $x$
is equivalent with fixing $\tilde{x}$ or $z$, given the definition of
the 2-regular function $f$. Starting with Eq. (\ref{eq:app10}) we can
now fix $r_3$ (conditional on $x\in\text{Good}_x$)

\EQ{\label{eq:app11}
\underset{\substack{r_1 \leftarrow \{0, 1\}^{n} \\ r_2 \leftarrow \{0, 1\}^{n}}} {\Pr} [\mathcal{A}(f(x), r_1, r_2, r_3) = E(x, r_1, r_2, r_3)] \geq \frac{1}{2} + \frac{\eps(n)}{4}  \nonumber\\ 
\, \, \,  \forall x \in \text{Good}_x \, \wedge \, \forall r_3 \in \text{Good}_{r_3}
}
where using again \autoref{lemma:proof1} we have $|\text{Good}_{r_3}|\geq \frac{\eps(n)}{4}2^n$. Finally, we can fix $r_2$ (conditional on  $x\in\text{Good}_x$ and $r_3\in\text{Good}_{r_3}$)

\EQ{\label{eq:app12}
\underset{\substack{r_1 \leftarrow \{0, 1\}^{n}}} {\Pr} [\mathcal{A}(f(x), r_1, r_2, r_3) = E(x, r_1, r_2, r_3)] \geq \frac{1}{2} + \frac{\eps(n)}{8} \nonumber\\
\, \, \, \, \, \, \forall x \in \text{Good}_x \, \wedge \, \forall r_3 \in \text{Good}_{r_3} \, \wedge \, \forall r_2 \in \text{Good}_{r_2} \label{eq:proba_good}
}
and again by \autoref{lemma:proof1} we have $|\text{Good}_{r_2}|\geq \frac{\eps(n)}{8}2^n$. \\ 

\noindent\textbf{Step 5:} In Eq. (\ref{eq:app12}) the only variable is
$r_1$. Using Eq. (\ref{eq:app8}) we can see that given that
$x,r_2,r_3$ are all fixed,
$E(x,r_1,r_2,r_3)=\langle\tilde{x},r_1\rangle\oplus g'(z,r_2,r_3)$
where $g'(z,r_2,r_3)=c$ is constant. Because $c$ is a constant, we can define $\tilde{\cA} = \cA \oplus c$. Now, we can easily see that:

\begin{align*}
  &\underset{\substack{r_1 \leftarrow \{0, 1\}^{n}}} {\Pr} [\cA(f(x), r_1, r_2, r_3) = \langle \widetilde{x}, r_1\rangle \bmod 2 \oplus g'(z, r_2, r_3)]\\
  =& \underset{\substack{r_1 \leftarrow \{0, 1\}^{n}}} {\Pr} [\tilde{\cA}(f(x), r_1, r_2, r_3) = \langle \widetilde{x}, r_1\rangle \bmod 2]
\end{align*}

So, using Eq.~(\ref{eq:proba_good}), we obtain

\EQ{
\underset{\substack{r_1 \leftarrow \{0, 1\}^{n}}} {\Pr} [\tilde{\cA}(f(x), r_1, r_2, r_3) = \langle \widetilde{x}, r_1\rangle \bmod 2] \geq \frac{1}{2} + \frac{\eps(n)}{8} \nonumber\\
\, \, \, \, \, \, \forall x \in \text{Good}_x \, \wedge \, \forall r_3 \in \text{Good}_{r_3} \, \wedge \, \forall r_2 \in \text{Good}_{r_2} 
}
which is exactly the expression in GL theorem. There, one obtains
guesses for inversion, i.e. to obtain $\tilde{x}$ with a polynomial in
$\eps(n)$ probability of success, given the fixed
$x,r_2,r_3$'s. Multiplying this with the probability of actually being
in $\text{Good}_x$ and $\text{Good}_{r_3}$ and $\text{Good}_{r_2}$ we
obtain another polynomial in $\eps(n)$. This rules out the
possibility of $\eps(n)$ being inverse polynomial, since that
would break the second preimage resistance.  As we have already
stated, guessing $\tilde{x}$ with inverse polynomial success
probability does not contradict the one-way property of the trapdoor
function, but it does contradict the second preimage resistance, since
given $x$ and $\tilde{x}$ one can obtain deterministically $x'$.

Concretely, using GL proof to construct from $\tilde{\cA}$, a QPT
algorithm $\mathcal{A}'$ that obtains $\widetilde{x}$,
$\mathcal{A}'(f(x), 1^n) = \widetilde{x}$ for all inputs
$x \in \{0, 1\}^n$, when, $x \in \text{Good}_x$,
$r_3 \in \text{Good}_{r_3}$ and $r_2 \in \text{Good}_{r_2}$, this
algorithm $\mathcal{A'}$ succeeds with probability:

\EQ{\underset{\substack{x \leftarrow \{0, 1\}^{n} \\ r_3 \leftarrow \{0, 1\}^{n} \\ r_2 \leftarrow \{0, 1\}^{n}}} {\Pr} [\mathcal{A}'(f(x),1^n) = \widetilde{x}\, | \, x \in {\text{Good}_x} \, \wedge \, r_3 \in \text{Good}_{r_3} \, \wedge \, r_2 \in \text{Good}_{r_2}] \nonumber\\
 \geq \frac{\eps^2(n)}{2(32n + \eps^2(n))} \geq
 \frac{\eps^2(n)}{2(32n + 1)}\nonumber
}
Then, we have:

\EQ{& \underset{\substack{x \leftarrow \{0, 1\}^{n} \\ r_3 \leftarrow \{0, 1\}^{n} \\ r_2 \leftarrow \{0, 1\}^{n}}} {\Pr} [\mathcal{A}'(f(x), 1^n) = \widetilde{x}] \geq \nonumber\\
& \underset{\substack{x \leftarrow \{0, 1\}^{n} \\ r_3 \leftarrow \{0, 1\}^{n} \\ r_2 \leftarrow \{0, 1\}^{n}}} {\Pr} [\mathcal{A}'(f(x), 1^n) = \widetilde{x} \, \wedge \, x \in {\text{Good}_x} \, \wedge \, r_3 \in \text{Good}_{r_3} \, \wedge \, r_2 \in \text{Good}_{r_2}] \geq \nonumber \\
& \geq \underset{\substack{x \leftarrow \{0, 1\}^{n} \\ r_3 \leftarrow \{0, 1\}^{n} \\ r_2 \leftarrow \{0, 1\}^{n}}} {\Pr} [\mathcal{A}'(f(x), 1^n) = \widetilde{x} \, | \, x \in {\text{Good}_x} \, \wedge \, r_3 \in \text{Good}_{r_3} \, \wedge \, r_2 \in \text{Good}_{r_2}] \cdot \nonumber\\
& \cdot \underset{\substack{x \leftarrow \{0, 1\}^{n} \\ r_3 \leftarrow \{0, 1\}^{n} \\ r_2 \leftarrow \{0, 1\}^{n}}} {\Pr} [x \in {\text{Good}_x} \, \wedge \, r_3 \in \text{Good}_{r_3} \, \wedge \, r_2 \in \text{Good}_{r_2}]\nonumber
}
We now see that

\EQ{& \Pr[x \in {\text{Good}_x} \, \wedge \, r_3 \in \text{Good}_{r_3} \, \wedge \, r_2 \in \text{Good}_{r_2}] = \nonumber\\
& \Pr[r_2\in \text{Good}_{r_2}|r_3\in \text{Good}_{r_3}\wedge x\in \text{Good}_x]\cdot \Pr[r_3\in \text{Good}_{r_3}|x\in \text{Good}_x]\times \Pr[x\in \text{Good}_x]\nonumber
}
By construction we have $\Pr[x\in \text{Good}_x]=\frac{|\text{Good}_x|}{2^n}$ and $\Pr[r_3\in \text{Good}_{r_3}|x\in \text{Good}_x]=\frac{|\text{Good}_{r_3}|}{2^n}$ and $\Pr[r_2\in \text{Good}_{r_2}|r_3\in \text{Good}_{r_3}\wedge x\in \text{Good}_x]=\frac{|\text{Good}_{r_2}|}{2^n}$, which leads to

\EQ{\underset{\substack{x \leftarrow \{0, 1\}^{n} \\ r_3 \leftarrow \{0, 1\}^{n} \\ r_2 \leftarrow \{0, 1\}^{n}}} {\Pr} [\mathcal{A}'(f(x), 1^n) = \widetilde{x}]  &\geq& \frac{\eps^2(n)}{2(32n + 1)} \cdot \frac{|{\text{Good}_x}|}{2^n} \, \cdot \, \frac{|{\text{Good}_{r_3}}|}{2^n} \, \cdot \, \frac{|{\text{Good}_{r_2}}|}{2^n}  \nonumber\\
&\geq& \frac{\eps^5(n)}{128(32n + 1)}
}

where we can view $r_3$ and $r_2$ as the internal randomness of the
inversion algorithm $\mathcal{A}'$. It is clear that if $\eps(n)$
is non-negligible, it means that there exists polynomial $p(n)$ such
that $\eps(n)=1/p(n)$, and then

\EQ{\underset{\substack{x \leftarrow \{0, 1\}^{n} \\ r_3 \leftarrow \{0, 1\}^{n} \\ r_2 \leftarrow \{0, 1\}^{n}}} {\Pr} [\mathcal{A}'(f(x), 1^n) = \widetilde{x}]  &\geq& \frac{1}{128(32n + 1)p(n)^5}
}

which as explained in Step 2 breaks second preimage resistance,
Eq. (\ref{eq:app7}). Since all the terms given in Step 3
($\widetilde{B}_i,\widetilde{B}_i\oplus\widetilde{B}_j,
\widetilde{B}_1\oplus \widetilde{B}_2\oplus \widetilde{B}_3$) are of
the form $E(x,r_1,r_2,r_3)$ as in Eq. (\ref{eq:app8}) our analysis
suffices to prove that $\widetilde{B}_1\widetilde{B}_2\widetilde{B}_3$
is a hard-core function for $f$.

\end{proof}

\section{Proof of \autoref{Thm:bijective_to_desired}}\label{app:bijective}

\begin{lemma}[two-regular]
If $\mathcal{G}$ is a family of bijective functions, then $\mathcal{F}$ is a family of two-regular functions.
\end{lemma}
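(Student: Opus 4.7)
The plan is to mirror the proof of \autoref{lemma:two-regular}, but with the simplification that bijectivity (rather than just injectivity plus the trapdoor-shift trick) directly ensures the existence and uniqueness of one preimage in each ``branch'' $c \in \{0,1\}$ of the construction.

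First I would fix any $y \in \Im f_{k'} \subseteq R$ where $k' = (k_1, k_2)$. The goal is to show $|f_{k'}^{-1}(y)| = 2$. Since $g_{k_1}$ is bijective, there exists a unique $x_1 \in D$ with $g_{k_1}(x_1) = y$, so $(x_1, 0)$ is the unique preimage of $y$ of the form $(\cdot, 0)$. Analogously, bijectivity of $g_{k_2}$ yields a unique $x_2 \in D$ with $g_{k_2}(x_2) = y$, giving the unique preimage $(x_2, 1)$ of the form $(\cdot, 1)$. These two preimages are automatically distinct because their second coordinates differ, so
\[
f_{k'}^{-1}(y) = \{(x_1, 0),(x_2, 1)\},
\]
which has cardinality exactly $2$.

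A small sanity check I would include is that $\Im f_{k'}$ is in fact nonempty (indeed equal to $R$) under bijectivity of both $g_{k_1}$ and $g_{k_2}$, so the statement is nonvacuous. No obstacle is expected here: unlike the injective-homomorphic case, we do not need to worry about the ``shifted'' preimage leaving the domain, and no $x_0 \neq 0$ condition is required to guarantee that the two preimages are distinct, since distinctness is enforced by the second coordinate $c$.
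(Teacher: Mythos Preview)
Your proposal is correct and follows essentially the same approach as the paper: fix $y$, use bijectivity of $g_{k_1}$ and $g_{k_2}$ to obtain the unique preimages $(g_{k_1}^{-1}(y),0)$ and $(g_{k_2}^{-1}(y),1)$, and conclude that $f_{k'}^{-1}(y)$ has exactly two elements. Your added remarks about distinctness via the second coordinate and about $\Im f_{k'} = R$ are sound and slightly more explicit than the paper's version.
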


\begin{proof}

For every $y \in \Im f_{k'} \subseteq R$, where $k'=(k_1,k_2)$:
\begin{enumerate}
\item Since $\Im f_{k'}= \Im g_{k_1}$ and $g_{k_1}$ is bijective, there exist unique $x_1:=g^{-1}_{k_1}(y)$ such that $f_{k'}(x,0)=g_{k_1}(x)=y$.
\item Since $\Im f_{k'}= \Im g_{k_2}$ and $g_{k_2}$ is bijective, there exist unique $x_2:=g^{-1}_{k_2}(y)$ such that $f_{k'}(x,1)=g_{k_2}(x)=y$.\end{enumerate}
Therefore, we conclude that:

\EQ{\label{eq:preimages2}
\forall \ y \ \in \Im f_{k'}: \ f_{k'}^{-1}(y):=\{(g_{k_1}^{-1}(y),0),(g_{k_2}^{-1}(y),1)\} 
} 

\end{proof}

\begin{lemma}[trapdoor]
  If $\mathcal{G}$ is a family of bijective trapdoor functions, then $\mathcal{F}$ is a family of trapdoor functions.
\end{lemma}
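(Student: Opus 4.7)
The plan is to verify the three conditions of \autoref{def:trapdoor_function} for the family $\mathcal{F}$ produced by the {\tt FromBij} construction: (i) a PPT key/trapdoor generator, (ii) one-wayness of $\mathcal{F}$, and (iii) a PPT inversion algorithm that recovers all preimages given the trapdoor. One-wayness will be treated in a separate lemma, directly analogous to \autoref{lemma:onewayFromInj} from the injective case, so the substantive content of the present lemma reduces to exhibiting {\tt FromBij.Gen} and {\tt FromBij.Inv} and checking that the latter recovers both preimages with non-negligible probability.

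For (i), I observe that {\tt FromBij.Gen}$_{\mathcal{F}}$ simply runs the underlying PPT algorithm $\text{Gen}_{\mathcal{G}}$ twice on independent randomness and outputs $k':=(k_1,k_2)$ together with the concatenated trapdoor $t_k':=(t_{k_1},t_{k_2})$, which is manifestly PPT. For (iii), I take {\tt FromBij.Inv}$_{\mathcal{F}}$ exactly as written in the construction: given $y\in \Im f_{k'}$, call $\text{Inv}_{\mathcal{G}}(k_1, y, t_{k_1})$ to obtain $x_1$, then $\text{Inv}_{\mathcal{G}}(k_2, y, t_{k_2})$ to obtain $x_2$, and return $((x_1,0),(x_2,1))$. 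Each of these steps is PPT by the trapdoor property of $\mathcal{G}$.

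Correctness is immediate from the preceding two-regular lemma (Eq.~(\ref{eq:preimages2})), which identifies the two unique preimages of any $y\in \Im f_{k'}$ as precisely $(g_{k_1}^{-1}(y),0)$ and $(g_{k_2}^{-1}(y),1)$; bijectivity of each $g_{k_i}$ further guarantees that $\text{Inv}_{\mathcal{G}}$ returns exactly this unique preimage. Each call succeeds with non-negligible probability over its own $(k_i,t_{k_i})$ sampling, and since the two samplings are independent, the joint success probability is a product of two non-negligible quantities and hence remains non-negligible. Unlike in the {\tt FromInj} case there is no auxiliary element $x_0$ to propagate, so the inversion decomposes cleanly into two independent trapdoor inversions and I anticipate no real obstacle --- the argument is a direct reduction to the trapdoor property of $\mathcal{G}$.
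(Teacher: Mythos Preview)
Your proposal is correct and follows essentially the same approach as the paper: the paper's proof simply exhibits the inversion algorithm $\mathrm{Inv}_{\mathcal{F}}$ that calls $\mathrm{Inv}_{\mathcal{G}}$ once with each trapdoor $t_{k_1},t_{k_2}$ and returns $((x_1,0),(x_2,1))$, leaving the one-wayness to the subsequent lemma. Your write-up is more explicit about checking each clause of \autoref{def:trapdoor_function} and about the non-negligible success probability of the two independent inversions, but the underlying argument is identical.
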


\begin{proof}
Let $y \in \Im f_{k'} \subseteq R$. We construct the following inversion algorithm:\\

\procedure [linenumbering]{$Inv_{\mathcal{F}}(k', y, t_k')$} {
\pccomment{ $t_k' = (t_{k_1}, t_{k_2})$,  $k' = (k_1,k_2)$}\\
x_1 := Inv_{\mathcal{G}}(k_1, y, t_{k_1})\\
x_2 := Inv_{\mathcal{G}}(k_2, y, t_{k_2})\\
\pcreturn (x_1,0) \text{ and } (x_2, 1)
}

\end{proof}

\begin{lemma}[one-way]
If $\mathcal{G}$ is a family of bijective, one-way functions, then $\mathcal{F}$ is a family of one-way functions.
\end{lemma}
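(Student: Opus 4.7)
The plan is to prove one-wayness of $\cF$ by contradiction, mirroring the structure of \autoref{lemma:onewayFromInj} but exploiting the symmetry arising from having two independent keys $(k_1, k_2)$ rather than an algebraic shift by a secret element. Assume there exists a QPT adversary $\cA$ that, given $k' \gets \text{FromBij.Gen}_\cF(1^n)$ and $y = f_{k'}(\bar{x})$ for uniform $\bar{x}$, outputs a valid preimage $(x', b)$ with non-negligible probability $P$. By Eq.~(\ref{eq:preimages2}) the two preimages of $y$ are $(g_{k_1}^{-1}(y), 0)$ and $(g_{k_2}^{-1}(y), 1)$, so recovering either branch reveals the preimage of $y$ under one of the underlying bijections $g_{k_1}$ or $g_{k_2}$.

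To derive a contradiction with the one-wayness of $\cG$, I will construct a QPT adversary $\cA'$ that, given a challenge $(k, y = g_k(x))$ with $(k, \cdot) \gets \text{Gen}_\cG(1^n)$ and uniform $x$, recovers $x$ with non-negligible probability. The reduction proceeds as follows: $\cA'$ samples $(\tilde{k}, \tilde{t}) \gets \text{Gen}_\cG(1^n)$ and a uniform bit $c \gets \{0,1\}$; if $c = 0$ it sets $k' := (k, \tilde{k})$, otherwise it sets $k' := (\tilde{k}, k)$. Because $\tilde{k}$ and the challenge $k$ are both drawn from $\text{Gen}_\cG(1^n)$, the joint distribution of $k'$ is identical to that of a fresh $\text{FromBij.Gen}_\cF(1^n)$ output. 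Moreover, since $g_{\tilde{k}}$ is bijective, $y \in \Im g_{\tilde{k}} = R$, so $y \in \Im f_{k'}$. The reduction then runs $(x', b) \gets \cA(k', y)$, checks via $g_{k_1}(x') = y$ or $g_{k_2}(x') = y$ that the returned pair is a valid preimage, and outputs $x'$ exactly when $b = c$ (since in this case $x'$ is the preimage of $y$ under the slot occupied by the challenge key $k$); otherwise $\cA'$ outputs $\bot$.

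For the analysis, note that $\cA$'s view consists only of $(k', y)$, and this view is statistically independent of the bit $c$: swapping the roles of $k$ and $\tilde{k}$ in the two positions produces the same distribution on $k'$. Hence the bit $b$ returned by $\cA$, whatever strategy $\cA$ uses to pick it, is independent of $c$, so $\Pr[b = c] = 1/2$. Conditioning on $\cA$ succeeding (probability $P$), the reduction succeeds with probability at least $1/2$, giving $\cA'$ an overall success probability of at least $P/2$, which contradicts the one-wayness of $\cG$.

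The only subtle point I expect is precisely this symmetry-and-independence argument: one must verify that placing the challenge randomly into one of the two slots yields a key $k'$ statistically identical to an honestly generated one, and that the choice of slot remains hidden from $\cA$. The remaining ingredients are routine: preimage verification is efficient by one application of the underlying $g_{k_i}$, and quantum-safeness transfers automatically since the reduction is a classical polynomial-time procedure calling $\cA$ once as a black box, so any QPT adversary for $\cF$ yields a QPT adversary for $\cG$.
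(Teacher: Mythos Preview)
Your proposal is correct and follows essentially the same reduction as the paper: randomly embed the challenge key into one of the two slots of $k'$, generate the other slot fresh via $\text{Gen}_{\mathcal{G}}$, run $\mathcal{A}$, and succeed when the returned bit $b$ matches the challenge slot, yielding success probability $P/2$. Your treatment is in fact a bit more careful than the paper's in spelling out why $\Pr[b=c]=\tfrac12$ (via the symmetry of the two Gen samples and the bijectivity making $y$ uniform in $R$ independently of the slot), whereas the paper simply asserts the $P/2$ split.
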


\begin{proof}
We prove it by contradiction. We assume that a PPT adversary $\mathcal{A}$ can invert any function in $\mathcal{F}$ with non-negligible probability $P$ (i.e. given $y\in \Im f_{k'}$ to return a correct preimage of the form $(x',b)$ with probability $P$). We then construct a PPT adversary $\mathcal{A'}$ that inverts any function in $\mathcal{G}$ with the same non-negligible probability $P$ reaching the contradiction since $\mathcal{G}$ is one-way by assumption.

From Eq. (\ref{eq:preimages2}) we know the two preimages of $y$ are: (i) $(g_{k_1}^{-1}(y),0)$ and  (ii) $(g^{-1}_{k_2}(y),1)$.  We now construct an adversary $\mathcal{A'}$ that for any function $g_k : D \rightarrow R$, inverts any output $y=g_k(x)$ with the probability $P/2$.\\

\procedure [linenumbering] {$\mathcal{A}'(k_c, y)$} 
{
r\sample \set{0,1}\\
k_r =k_c\\
(k_{r'}, t_{k_{r'}}) \sample Gen_{\mathcal{G}}(\secparam) \\
\pcif r=0 \pcthen \t k':=(k_r,k_{r'})\\
\pcelse\\
\t k':=(k_{r'},k_r)\\
\t (x', b) \gets \mathcal{A}(k', y) \\
\t \pcif ((b == r) \wedge (g_{k_r}(x') = y) \pcthen\\
\t \t \pccomment{$\mathcal{A}$ returns correct preimage that also corresponds to the challenge of $\mathcal{A'}$}	 \\
\t \t \pcreturn x' \\
\t \pcelse \pccomment{$\mathcal{A}$ failed in giving any of the preimages (happens with probability $1-P$)}\\
\t \t \pccomment{or the preimage returned corresponds to the $r'$ that is not the challenge (happens with probability $P/2$)}\\
\t \t \pcreturn 0 
}\\

The inversion algorithm succeeds with $1-((1-P)+P/2)=P/2$ and thus reaches a contradiction.
\end{proof}

\begin{lemma}[second preimage resistance] If $\mathcal{G}$ is a family of bijective,  one-way functions, then, any function $f \in \mathcal{F}$ is second preimage resistant.\end{lemma}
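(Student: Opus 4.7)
The plan is to prove this by contradiction, reducing to the one-wayness of $\mathcal{G}$. The structural fact I would rely on is Eq.~(\ref{eq:preimages2}): for any $y \in \Im f_{k'}$, the two preimages under $f_{k'}$ are $(g_{k_1}^{-1}(y), 0)$ and $(g_{k_2}^{-1}(y), 1)$. Hence, given a preimage $(x, c)$, producing the (unique) second preimage is equivalent to inverting the ``other'' bijective function $g_{k_{1-c}}$ on $y = f_{k'}(x, c)$. This is exactly what I will exploit: by planting the one-way challenge as $k_{1-c}$ and simulating the given preimage $(x,c)$ using a freshly generated, trapdoored copy of the other function, the second preimage that $\mathcal{A}$ returns is precisely the inverse of the challenge.

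Concretely, I assume toward contradiction that a QPT adversary $\mathcal{A}$, on input $(k', (x, c))$ with $k' \leftarrow Gen_{\mathcal{F}}(1^n)$ and $(x, c)$ uniform in $D \times \{0, 1\}$, outputs the second preimage with non-negligible probability $P$. From $\mathcal{A}$ I construct $\mathcal{A}'$ against the one-wayness of $\mathcal{G}$ as follows. On a challenge $(k, y)$ with $y = g_k(x^*)$ and $x^* \leftarrow D$, $\mathcal{A}'$ samples $r \leftarrow \{0, 1\}$ and $(k'', t_{k''}) \leftarrow Gen_{\mathcal{G}}(1^n)$; sets $k' := (k'', k)$ if $r = 0$ and $k' := (k, k'')$ if $r = 1$; uses the trapdoor $t_{k''}$ to compute $x := Inv_{\mathcal{G}}(k'', y, t_{k''}) = g_{k''}^{-1}(y)$; and finally runs $\mathcal{A}(k', (x, r))$, receiving $(x', 1 - r)$, which it outputs. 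Three checks then suffice: (i) $k'$ is distributed as $Gen_{\mathcal{F}}(1^n)$ since both components are independent $Gen_{\mathcal{G}}$-samples; (ii) conditional on any fixed $r$, bijectivity of $g_k$ makes $y$ uniform in $R$ and bijectivity of $g_{k''}$ then makes $x$ uniform in $D$, so averaging over $r$ yields the uniform distribution on $D \times \{0, 1\}$, exactly matching \autoref{def:second_preimage_resistant}; and (iii) whenever $\mathcal{A}$ succeeds, by construction $k$ sits in position $1 - r$ of $k'$, so its output is $x' = g_{k_{1-r}}^{-1}(y) = g_k^{-1}(y) = x^*$, regardless of $r$. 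Thus $\mathcal{A}'$ inverts $g_k$ with probability exactly $P$, contradicting one-wayness.

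The main delicacy I expect is arranging $\mathcal{A}$'s input to have exactly the right distribution: both uses of bijectivity (of $g_k$ to guarantee $y$ uniform in $R$, and of $g_{k''}$ to lift this to $x$ uniform in $D$) are essential, which is why the lemma assumes bijectivity rather than merely injectivity. Note that this reduction is strictly tighter than the one in \autoref{lemma:onewayFromInj}: because the adversary here must return the preimage corresponding to the bit $1 - r$ that $\mathcal{A}'$ chooses, the reduction loses no factor of two, unlike in the one-way case where $\mathcal{A}$ could return either of the two preimages.
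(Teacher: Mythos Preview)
Your proof is correct and follows essentially the same approach as the paper: plant the one-way challenge as one of the two indices, generate the other index freshly with its trapdoor, use that trapdoor to produce the ``given'' preimage, and read off the challenge inverse from the adversary's second preimage. Your version is slightly more careful than the paper's in that you randomize the bit $r$ (the paper fixes the challenge in position $1$ and always hands the adversary a preimage with last bit $1$), which makes your distribution argument for $(x,c)$ exactly match \autoref{def:second_preimage_resistant} without appealing to symmetry; this also justifies your claim of no factor-two loss.
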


\begin{proof}
Assume there exists a PPT adversary $\mathcal{B}$ that given $k'=(k_1,k_2)$ and $(y,(x,b))$ such that $f_{k'}(x,b)=y$ can find $(x',b')$ such that $f_{k'}(x',b')=y$ with non-negligible probability $P$. From Eq. (\ref{eq:preimages2}) we know that the two preimages have different $b$'s. We now construct a PPT adversary $\mathcal{B'}$ that inverts the function $g_{k_c}$ with the same probability $P$, reaching a contradiction:\\

\procedure [linenumbering]{$\mathcal{B'}(k_c, y)$} {
(k_{2}, t_{k_{2}}) \sample Gen_{\mathcal{G}}(\secparam) \\
x_2=g_{k_2}^{-1}(y)\pccomment{ using the trapdoor $t_{k_2}$}\\
k' := (k_c, k_2) \\
(x, 0) \gets \mathcal{B}(k',y,(x_2,1)) \pccomment{where $y$ is an element from the image of $f_{k'}$} \\
\pcif f_{k'}(x,0)==f_{k'}(x_2,1)==y\\
\pcreturn x \\
\pcelse \pccomment{$\mathcal{B}$ failed to find a second preimage; happens with probability $(1-P)$}\\
\pcreturn 0
}

\end{proof}

\begin{lemma}[quantum-safe]
If $\mathcal{G}$ is a family of quantum-safe trapdoor functions, with properties as above, then $\mathcal{F}$ is also a family of quantum-safe trapdoor functions.
\end{lemma}

\begin{proof}
The properties that require to be quantum-safe is the one-wayness and second preimage resistance. Both these properties of $\mathcal{F}$ that we derived above were proved using reduction to the hardness (one-wayness) of $\mathcal{G}$. Therefore if $\mathcal{G}$ is quantum-safe, its one-wayness is also quantum-safe and thus both properties of $\mathcal{F}$ are also quantum-safe.
\end{proof}

\section{Proof of \autoref{thm:req_param}}\label{app:implementation}

In the following, we will denote by $f(s,e,c)$, the function {\tt REG2.Eval}$_\cP(k,(s,e,c))$ for $k$ the index function obtained by {\tt REG2.Gen}$_\cP(1^n)$, and by $s_0, e_0$ the trapdoor information associated with this function $f$.

We now prove separately the $\delta$-2 regularity, collision resistance, one-wayness and trapdoor property of the function in \autoref{def:REG2_fct}.

\subsection{$\delta$-2 regularity}

Here we describe how to achieve $\delta$-2 regularity using the construction {\tt FromInj} and specifically, the function in \autoref{def:REG2_fct}.

This reduces to ensuring that the two function inputs $(s, e)$ and $(s - s_0, e - e_0)$ both lie within the domain of the function. The input $(s, e)$ is the result of the inversion algorithm, so it is by definition inside the domain. Additionally, as the first element of the domain is only required to be in $Z_q^n$ and as $Z_q$ is closed with subtraction $\mod q$, then $s - s_0 \in Z_q^n$ for any $s$, $s_0$ $\in Z_q^n$.
On the other hand, the second element of the domain is required to be in $Z^m$, such that each component is bounded in absolute value by some value $\mu$. In this case, we are not guaranteed that adding or subtracting two such elements the result is still in the domain. What we want to ensure is that with (at least) constant probability over the choice of $(s, e)$ and $(s_0, e_0)$, the result $(s - s_0, e - e_0)$ is in the domain of the function.

It is not difficult to show that if $(s_0, e_0)$ is chosen arbitrarily from the domain of the function, then $(s - s_0, e - e_0)$ lies within the domain of the function only with inverse exponential in $m$ probability. 
This is why we consider restricting $e_0$ to be within a subset of the domain. By suitable choice of this subset we can make the success probability (of having two preimages) -- seen as a function in $m$ -- to be at least a constant value.

Firstly, we remark that the exact probability of success can be explicitly computed. Indeed, if the trapdoor noise $e_0$ is sampled from a Gaussian of dimension $m$, and standard deviation $\sigma$, and if the noise $e_1$ is sampled uniformly from an hypercube $C$ of length $2\mu$ (both distribution being centered on 0) then the probability that $e_0 + e_1$ is still inside $C$ is:
$$\left(\erf\left(\frac{\sqrt{2} \mu}{\sigma}\right) - \frac{\sigma}{\sqrt{2 \pi} \mu} \left(1-\exp\left(-2 \left(\frac{\mu}{\sigma}\right)^2\right)\right)\right)^m$$

However, for simplicity, and because we do not aim to find optimal parameters, we will use a (simpler) lower bound of this probability (that will be less efficient by a factor of $\sqrt{m}$). To do that, remark that using Lemma 2.5 in \cite{Regev}, we have that if $e_0 \in \mathbb{Z}^m$, such that each component of $e_0$ is sampled from a Gaussian distribution with parameter $\alpha' q$, then we have that every component of the vector $e_0$ is less than $\mu' := \alpha' q \sqrt{m}$ with overwhelming probability when $m$ increases. So one can remark that, up to a negligible term, the Gaussian distribution with parameter $\alpha' q$ is ``closer to 0'' than the uniform distribution on $[-\alpha' q \sqrt{m}; \alpha' q \sqrt{m}]$ for sufficiently large $m$ (i.e. for any $x$, the integral between $-x$ and $x$ of the Gaussian distribution is bigger, up to a negligible term, than the integral of the uniform distribution). Therefore, to obtain a lower bound on the probability of having two preimages, we can consider that $e_0$ is sampled according to the uniform distribution on a hypercube of length $2\alpha' q \sqrt{m}$ rather than according to the Gaussian distribution of parameter $\alpha' q$. This simplifies our analysis, and allows us to find the subset in which $e_0$ must reside, as seen in the following lemma. Note also that if you do not want to do any assumption on the input distribution, and only assume that the infinity norm is smaller than $\mu'$, then the same Lemma applies with the constant $4$ replaced by $2$.

\begin{lemma}[Domain Addition]\label{lemma:intersection_size}
  Let $V = \R^m$ be a vector space of dimension $m$, and let $\cD_{m,\mu}$ be the uniform distribution inside the hypercube of dimention $m$ and length $2\mu$ centered on 0. Then, for any $\mu' < \mu$, we have:
  \[P_{m, \mu,\mu'} := \Pr[||e_0 + e_1||_\infty \leq \mu \  | \  e_0 \leftarrow \cD_{m,\mu'}, e_1 \leftarrow \cD_{m,\mu}] = \left(1-\frac{\mu'}{4\mu}\right)^m\]
  Moreover, if $\mu' = O(\frac{\mu}{m})$ then the probability $P_{m, \mu,\mu'}$ becomes lower bounded by a positive constant.
\end{lemma}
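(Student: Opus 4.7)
The plan is to exploit the product structure of both distributions: since the hypercube is a Cartesian product of intervals and both $e_0$ and $e_1$ have independent coordinates, the event $\|e_0+e_1\|_\infty\le \mu$ factorises as an intersection of $m$ independent single-coordinate events. So $P_{m,\mu,\mu'}$ equals $p^m$, where $p$ is the probability for a single coordinate that $|e_{0,i}+e_{1,i}|\le\mu$ when $e_{0,i}$ is uniform on $[-\mu',\mu']$ and $e_{1,i}$ is uniform on $[-\mu,\mu]$ independently. Computing $p$ is an elementary one-dimensional integral, which I would carry out by conditioning on the value of $e_{0,i}$.

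Concretely, for a fixed $e_{0,i}=x_0\in[-\mu',\mu']$, the values of $e_{1,i}\in[-\mu,\mu]$ satisfying $|x_0+e_{1,i}|\le\mu$ form the interval $[-\mu,\mu]\cap[-\mu-x_0,\mu-x_0]$, whose length is $2\mu-|x_0|$. Dividing by the total length $2\mu$, the conditional probability is $1-|x_0|/(2\mu)$. Averaging over $x_0$ uniform on $[-\mu',\mu']$ gives
\[
p \;=\; \frac{1}{2\mu'}\int_{-\mu'}^{\mu'}\!\Bigl(1-\tfrac{|x_0|}{2\mu}\Bigr)\,dx_0
\;=\; 1 - \frac{1}{4\mu\mu'}\cdot{\mu'}^{\,2}
\;=\; 1-\frac{\mu'}{4\mu}.
\]
Multiplying over the $m$ coordinates yields the claimed identity $P_{m,\mu,\mu'}=\bigl(1-\tfrac{\mu'}{4\mu}\bigr)^{m}$.

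For the asymptotic statement, suppose $\mu'\le C\mu/m$ for some constant $C>0$ (the meaning of $\mu'=O(\mu/m)$). Then
\[
P_{m,\mu,\mu'} \;\ge\; \Bigl(1-\tfrac{C}{4m}\Bigr)^{m},
\]
and the right-hand side is monotonically increasing in $m$ (for $m\ge C/4$) and converges to $e^{-C/4}>0$. Hence $P_{m,\mu,\mu'}$ is lower bounded by the positive constant $e^{-C/4}$ (or, if one prefers a uniform bound in $m$, by any constant strictly smaller than $e^{-C/4}$ that holds for all $m$ large enough, together with a trivial constant lower bound for the finitely many small $m$).

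There is no real obstacle here: the argument is a clean conditional-probability computation followed by the standard $(1-c/m)^m\to e^{-c}$ limit. The only thing one must be careful about is to use that $\mu'<\mu$ so that the relevant intersection of intervals is always non-empty and the length formula $2\mu-|x_0|$ is valid, which is guaranteed by hypothesis.
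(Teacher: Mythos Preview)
Your proof is correct and follows the same overall strategy as the paper: factorise over the $m$ independent coordinates, reduce to the one-dimensional case, and then use the $(1-c/m)^m\to e^{-c}$ limit for the asymptotic bound. The one place you differ is in the one-dimensional computation: the paper computes the density of $E_0+E_1$ by convolution, splits into three cases, and integrates the resulting trapezoidal density over $[-\mu,\mu]$, whereas you condition on $e_{0,i}=x_0$, observe directly that the admissible set for $e_{1,i}$ has length $2\mu-|x_0|$, and average. Your route is shorter and avoids the case analysis; both arrive at $p=1-\mu'/(4\mu)$, and the rest is identical in spirit.
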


\begin{proof}
 
As $ ||e_0 + e_1||_\infty $ must be less than $\mu$, which means that each component of the sum vector must be less than $\mu$, and as each component of the 2 vectors $e_0$ and $e_1$ was independently sampled, then we can simplify our proof by considering that $e_0$ and $e_1$ are vectors in $\R$, essentially determining $P_{1, \mu,\mu'}$ and then, we can compute $P_{m, \mu,\mu'} = {P_{1, \mu,\mu'}}^m$. \\
Then, let us denote by $E_1$ the random variable sampled uniformly from $[-\mu, \mu]$, $E_0$ the random variable sampled uniformly from $[-\mu', \mu']$ and $E$ the random variable obtained as $E = E_1 + E_0$. 
Therefore, ${P_{1, \mu,\mu'}} = Pr[-\mu \leq E \leq \mu]$. \\ 
Now, we can compute the density function of $E$ using convolution: 
$$f_E(e) = \int_{-\infty}^{\infty} f_{E_1}(e_1) \cdot f_{E_0}(e - e_1) \, de_1$$ where $f_{E_1}$ and $f_{E_0}$ are the probability density functions of $E_1$ and $E_0$ ($f_{E_1}(e_1) = \frac{1}{2\mu}$, when $e_1 \in [-\mu, \mu]$ and $0$ elsewhere and $f_{E_0}(e_0) = \frac{1}{2\mu'}$, when $e_0 \in [-\mu', \mu']$ and $0$ elsewhere). \\
Then, we are only interested in the cases when both the values of $f_{E_1}(e_1)$ and 
$f_{E_0}(e - e_1)$ are non-zero and for this we need to consider 3 cases for $e$, given by the intervals: $e \in [-\mu - \mu', \mu' - \mu] \, \bigcup \, [\mu' - \mu, \mu - \mu'] \, \bigcup \, [\mu - \mu', \mu + \mu'] $.
Thus, we can derive:
\begin{equation}
\begin{aligned}[b] 
	& f_{E}(e) = 
     \begin{cases}
      \int_{-\mu}^{e + \mu'} \frac{1}{4\mu\mu'} \, de_1 \text{ , } e \in [-\mu - \mu', \mu' - \mu]\\
      \int_{e - \mu'}^{e + \mu'} \frac{1}{4\mu\mu'} \, de_1 \text{ , } e \in [\mu' - \mu, \mu - \mu'] \\
      \int_{e - \mu'}^{\mu} \frac{1}{4\mu\mu'} \, de_1 \text{ , } e \in [\mu - \mu', \mu + \mu'] \\
     \end{cases} \nonumber
\end{aligned} 
\end{equation}

Finally, we have that $Pr[-\mu \leq E \leq \mu] = \int_{-\mu}^{\mu} f_{E}(e) \, de = 
\int_{-\mu}^{-\mu + \mu'} f_{E}(e) \, de + \int_{-\mu + \mu'}^{\mu - \mu'} f_{E}(e) \, de + \int_{\mu - \mu'}^{\mu} f_{E}(e) \, de = 
\int_{-\mu}^{-\mu + \mu'} \frac{e + \mu' + \mu}{4\mu\mu'} \, de + \int_{-\mu + \mu'}^{\mu - \mu'} \frac{1}{2\mu} \, de + \int_{\mu - \mu'}^{\mu} \frac{\mu + \mu' - e}{4\mu\mu'} \, de = 
1 - \frac{\mu'}{4\mu}
$. \\

Consequently, we have $P_{m, \mu,\mu'} = (1-\frac{\mu'}{4\mu})^m$.\\

  Now, given that $\mu$ is a function of $m$, $\mu = \mu(m)$, we want to determine the values of $\mu'$, such that this probability (seen as a function in $m$) is at least a positive constant number.
  \begin{itemize}
  \item If $\displaystyle \lim_{m \rightarrow \infty} \frac{\mu'}{\mu} = 0$, then:
    $$ \lim_{m \rightarrow \infty} \left(1 - \frac{\mu'}{4\mu}\right)^m = \lim_{m \rightarrow \infty} \left(1 - \frac{\mu'}{4\mu}\right)^{\frac{4\mu}{\mu'} \frac{\mu'm}{4\mu}} = \left(\frac{1}{e}\right)^{\lim_{m \rightarrow \infty} \frac{\mu'm}{4\mu}}$$
    Now, what we require is that $\displaystyle {\lim_{m \rightarrow \infty} \frac{\mu'm}{4\mu}}$ = c $\geq 0$, where $c$ is a constant, as then, we have that the probability of success is at least a constant $\geq \left(\frac{1}{e}\right)^c$. 

  \item If $ \displaystyle \lim_{m \rightarrow \infty} \frac{\mu'}{\mu} > 0$ (and less than 1, as $0 < \mu' < \mu$), then: 
    $$ \lim_{m \rightarrow \infty} \left(1 - \frac{\mu'}{4\mu}\right)^m = 0 $$
  \end{itemize}

  Consequently, it is clear that in order to get a positive constant lower bound for the success probability, we must have:
  $$ \mu'= c \cdot \frac{4\mu}{m}, \, \, \, c \geq 0 $$
\end{proof}

Thus, in our case, if $e_1$ is sampled uniformly on a hypercube of length $2\mu$ and $e_0$ from a Gaussian with parameter $\alpha' q$, by replacing the actual values of $\mu = \alpha q \sqrt{m}$ and $\mu' := \alpha' q \sqrt{m}$, what we require is that:
$$ \alpha' = c \cdot \frac{4\alpha}{m}, \, \, \, c \geq 0 $$

\subsection{Collision resistance}

We start by the observation that for the choices of \autoref{def:REG2_fct}, no PPT adversary can infer the trapdoor information $(s_0,e_0)$, as determining $s_0$ from $k = (A, b_0)$ would be equivalent to solving $\LWE{}_{q, \bar{\Psi}_{\alpha' q}}$:
\begin{corollary}[One-wayness of the trapdoor {\cite[Theorem 1.1]{Regev}} ]\label{thm:onewayness_secret}
Under the $\SIVP{}_\gamma$ (with $\gamma = \poly[n]$) assumption, no PPT adversary can recover the trapdoor information $(s_0, e_0)$.
\end{corollary}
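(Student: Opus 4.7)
The plan is a direct reduction from trapdoor-recovery to search-$\LWE_{q,\bar{\Psi}_{\alpha'}}$. Suppose, for contradiction, that a QPT adversary $\cA$ recovers $(s_0,e_0)$ from $k=(A,b_0)$ with non-negligible probability. I will show that $\cA$ is essentially already a search-LWE solver, after which Regev's Theorem~1.1 (whose hypothesis $\alpha' q\geq 2\sqrt{n}$ is exactly condition~4 of \autoref{thm:req_param}) converts it into an efficient quantum algorithm for $\SIVP_\gamma$ with $\gamma=\tilde{O}(n/\alpha')=\poly[n]$ (using condition~5 of the same lemma), contradicting the assumption.

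The one step requiring care is to argue that the pair $(A,b_0)$ the adversary sees is indistinguishable from a genuine LWE instance. In the Micciancio--Peikert construction $A=(A',\,G-A'R)$ with $A'$ uniform in $\Z_q^{n\times\bar{m}}$ and $R$ a short Gaussian matrix; the leftover hash lemma, applied in the parameter regime of \autoref{thm:req_param} (where $\bar{m}=2n$ is well above the $n\log q$ threshold), shows that $A$ is statistically close to uniform over $\Z_q^{n\times m}$. Hence $b_0=s_0^t A+e_0^t$ with $s_0$ uniform in $\Z_q^n$ and $e_0$ drawn from the discrete Gaussian of parameter $\alpha' q$ is statistically indistinguishable from an honest search-LWE sample, so $\cA$'s success probability drops by at most a negligible amount when fed such a sample. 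Combining with the standard search-to-decision reduction and Regev's quantum worst-case-to-average-case theorem completes the argument.

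The main obstacle I anticipate is not algorithmic but one of careful bookkeeping: the trapdoor noise here is drawn with the \emph{smaller} width $\alpha' q$ (needed for the 2-regularity argument of \autoref{lemma:intersection_size}), not the width $\alpha q$ used for the input noise $e$. One must therefore verify separately that $\alpha' q\geq 2\sqrt{n}$ and $n/\alpha'=\poly[n]$ still hold for the parameters of \autoref{thm:exist_param}; both inequalities are explicit conditions of \autoref{thm:req_param}, but the distinction between the two noise widths is the spot where one could easily slip up when checking the hypotheses of Regev's reduction.
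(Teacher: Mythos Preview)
Your approach is essentially the paper's: reduce recovery of $(s_0,e_0)$ from $(A,b_0)$ to search-$\LWE_{q,\bar\Psi_{\alpha'}}$, then invoke Regev's Theorem~1.1 using conditions~4 and~5 of \autoref{thm:req_param}. The paper in fact treats this as a one-line observation (``determining $s_0$ from $k=(A,b_0)$ would be equivalent to solving $\LWE_{q,\bar\Psi_{\alpha' q}}$'') and states the result as a corollary without further proof.

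You are more careful than the paper on one point---arguing that the structured matrix $A=(A',G-A'R)$ is close to uniform so that $(A,b_0)$ really is an LWE instance---but your justification there slips: with the parameters of \autoref{thm:exist_param} one has $\bar m=2n$ while $n\log q=nk=n(5\lceil\log n\rceil+21)$, so $\bar m$ is \emph{not} above the $n\log q$ leftover-hash threshold you cite. The conclusion you want still holds, but it comes from the Micciancio--Peikert analysis itself (their regularity/smoothing argument for Gaussian $R$, which works already for $\bar m=O(n)$), not from the plain leftover hash lemma. Replace that sentence with a pointer to \cite{MP2012} and the argument is clean.
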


\begin{lemma}[Collision resistance]\label{thm:collision_resistant}

  The function $f$ defined in \autoref{def:REG2_fct} is collision resistant if the parameters are chosen accordingly to \autoref{thm:req_param} assuming that $\SIVP{}_\gamma$ is hard.
\end{lemma}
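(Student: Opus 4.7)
The plan is a direct reduction: assuming a QPT adversary $\cA$ that, given the public key $k=(A,b_0)$, returns a collision $(s_1,e_1,c_1)\neq(s_2,e_2,c_2)$ for $f$ with non-negligible probability, I will build a QPT attacker $\cB$ that extracts the secret trapdoor $(s_0,e_0)$ with essentially the same probability, contradicting \autoref{thm:onewayness_secret} (which in turn rests on $\SIVP{}_\gamma$). The reduction is completely black-box: $\cB$ forwards its own challenge $(A,b_0)$ to $\cA$, waits for a collision, and outputs the pair $(s_1-s_2 \bmod q,\; e_1-e_2)$ as its guess for $(s_0,e_0)$.

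The analysis is a case distinction on the control bits. First I would rule out $c_1=c_2$: there the equality collapses to $g(s_1,e_1)=g(s_2,e_2)$ with both errors in the nominal domain $L^m$, so $\|e_i\|_2 \leq \sqrt{m}\mu$; condition (6) of \autoref{thm:req_param} gives $\sqrt{m}\mu < r_{max}$, placing both inputs in the region on which the \cite{MP2012} inversion procedure is injective, and hence $(s_1,e_1)=(s_2,e_2)$, contradicting distinctness. The remaining case is therefore WLOG $c_1=0,\,c_2=1$. Using $b_0=g(s_0,e_0)$ and the additive (partial) homomorphism of $g$, the collision rewrites as
\begin{equation*}
g(s_1,e_1) \;=\; g(s_2,e_2)+g(s_0,e_0) \;=\; g(s_2+s_0 \bmod q,\; e_2+e_0),
\end{equation*}
where the second equality is legal provided the right-hand input still lies in a region of injectivity of $g$. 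This is exactly what the enlarged bound in condition (6) purchases: $\|e_2+e_0\|_\infty \leq \mu+\mu'$ and thus $\|e_2+e_0\|_2 \leq \sqrt{m}(\mu+\mu') < r_{max}$, so injectivity applies. Matching the two preimages then gives $s_1 \equiv s_2+s_0 \pmod q$ and $e_1 = e_2+e_0$, so $\cB$'s output is exactly $(s_0,e_0)$.

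The main obstacle, and the reason for the somewhat delicate shape of condition (6), is exactly this need for injectivity of the \cite{MP2012} function on the inflated input set containing vectors of the form $e+e_0$: the nominal guarantee only covers errors of $\ell_2$ norm at most $\sqrt{m}\mu$, whereas here we have to accommodate norm up to $\sqrt{m}(\mu+\mu')$. The additive slack $\mu'\sqrt{m}$ appearing in (6) was chosen precisely so that the MP2012 decoding radius $r_{max}$ still dominates after this inflation, cf.\ \autoref{fig:picture_balls_squares}. Once this is granted, the reduction preserves $\cA$'s success probability up to the (constant) probability that the collision lands in the nontrivial case $c_1\neq c_2$ (when $c_1=c_2$ cannot occur at all, as shown above), so any non-negligible collision-finding advantage against $f$ becomes a non-negligible trapdoor-recovery advantage, contradicting \autoref{thm:onewayness_secret}.
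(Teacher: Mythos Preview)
Your proposal is correct and follows essentially the same route as the paper: rule out $c_1=c_2$ by injectivity of $g$ on errors of norm $\leq \sqrt{m}\mu < r_{max}$, then in the case $c_1\neq c_2$ use the additive structure together with condition~(6) to bring $e_2+e_0$ inside the $r_{max}$-injectivity radius of \cite{MP2012}, forcing $(s_0,e_0)=(s_1-s_2,e_1-e_2)$ and contradicting \autoref{thm:onewayness_secret}. The only point to make explicit is that your bound $\|e_0\|_\infty \leq \mu'$ (hence $\|e_2+e_0\|_2 \leq \sqrt{m}(\mu+\mu')$) holds only with overwhelming probability over the Gaussian sampling of $e_0$, as the paper notes via \cite[Lemma~2.5]{Regev}; this costs only a negligible loss in the reduction.
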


\begin{proof}
  By contradiction, let us suppose that this function is not collision resistant. Then there exist two pairs $(s_1,e_1)$, $(s_2,e_2)$ such that $f(s_1,e_1,0) = y = f(s_2,e_2,1)$. Note that the last bits are necessary different since the two functions that fix the last bit, are injective when the error is smaller than $r_{max}$ (according to \cite[Theorem 5.4]{MP2012}). By the definition of $f$, $||e_1||_\infty \leq \mu$ and $||e_2||_\infty \leq \mu$, i.e both $e_1$ and $e_2$ has Euclidean norm smaller than $\sqrt{m} \mu$. Then, by definition, $y = f(s_2,e_2,1) = f(s_2,e_2,0) + f(s_0,e_0,0) = A(s_2+s_0)+(e_2+e_0)$. Now, we remark that with overwhelming probability (over the choice of the trapdoor), $||e_0||_2 \leq \mu' \sqrt{m}$ as stated in \cite[Lemma 2.5]{Regev}, so in this case, $||e_2+e_0||_2 \leq \sqrt{m}(\mu+\mu') \leq r_{max}$ (last assumption of \autoref{thm:req_param}). Then, according to \cite[Theorem 5.4]{MP2012}, there is exactly one element $(s,e)$ with $e$ of length smaller than $r_{max}$ such that $As + e = y$. Because $(s_1,e_1)$ is a solution, we then have that: $s_2+s_0 = s_1$ and $e_2+e_0 = e_1$, i.e. $e_0 = e_1-e_2$ and $s_0 = s_1 - s_2$. Hence, it is possible to deduce the trapdoor information $s_0$ and $e_0$ from the collision pair, which is impossible by \autoref{thm:onewayness_secret}.
\end{proof}

\subsection{One-wayness}
One could imagine that the one-wayness of the resulting function of \autoref{def:REG2_fct} is implied by the one-wayness of the function in \cite{MP2012} (as is the case in \autoref{lemma:onewayFromInj}). However, we need more care here, since in our construction the error term $e$ is not sampled from a Gaussian distribution with suitable parameters (unlike the error term $e_0$). \footnote{ While other ways to prove the one-wayness are possible, we give here one proof that uses the previous two lemmata.}

\begin{lemma}[Collision resistance to one-wayness] \label{thm:coll_resist_to_oneway}
  Let $f : A \rightarrow B \cup \bot$ (with $\bot \notin B)$, where $A$ is finite and can be efficiently sampled uniformly and let $C$ be the set of all $y \in B$ that admit 2 preimages. If the restriction of $f$ to the set $f^{-1}(B)$ is a collision resistant function that admits with non-negligible probability two preimages for any $y$ from its image and if $\frac{|f^{-1}(C)|}{|A|}$ is non-negligible, then $f$ restricted to the set $f^{-1}(C)$ is a one-way function.
\end{lemma}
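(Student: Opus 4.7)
My proof plan is by contraposition: I would assume the existence of a QPT inverter $\cA$ for $f$ restricted to $f^{-1}(C)$ and build a QPT collision-finder $\cB$ against $f$ restricted to $f^{-1}(B)$, contradicting the assumed collision resistance.

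Concretely, $\cB$ samples $x \leftarrow A$ uniformly (possible by hypothesis), computes $y := f(x)$, aborts if $y = \bot$, runs $x' \leftarrow \cA(y)$, and outputs the pair $(x, x')$ whenever $f(x') = y$ and $x' \neq x$. The analysis splits into two independent events whose probabilities I would lower bound separately. First, by hypothesis $\alpha := |f^{-1}(C)|/|A|$ is non-negligible, so with probability $\alpha$ the sampled $x$ lies in $f^{-1}(C)$, and in that case $y \in C$ is distributed according to the push-forward of the uniform distribution on $f^{-1}(C)$, which is exactly the input distribution for which $\cA$ is assumed to succeed. Hence, conditioning on $x \in f^{-1}(C)$, the adversary returns a valid preimage $x'$ of $y$ with non-negligible probability $\eps$.

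The key symmetry step is to argue that, conditioned on $y \in C$ and on $\cA$ returning a valid preimage, the probability that $x' = x$ is exactly $1/2$. This is because the view of $\cA$ depends only on $y$, and, conditioned on $f(x) = y$ with $y \in C$, $x$ is uniformly distributed over the two-element set $f^{-1}(y)$ (each of the two preimages is sampled from $A$ with equal probability under the uniform distribution on $A$, since the conditioning only fixes the image). Hence $x' \neq x$ with probability at least $1/2$, and in that case $(x, x')$ is a genuine collision of $f$ restricted to $f^{-1}(B)$ (both preimages are in $f^{-1}(C) \subseteq f^{-1}(B)$). Combining, $\cB$ outputs a collision with probability at least $\alpha \cdot \eps / 2$, which is non-negligible, contradicting the collision resistance hypothesis.

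The only delicate point, and what I expect to be the main obstacle to state cleanly, is the symmetry/indistinguishability argument justifying the factor $1/2$; one must be careful that $\cA$'s internal randomness is independent of the choice of which of the two preimages of $y$ the reduction sampled, so that from $\cA$'s perspective $x$ and the other preimage $\tilde{x}$ of $y$ are exchangeable. Once this is formalised (e.g.\ by noting that the joint distribution of $(x, y, x')$ is invariant under swapping $x$ with the other preimage of $y$), the bound on $\Pr[x' \neq x \mid y \in C,\ f(x') = y]$ follows immediately, and the rest is a routine probability calculation.
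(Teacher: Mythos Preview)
Your proposal is correct and follows essentially the same route as the paper's proof: sample $x\leftarrow A$, set $y:=f(x)$, invoke the assumed inverter on $y$, and use the fact that conditioned on $y\in C$ the sampled $x$ is uniform over the two preimages of $y$ to get the factor $1/2$. The paper's argument is stated more tersely (a single sentence for the symmetry step), whereas you spell out explicitly why the inverter's view is independent of which preimage was sampled; this is exactly the ``delicate point'' you flag, and your formalisation via exchangeability is the right way to make it precise.
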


\begin{proof}
By contradiction: suppose that $f$ is not one-way on $C$, i.e. with a non-negligible probability we can find a preimage of $y$ for $y$ uniformly sampled in $C$, and from this we can show how to find a collision. The idea is to sample an input $x \in A$, and then compute $y := f(x)$. Then, as $\frac{|f^{-1}(C)|}{|A|}$ is non-negligible, we know that with non-negligible probability this $y$ will have two preimages. Now, with non-negligible probability, this function will be easy to invert and one gets $x'$. Because we sample uniformly at the step before, we have the same probability to sample one image or the other, so with probability $1/2$, $x' \neq x$, therefore, we found a collision. 
\end{proof}

\begin{corollary}[One-wayness from {\autoref{thm:collision_resistant}} and {\autoref{thm:coll_resist_to_oneway}} ]\label{thm:onewayness}
  The function defined in \autoref{def:REG2_fct} is one-way for all $y$ that admit two preimages, under the $\SIVP{}_\gamma$ hardness assumption, when the parameters are chosen accordingly to \autoref{thm:req_param}.
\end{corollary}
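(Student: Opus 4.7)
The plan is to obtain this corollary as a direct composition of the two results cited in its name: \autoref{thm:collision_resistant} (collision resistance of $f$) and \autoref{thm:coll_resist_to_oneway} (collision resistance implies one-wayness on the two-preimage set, under mild extra conditions). No new cryptographic argument is needed; the content is entirely bookkeeping to check that the hypotheses of \autoref{thm:coll_resist_to_oneway} apply to the function of \autoref{def:REG2_fct}.

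First, I would invoke \autoref{thm:collision_resistant} to conclude that, under the assumed hardness of $\SIVP_\gamma$ for $\gamma = \poly[n]$ and with parameters chosen per \autoref{thm:req_param}, the function $f$ is collision resistant against QPT adversaries. Next, I would verify the remaining hypotheses of \autoref{thm:coll_resist_to_oneway}: the domain $A = \Z_q^n \times \{e \in \Z^m : \|e\|_\infty \leq \mu\} \times \{0,1\}$ is finite and admits efficient uniform sampling, since $n$, $q$, $m$, and $\mu$ are all polynomial in the security parameter. It remains to argue that $|f^{-1}(C)|/|A|$ is non-negligible, where $C$ is the set of images with exactly two preimages. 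This is where the $\delta$-2 regularity analysis enters: \autoref{lemma:intersection_size}, combined with the parameter choice $\mu' = O(\mu/m)$ prescribed by \autoref{thm:req_param}, gives $\Pr_{y \in \Im(f)}[\,|f^{-1}(y)| = 2\,] \geq \delta$ for a constant $\delta > 0$.

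To convert this into a statement about $|f^{-1}(C)|/|A|$, I would use the fact that each of the two branches $c = 0$ and $c = 1$ of $f$ is individually injective (since the underlying LWE function $g$ is injective on inputs with bounded error, by \cite{MP2012}). Hence every point of $C$ contributes exactly two elements to $f^{-1}(C)$, while every point in $\Im(f) \setminus C$ contributes exactly one, which gives $|f^{-1}(C)|/|A| = 2|C|/|A| = 2\delta/(1+\delta)$, a positive constant. With all hypotheses of \autoref{thm:coll_resist_to_oneway} verified, its conclusion is precisely the statement of the corollary: $f$ restricted to $f^{-1}(C)$ is one-way.

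The hard part, if there is one, is merely notational: one must take care that the probability in the $\delta$-2 regularity statement is taken over $y$ uniform in $\Im(f)$, while the conclusion needed is about $x$ uniform in $A$; the short arithmetic above bridges this gap. Otherwise the argument is immediate, since the heavy lifting — the collision resistance reduction to \LWE{} and the probabilistic analysis of the Gaussian-plus-hypercube convolution — has already been carried out in the preceding lemmas.
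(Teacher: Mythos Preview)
Your proposal is correct and takes essentially the same approach as the paper: compose \autoref{thm:collision_resistant} with \autoref{thm:coll_resist_to_oneway}. The paper's own proof is a single sentence stating that the hypotheses of \autoref{thm:coll_resist_to_oneway} are satisfied and that collision resistance follows from \autoref{thm:collision_resistant}; you add the (useful) explicit verification of these hypotheses, in particular the arithmetic $|f^{-1}(C)|/|A| = 2\delta/(1+\delta)$ that the paper leaves implicit.
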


\subsection{Trapdoor}

We want to prove that using the trapdoor information of the {\tt REG2} construction, which consists of $(s_0,e_0)$ and $t_k$, the trapdoor information of the {\tt LWE} function, we can efficiently derive the preimages of an output $b$ of {\tt REG2.Eval}.
Firstly, we notice that to find all the preimages, we can simply run {\tt LWE.Inv} on $b$ as well as on $b-b_0$ and if we succeed we take only the preimages that lie in the input domain, i.e. whose error part $e$ is bounded in infinity norm by $\mu$: $||e||_\infty \leq \mu$. Because the function is injective, these are all the possible preimages.
However, because we are interested only in the case when there are exactly two preimages, the function {\tt REG2.Inv} can also do the following: we first run {\tt LWE.Inv} on $b$ and obtain $(s_1,e_1)$. Then, the inversion is completed by returning $(s_1,e_1,0)$ and $(s_1-s_0,e_1-e_0,1)$, which are both valid preimages, if and only if the function has two preimages (see \autoref{thm:collision_resistant} for more details).

\section{Proof of \autoref{thm:exist_param}}\label{app:implementation2}

\begin{proof}
  Using the following explicit values for the parameters of the Micciancio and Peikert injective trapdoor function \cite{MP2012}, we want to prove that they fulfil all of the requirements of \autoref{thm:req_param}:

  \begin{align*}
    n &= \lambda\\
    k &= 5\ceil{\log(n)} + 21\\
    q &= 2^k\\
    \bar{m} &= 2n\\
    \omega &= nk\\
    m &= \bar{m} + \omega\\
    \mu &= \ceil{2mn \sqrt{2+k}}\\
    \mu' &= \mu/m\\
    B &= 2
  \end{align*}
  and $\alpha, \alpha', C$ are defined as in \autoref{thm:req_param}. Now, let us proof that these parameters satisfy all the requirements.

  \begin{itemize}
  \item The first three requirements are trivially satisfied.
  \item In the forth condition, the only difficulty is to show that $\alpha < 1$. By definition,
    \begin{alignat*}{6}
      \alpha &= \frac{m \mu}{\sqrt{m} m q}
      & &=\frac{\mu}{\sqrt{m} q}
      & &= \frac{\ceil{2mn \sqrt{2+k}}}{\sqrt{m} q}
      & &\leq \frac{4mn \sqrt{2+k}}{\sqrt{m} q}
      & &\leq \frac{8mn \sqrt{k}}{\sqrt{m} q}\\
      &\leq \frac{8\sqrt{m}nk}{q}
      & &\leq \frac{8\sqrt{2n + nk}nk}{2^{21} n^5}
      & &\leq \frac{8\sqrt{2nk}nk}{2^{21} n^5}
      & &\leq \frac{16(nk)^{3/2}}{2^{21} n^5}\\
      &\leq \frac{16(n(5(\log(n)+1)+21))^{3/2}}{2^{21} n^5} \span\span\span
      &\leq \frac{16(5 \times 21 n^2)^{3/2}}{2^{21} n^5}
      & &\leq \frac{16 \times 1076 n^{3}}{2^{21} n^5}
      & &< \frac{1}{n^{2}} \leq 1
    \end{alignat*}
  \item Now, let us show the fifth condition, i.e. $\alpha' q \geq 2 \sqrt{n}$. First we note that $\alpha' q := \frac{\mu}{\sqrt{m}m} \geq 2 \sqrt{n} \Leftrightarrow \mu \geq 2 \sqrt{n}m\sqrt{m} = 2mn\sqrt{2+k}$. Then, by defining $\mu = \ceil{2mn\sqrt{2+k}}$, the condition is satisfied.
  \item For the fifth condition, i.e. $\frac{n}{\alpha'}$ is $\poly[n]$, we just need to remark that $1/\alpha' = \frac{m^{3/2}q}{\mu} < m^{3/2} q $, and that both $m$ and $q$ are $\poly[n]$.
  \item Finally, to show that the last condition is satisfied, we note that:
    \begin{align}
      \sqrt{m} \mu &< \frac{q}{2 B \sqrt{\left(C \cdot (\alpha \cdot q) \cdot (\sqrt{2n} + \sqrt{kn} + \sqrt{n})\right)^2+1}} - \mu'\sqrt{m}\\
                   &= \frac{q}{4 \sqrt{\left(C \cdot \frac{\mu}{\sqrt{m}} \cdot (\sqrt{2n} + \sqrt{kn} + \sqrt{n})\right)^2+1}}-\frac{\mu}{\sqrt{m}}
    \end{align}
    if and only if
    \[A := 4 \left(\sqrt{m} + \frac{1}{\sqrt{m}}\right) \mu \sqrt{\left(C \cdot \frac{\mu}{\sqrt{m}} \cdot (\sqrt{2n} + \sqrt{kn} + \sqrt{n})\right)^2+1} \leq q\]
    Now, let us suppose that $k := u \ceil{\log(n)} + v$ with $u \leq 5$ and $v \geq 19$ and we need to find $u,v$ such that $A \leq 2^k$. Note that we will include $v$ in some constants and then find the good $v$ at the end.
    First, remark that:
    \begin{align}
      \sqrt{m} + \frac{1}{\sqrt{m}}
      &= \sqrt{m}(1+\frac{1}{m})\\
      &= \sqrt{m}(1+\frac{1}{n(2+k)})\\
      &\leq \sqrt{m}(1+\frac{1}{2+k})\\
      &\leq \sqrt{m}\underbrace{(1+\frac{1}{2+v})}_{\gamma_0} = \gamma_0 \sqrt{m}
    \end{align}
    So now,
    \begin{align*}
      A &\leq 4 C \gamma_0 \mu^2 \sqrt{kn} \sqrt{\left(1 + \sqrt{\frac{2}{k}} + \frac{1}{\sqrt{k}} \right)^2+\frac{1}{kn\left(C \cdot \frac{\mu}{\sqrt{m}}\right)^2 } }\\
        &= 4 C \gamma_0 \ceil{2mn\sqrt{2+k}}^2 \sqrt{kn} \sqrt{\left(1 + \sqrt{\frac{2}{k}} + \frac{1}{\sqrt{k}} \right)^2+\frac{1}{kn\left(C \cdot \frac{\ceil{2mn\sqrt{2+k}}}{\sqrt{m}}\right)^2 } }\\
        &\leq 4 C \gamma_0 \ceil{2mn\sqrt{2+k}}^2 \sqrt{kn} \underbrace{\sqrt{\left(1 + \sqrt{\frac{2}{v}} + \frac{1}{\sqrt{v}} \right)^2+\frac{1}{v\left(2C \sqrt{2+v}\right)^2 } }}_{\gamma_1}\\
        &\leq 4 C \gamma_0 \gamma_1 \left(2mn\sqrt{2+k}+1\right)^2 \sqrt{kn}\\
        &= 4 C \gamma_0 \gamma_1 \left(2n^2(2+k)^{3/2}+1\right)^2 \sqrt{kn}\\
        &= 16 C \gamma_0 \gamma_1 n^4(2+k)^{3} \left(1+\frac{1}{2n^2(2+k)^{3/2}}\right)^2 \sqrt{kn}\\
        &\leq 16 C \gamma_0 \gamma_1 n^4(2+k)^{3} \underbrace{\left(1+\frac{1}{2(2+v)^{3/2}}\right)^2}_{\gamma_2} \sqrt{kn}\\
        &\leq 16 C \gamma_0 \gamma_1 \gamma_2 n^4\left(k\left(1+\frac{2}{k}\right)\right)^{3} \sqrt{kn}\\
        &\leq 16 C \gamma_0 \gamma_1 \gamma_2 n^4 k^3\underbrace{\left(1+\frac{2}{v}\right)^{3}}_{\gamma_3} \sqrt{kn}\\
        &\leq 16  C \gamma_0 \gamma_1 \gamma_2 \gamma_3 n^{9/2} \left(u\ceil{\log(n)} + v\right)^{7/2}\\
        &= 16 C \gamma_0 \gamma_1 \gamma_2 \gamma_3 n^{9/2} v^{7/2} \left(1 + \frac{ u \ceil{\log(n)}}{v}\right)^{7/2}\\  
        &\leq 16 C \gamma_0 \gamma_1 \gamma_2 \gamma_3 n^{9/2} v^{7/2} \left(1 + \frac{ 5 \ceil{\log(n)}}{19}\right)^{7/2}\\
        &\leq 16 C \gamma_0 \gamma_1 \gamma_2 \gamma_3 n^{9/2} v^{7/2} 3n^{1/2}\\
        &\leq 48 C \gamma_0 \gamma_1 \gamma_2 \gamma_3 v^{7/2} n^{5}
    \end{align*}
    Finally, we observe that if $v = 21$ and $u = 5$, we have $A \leq 2^{v+u\ceil{\log(n)}} = 2^k$, which concludes the proof.
\end{itemize}
\end{proof}

\end{appendices}
\newpage%
\bibliographystyle{alpha}
\bibliography{biblio}

\end{document}